\documentclass[11pt]{article}
\usepackage{graphicx}
\usepackage[utf8]{inputenc}
\usepackage{fullpage}
\usepackage{amsmath,amssymb,amsfonts,amsthm,stmaryrd,dsfont}
\usepackage{thmtools,thm-restate}
\usepackage[colorlinks=true,linkcolor=blue,allcolors=blue]{hyperref}
\usepackage{todonotes}
\usepackage{physics}
\usepackage[ruled,vlined,linesnumbered]{algorithm2e}
\usepackage[capitalize,nameinlink]{cleveref}
\usepackage{comment}
\usepackage{tikz}
\usetikzlibrary{decorations.pathreplacing}

\newcommand{\opn}[1]{\operatorname{#1}}
\newcommand{\im}{\opn{im}}
\newcommand{\spn}{\opn{span}}

\newcommand{\poly}{\opn{poly}}
\newcommand{\1}{\mathds{1}}
\newcommand{\Tan}{\opn{T}}
\newcommand{\col}{\opn{col}}
\newcommand{\edg}{\opn{ed}}

\newcommand{\cA}{\mathcal{A}}\newcommand{\cB}{\mathcal{B}}
\newcommand{\cC}{\mathcal{C}}
\newcommand{\cF}{\mathcal{F}}

\newcommand{\cS}{\mathcal{S}}

\newcommand{\bF}{\mathbb{F}}

\newcommand{\bN}{\mathbb{N}}

\newcommand{\bZ}{\mathbb{Z}}

\newtheorem{theorem}{Theorem}[section]
\newtheorem{lemma}[theorem]{Lemma}
\newtheorem{claim}[theorem]{Claim}
\newtheorem{definition}[theorem]{Definition}
\newtheorem{corollary}[theorem]{Corollary}

\newtheorem{fact}[theorem]{Fact}
\theoremstyle{remark} 
\newtheorem{remark}[theorem]{Remark}

\title{Improved Transversal Non-Clifford Gates from Cup Products}
\author{
  Louis Golowich \\
  UC Berkeley \\
  \href{mailto:lgolowich@berkeley.edu}{\texttt{lgolowich@berkeley.edu}}
  \and
  Itzhak Tamo \\
  Tel Aviv University \\
  \href{mailto:tamo@tauex.tau.ac.il}{\texttt{tamo@tauex.tau.ac.il}}
  \and
  Guanyu Zhu \\
  IBM Quantum \\
  \href{mailto:Guanyu.Zhu@ibm.com}{\texttt{Guanyu.Zhu@ibm.com}}
}

\begin{document}

\maketitle

\begin{abstract}
  It is a major challenge in quantum fault-tolerance to obtain low-overhead protocols for performing non-Clifford gates. In this vein, we construct quantum codes with low-weight stabilizers that support transversal (i.e.~low-depth) implementations of the non-Clifford $C^{r-1}Z$ gate, for every constant $r\geq 3$. In particular, we obtain length-$n$ quantum LDPC codes (with constant-weight stabilizers) of polynomial distance $d\geq n^{(1-\epsilon)/r}$ supporting transversal $C^{r-1}Z$ gates on a close-to-linear number
  $k\geq n^{1-\epsilon}$ of disjoint tuples of logical qubits, for arbitrarily small $\epsilon>0$. Our construction is the first with constant-weight stabilizers that obtains $dk\gg n$, and as a consequence achieves arbitrarily small magic state overhead exponent $\gamma=\log(n/k)/\log(d)>0$. Comparable prior constructions instead required at least polylogarithmic stabilizer weight. We also show how to obtain linearly many $k=\Omega(n)$ logical $C^{r-1}Z$ gates, though with stabilizer weight and physical circuit depth~$n^\epsilon$. We show that our transversal gates also support addressing (i.e.~targeting) of specific logical qubits.

  To obtain our codes, we develop a general transformation based on cup products that maps classical codes satisfying a multiplication property to quantum codes with transversal $C^{r-1}Z$. We apply this transformation to a new family of classical Tanner codes that we construct from punctured tensor products of algebraic codes.
\end{abstract}

\newpage

\tableofcontents

\newpage



\section{Introduction}
\label{sec:intro}
A major barrier to realizing large-scale quantum computation lies in our ability to efficiently perform quantum gates in a fault-tolerant manner. Indeed, in the absence of fault-tolerance, a computation would quickly become corrupted by the noise inherent to quantum devices. A key bottleneck limiting the efficiency of many fault-tolerance schemes is the implementation of \emph{non-Clifford gates}. Such gates are required for universal quantum computation, but require specialized codes or distillation schemes to perform fault-tolerantly, which incur a large space-time overhead. Given the limited qubit counts and/or gate speeds of near-term quantum devices, reducing this overhead of non-Clifford gates could allow significantly larger computations to be executed in practice. Efficient non-Clifford gates similarly remain an important theoretical question, with the potential for significant asymptotic improvements that could reduce the space-time overhead of large-scale fault-tolerant computation.

One of the most powerful tools for performing fault-tolerant non-Clifford gates is a quantum code supporting \emph{transversal} non-Clifford gates, meaning that there exists a low-depth\footnote{Sometimes the term ``transversal'' is used more restrictively to refer to depth-$1$ physical circuits, which are also sometimes required to apply the same gate to every qudit. In this work we use the looser definition allowing low-depth physical circuits, where we will always specify the precise circuit depth (or as a proxy, the number of gates touching each qudit; see \Cref{fact:sparsitytodepth}).} physical circuit to implement logical non-Clifford gates on the underlying encoded message. Indeed, low circuit depth implies an efficient implementation, and also ensures fault-tolerance by limiting the number of errors that can accumulate during the circuit.

Of particular interest are quantum codes with transversal non-Clifford gates that have low-weight stabilizers, so that the error-correction circuit that measures the stabilizers and applies an appropriate correction can also be performed in a low-depth or fault-tolerant manner. Indeed, quantum LDPC (qLDPC) codes, defined to have constant-weight stabilizers, underlie many proposed fault-tolerance schemes for both theoretical and practical applications (e.g.~\cite{gottesman_fault-tolerant_2014,nguyen_quantum_2025,xu_constant-overhead_2023,yoder_tour_2025,cain_shors_2026}). Good qLDPC codes supporting transversal non-Clifford gates could significantly improve the efficiency of such schemes. Yet such codes have proven difficult to construct. A line of works \cite{bombin_exact_2007,bombin_topological_2007,bombin_gauge_2015,zhu_non-clifford_2023,lin_transversal_2024,golowich_quantum_2025-1,breuckmann_cups_2024,scruby_quantum_2024,zhu_topological_2025,li_poincare_2025,li_theory_2026,li_transversal_2026-1} has provided various constructions, though substantial room for improvement remains.

Following such prior works, we focus on the non-Clifford $C^{r-1}Z$ gate for $r\geq 3$ (see \Cref{def:CrZgate}). Of particular interest is the $CCZ$ gate due to its close relationship\footnote{Toffoli is equivalent to $CCZ$ up to conjugation by Hadamard gates on the target qudit.} to the Toffoli gate, which is a common building block in many logical circuits of interest. For a code supporting transversal $C^{r-1}Z$, we define the \emph{logical yield} $s$ to be the number of disjoint logical $C^{r-1}Z$ gates induced by the physical circuit. Specifically, we allow a chosen set of logical qudits to be set to $\ket{0}$, but then require the remaining logical qudits to undergo $C^{r-1}Z$ gates on $s$ disjoint $r$-tuples of qudits, with no additional gates acting between the different tuples. Intuitively, the logical yield measures how many logical $C^{r-1}Z$ gates we can ``extract'' from the physical transversal gate. Note that the logical yield is also a lower bound on the code's dimension.

Previously, all known qLDPC codes (with constant stabilizer weight) that support constant-depth transversal $CCZ$ gates had length $n$, distance $d$, and logical yield $s$ satisfying
\begin{equation}
  \label{eq:repbound}
  ds \leq O(n).
\end{equation}
This bound was saturated up to polylogarithmic factors for every $d\leq n/\poly\log(n)$ by \cite{li_transversal_2026-1}, and up to constant factors for every $d\leq\alpha\cdot\sqrt{n}$ in \cite{zhu_topological_2025}, where $\alpha>0$ is a sufficiently small constant.\footnote{In particular, \cite{li_transversal_2026-1} provide qLDPC codes of length $n$ and distance $n/\poly\log(n)$ supporting transversal $CCZ$ gates with logical yield $1$. Taking $s$ copies of a length-$n/s$ code from this family therefore provides codes of length $n$, distance $n/(\poly\log(n)\cdot s)$, and logical yield $s$.} However, this tradeoff in \Cref{eq:repbound} prohibits fault-tolerant computation with a high rate of logical non-Clifford gates and good error resilience. At an intuitive level, \Cref{eq:repbound} matches the parameters of the classical repetition code ($d=n$, $s=1$), which is generally suboptimal. Indeed, to ensure exponentially low logical error rates, one typically requires polynomially large qLDPC code distance $d\geq n^{\Omega(1)}$ (see e.g.~\cite{kovalev_fault_2013,gottesman_fault-tolerant_2014}), which implies that the logical yield $s\leq n^{1-\Omega(1)}$ in \Cref{eq:repbound} is a polynomially small fraction of the block length.

Our main result is the first known construction of qLDPC codes (with constant stabilizer weight) supporting constant-depth transversal $CCZ$ gates that surpass the bound in \Cref{eq:repbound}:

\begin{theorem}[Informal statement of \Cref{cor:instconst}]
  \label{thm:instconstinf}
  For every integer $r\geq 3$, every $0<\epsilon<1$, and every prime power $q$, there exist quantum codes of length $n$, distance
  \begin{equation*}
    d \geq n^{(1-\epsilon)/r},
  \end{equation*}
  and alphabet size $q$ with constant stabilizer weight, which support depth-$1$ transversal $C^{r-1}Z$ gates of logical yield
  \begin{equation*}
    s \geq n^{1-\epsilon}.
  \end{equation*}
  These codes also have a $\poly(n)$-time decoding algorithm that corrects $n^{(1-\epsilon)/r}$ adversarial errors.
\end{theorem}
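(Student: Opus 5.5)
The plan is to instantiate the general cup-product transformation described in the abstract: a map taking $r$ classical codes whose codewords satisfy a multiplication property to a quantum code with a transversal $C^{r-1}Z$. Concretely, we take classical Tanner codes $C_1,\dots,C_r$ built on a common $r$-dimensional complex, namely an $r$-fold product of expanding or cycle-like structures whose local codes are punctured tensor products of Reed--Solomon or Reed--Muller codes over $\bF_q$. From their chain complexes we form the $r$-fold product complex. The quantum code lives on the middle-degree cells, and the physical $C^{r-1}Z$ gate is the depth-$1$ circuit given by the cup-product trilinear form $\sum_\sigma x_1(\sigma)\cdots x_r(\sigma)$. The multiplication property is that the coordinatewise product of codewords from the $r$ local codes lies in a code whose dual controls the sum, for example $\sum_i \prod_j c_j(i)=0$ whenever the $c_j$ lie in the corresponding duals. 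This guarantees that the form is invariant under stabilizers, so the gate is logical.

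The key steps are as follows.
\begin{enumerate}
\item Verify that the local algebraic codes satisfy the multiplication property. For Reed--Solomon codes this follows because the degrees add to less than $q-1$, and puncturing the tensor product keeps the relevant evaluation points.
\item Lift the local property to the global Tanner code, so that the cup product descends to cohomology and induces a well-defined logical multilinear form.
\item Compute parameters. Each of the $r$ factors has length about $m$, so $n\approx m^r$. The distance is inherited from the classical distance of one factor, giving $d\gtrsim m^{1-\epsilon'}\ge n^{(1-\epsilon)/r}$. The dimension is roughly $\prod_j k_j$ with $k_j\ge m^{1-\epsilon'}$, giving $n^{1-\epsilon}$.
\item Stabilizer weight stays constant because the local code size depends only on $\epsilon$ and $q$, not on $n$.
\end{enumerate}

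The main obstacle is the logical yield. The induced logical form must contain $s\ge n^{1-\epsilon}$ \emph{disjoint} $C^{r-1}Z$ tuples, which requires finding cohomology bases that are dual under the cup product: a product basis $\{a^{(1)}_{i_1}\otimes\cdots\}$ whose form is diagonal, $\prod_j\delta_{i_1 i_j}$. I would first establish a classical statement: there exist codewords $u^{(j)}_i$ of the factor codes and information sets such that $\sum_\sigma u^{(1)}_{i_1}(\sigma)\cdots u^{(r)}_{i_r}(\sigma)=\delta_{i_1=\dots=i_r}$ for $n^{1-\epsilon}$ indices, with all other logical qudits set to $\ket 0$. For the algebraic local codes this is a Lagrange-interpolation or dual-basis argument. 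The hard part is making it survive the Tanner and puncturing structure globally, and I would attempt this via a Gaussian-elimination lemma on multiplication-friendly pairs.

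Decoding would follow by running the classical Tanner-code decoders on each factor, using expansion of the base graphs or a recursive tensor decoder to correct $d$ errors in $\poly(n)$ time.
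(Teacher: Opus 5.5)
Your high-level architecture matches the paper's: classical codes with a multiplication property, a tensor product of their cochain complexes, and a cup-product form. The gap is in the classical ingredient that makes \Cref{thm:instconstinf} work.

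\textbf{The missing classical construction.} In step~3 you posit length-$m$ Tanner codes with constant-size local codes, the multiplication property, and $k_j,d_j\ge m^{1-\epsilon'}$. In step~4 you conclude constant stabilizer weight. Producing such codes is the crux of the problem. Local codes that satisfy the multiplication property with good distance have rate at most $1/2$, so constraint counting on an expander, or on any other constant-degree graph, gives no lower bound on the global dimension. Your proposal offers no substitute.

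The paper's answer is \Cref{thm:punctens}:
\begin{itemize}
\item Fix a constant-length AG code $C$ of length $2n$ with $k\ge n/4r$ and $d(C),d(C^{*r-1})\ge 3n/2$.
\item Take the $t$-fold tensor power $C^{\otimes t}$ with $t\to\infty$, and puncture it to $E=\bigsqcup_{i\in[t-1]}[n+1:2n]^{[1:i]}\times[1:n]^{[i+1:t]}$.
\item This punctured code is exactly a Tanner code on a bipartite graph of degree $\le 2n=O(1)$. Its vertices are axis-parallel lines, and its local codes are $C$, $C|_{[1:n]}$ or $C|_{[n+1:2n]}$.
\item Because $d>n$, the restriction $C^{\otimes t}\to\Tan(\Gamma,\cF)$ is an isomorphism (proved by induction on $t$). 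This gives $K=k^t$ and $D\ge(d-n)^t$.
\item \Cref{eq:hymult} holds because $\cF^{(r)}=\cF^{*r-1}$ is the same construction applied to $C^{*r-1}$.
\end{itemize}
So it is the global code, not the local codes, that is a punctured high-dimensional tensor product. The same structure also gives the decoder. No expansion is used: the punctured code splits into $t-1$ unpunctured tensor codes on the $E_i$, each decoded by the standard product decoder, and these are fed into the product-complex decoder of \Cref{lem:qproddec}.

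\textbf{Other steps that fail as written.}
\begin{itemize}
\item \emph{Level of the code.} An $r$-fold cup product of degree-$j$ cochains lands in degree $rj$ of the $r$-dimensional product, so one needs $j\le 1$. The paper places the codes at level~$1$ of the cyclically shifted products $\cA^{(i)}$, not at a middle level.
\item \emph{Form of the gate.} The form is $\langle\1_{E(\Gamma)}^{\otimes r},a^{(1)}\smile\cdots\smile a^{(r)}\rangle$. Its invariance comes from the Leibniz rule and from $\1_{E(\Gamma)}^{\otimes r}$ being an $r$-cycle. It is not the pointwise sum $\sum_\sigma x_1(\sigma)\cdots x_r(\sigma)$: it couples cells adjacent in $\Gamma^{\times r}$ and has sparsity up to $(4\Delta)^{r^2}$.
\item \emph{Depth~$1$ and arbitrary $q$.} Neither is obtained natively. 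The construction lives over $q=p^u$ with $u=O_r(1)$, since AG or RS codes with the needed rate and distance require a large field. The stated theorem follows only by applying the alphabet- and sparsity-reduction lemmas \cite[Lemmas~3.44, 3.45]{golowich_quantum_2025-1} to \Cref{cor:instconst}.
\end{itemize}

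\textbf{The logical yield is the easy part.} You single it out as the main obstacle, but once the above is in place it is nearly immediate:
\begin{itemize}
\item For an extendable set $M$, pick $c_e\in\Tan(\Gamma,\cF^{(j)})$ with $c_e|_M=\1_e$ in the level-$0$ factors.
\item Use the indicator $\1_e$ itself as the level-$1$ cocycle in the ${\cF^{(r)}}^\perp$ factor.
\item Their cup product is $\1_e$ if all the edges agree and $0$ otherwise.
\item Tensoring over the $r$ factors (all signs are $+1$) diagonalizes $\tilde f$ on $M^r$, so the subrank is $\ge|M|^r$.
\end{itemize}
No Gaussian-elimination lemma on multiplication-friendly pairs is needed.
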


Formally, \Cref{thm:instconstinf} follows immediately by applying the alphabet reduction in \cite[Lemma~3.44]{golowich_quantum_2025-1} and the transversal-gate circuit-depth reduction in \cite[Lemma~3.45]{golowich_quantum_2025-1} to the codes in \Cref{cor:instconst} (see \Cref{sec:inst}).\footnote{The statement of \Cref{cor:instconst} is analogous to that of \Cref{thm:instconstinf}, though for alphabet sizes $q=p^u$ where $p$ is an arbitrary prime and $u$ is a sufficiently large constant. The alphabet and circuit-depth reduction statements in \cite[Lemma~3.44, Lemma~3.45]{golowich_quantum_2025-1} do not explicitly consider decoders, though their techniques for showing distance preservation do also imply that efficient decodability is preserved.}

\Cref{thm:instconstinf} is most similar to the construction of \cite{golowich_quantum_2025-1}, which provided the only previously known construction of quantum codes supporting transversal non-Clifford gates that beat the bound \Cref{eq:repbound} with subpolynomial (but still superconstant) stabilizer weight. Specifically, \cite{golowich_quantum_2025-1} showed a similar result as in \Cref{thm:instconstinf}, though with larger stabilizer weight $\poly\log(n)$, slightly larger distance $d\geq n^{1/r}/\poly\log(n)$, and without an efficient decoding algorithm.

While our general approach is related to that of \cite{golowich_quantum_2025-1}, our construction and analysis are more modular and generalizable. Specifically, we first prove a general mapping from classical codes with appropriate properties to quantum codes with transversal $C^{r-1}Z$ gates (\Cref{thm:highyield}). We then prove \Cref{thm:instconstinf} by instantiating this general mapping with a family of classical codes based on high-dimensional tensor products that we construct (\Cref{thm:punctens}). As one example of the flexibility of our approach, we show how to modify the instantiation in \Cref{thm:instconstinf} to instead recover the precise parameters from \cite{golowich_quantum_2025-1} (see \Cref{cor:instlog}).

A disadvantage of both of these instantiations is that the logical yield bound $n^{1-\epsilon}$ remains sublinear in the block length $n$. We present a third instantiation that addresses this issue, at the cost of having larger stabilizer weight and transversal circuit depth:

\begin{theorem}[Informal statement of \Cref{cor:instpoly}]
  \label{thm:instpolyinf}
  For every integer $r\geq 3$, every $0<\epsilon<1$, and every prime power $q$, there exist quantum codes of length $n\rightarrow\infty$, distance
  \begin{equation*}
    d \geq \Omega(n^{1/r}),
  \end{equation*}
  and alphabet size $q$ with stabilizer weight $\leq n^\epsilon$, which support depth $\leq n^\epsilon$ transversal $C^{r-1}Z$ gates of logical yield
  \begin{equation*}
    s \geq \Omega(n).
  \end{equation*}
  These codes also have a $\poly(n)$-time decoding algorithm that corrects $\Omega(n^{1/r})$ adversarial errors.
\end{theorem}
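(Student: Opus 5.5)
The plan is to derive \Cref{thm:instpolyinf} from the general cup-product transformation (\Cref{thm:highyield}), but feed it a different classical family than in \Cref{thm:instconstinf}. The earlier instantiation uses punctured high-dimensional tensor products of algebraic codes with constant-size local codes, and the puncturing/Tanner structure costs a factor $n^{\epsilon}$ in yield. Here I would instead use a \emph{constant} number of tensor factors whose local codes have length $n^{\Theta(\epsilon)}$. Concretely, take $D=\Theta(1/\epsilon)$ tensor coordinates, each carrying a Reed--Solomon (or multiplication-friendly algebraic-geometry) code of length $m=n^{\Theta(\epsilon)}$ and constant rate.

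The multiplication property required by \Cref{thm:highyield} is that the $r$-wise componentwise product of the input classical codes stays inside a code of good distance. For Reed--Solomon-type local codes of degree $<m/r$ this holds, since $r$-fold products have degree $<m$. Tensoring preserves it coordinatewise, so the overall product code still has relative distance $\Omega(1)$ per coordinate.

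Next I would track parameters through \Cref{thm:highyield}:
\begin{itemize}
\item The stabilizer weight and transversal circuit depth are governed by the local code length and degree. Both become $\poly(m)=n^{\epsilon}$ after rescaling $\epsilon$.
\item The quantum distance comes from the $r$-fold structure of the cup-product complex, giving $d\geq\Omega(n^{1/r})$. There is no $n^{-\epsilon}$ loss, because the distances of constant-rate algebraic codes are linear in their lengths with no puncturing loss.
\item The logical yield is linear because every constituent code has constant rate. The number of disjoint $r$-tuples in \Cref{thm:highyield} is proportional to the dimension of the classical input, which is $\Omega(n)$.
\end{itemize}
Finally, I would apply the alphabet reduction \cite[Lemma~3.44]{golowich_quantum_2025-1} and the depth reduction \cite[Lemma~3.45]{golowich_quantum_2025-1} to reach an arbitrary prime power $q$. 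Both lose only constant factors in $n$, $d$ and $s$, and multiply weight and depth by constants.

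For decoding, I would run a decoder for the tensor/algebraic classical codes, namely iterated Reed--Solomon unique decoding along each axis. I would then lift it through the cup-product construction as in the decoder argument for \Cref{cor:instconst}, which corrects a constant fraction of $d$.

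The main obstacle I expect is verifying that the hypotheses of \Cref{thm:highyield} are met with \emph{constant}-factor losses, in particular:
\begin{itemize}
\item the local testability or robustness-type condition needed to get distance $\Omega(n^{1/r})$ rather than $n^{1/r}/\poly\log n$;
\item keeping the rate constant after the multiplication constraint forces degree $<m/r$.
\end{itemize}
I would handle this by choosing the local codes' rate about $1/r$ and using the standard robustness of tensor products of Reed--Solomon codes with a constant number of factors. Here the robustness constant depends only on $D$ and $r$, hence on $\epsilon$, which is allowed.
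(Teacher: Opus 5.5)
\textbf{There is a genuine gap.} You never construct the object that \Cref{thm:highyield} takes as input. That theorem needs a \emph{graph} $\Gamma$ with systems of local codes $\cF^{(1)},\dots,\cF^{(r)}$ satisfying the \emph{vertex-wise} multiplication property \Cref{eq:hymult}, together with distance and decoding bounds for the Tanner codes $\Tan(\Gamma,\cF^{(j)})$ and an extendable set $M$. A global multiplication property of a tensor code does not suffice.

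In the notation of \Cref{cor:instpoly} (local length $2n$, quantum length $N$), the locality $r\Delta^2$ and sparsity $(4\Delta)^{r^2}$ are polynomial in the degree, and $\Delta=\Theta(N^{1/(rt)})$. Reaching $N^{\epsilon}$ therefore forces a large constant number $t$ of tensor factors, in particular $t\geq 3$. A $t$-fold tensor code is then not a Tanner code on a graph, because each coordinate lies on $t$ axis-parallel columns, not two. Nor can you take an arbitrary graph with algebraic local codes: these have rate $<1/2$, so constraint counting gives no rate bound, and the constant global rate must come from an exact identification with $\bigotimes_i C_i$.

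Supplying this identification is the content of \Cref{thm:punctens}. Restrict $\bigotimes_i C_i$ to the staircase $E=\bigsqcup_{i\in[t-1]}E_i$, where $E_i=[n+1:2n]^{[1:i]}\times[1:n]^{[i+1:t]}$. Each point of $E$ lies on exactly two chosen columns, one from $V_i$ and one from $V_{i+1}$. The condition $d_i>n$ then shows that restriction to $E$ is an isomorphism onto $\Tan(\Gamma,\cF)$, with distance $\geq\prod_i(d_i-n)$. Since every local code is a puncturing of $C$ to a set depending only on $v$, \Cref{eq:hymult} with $\cF^{(r)}=\cF^{*r-1}$ is inherited from $C^{*r-1}$.

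Your diagnosis that puncturing causes the $n^{\epsilon}$ loss in \Cref{thm:instconstinf} is backwards. For constant $t$, puncturing loses only constant factors, given $d,d'\geq 3n/2$ as in \Cref{lem:AG}. The loss in \Cref{cor:instconst} comes from raising the constant rate and relative-distance factors of constant-size local codes to the power $t=\Theta(\log N)$. The paper's proof of \Cref{cor:instpoly} uses the same classical family as for \Cref{thm:instconstinf}, namely \Cref{thm:punctens} with AG codes, but with $t=\lceil 2r^2/\epsilon\rceil$ fixed and $n\to\infty$.

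The obstacle you single out is not one: \Cref{thm:highyield} needs no robustness or local testability. \Cref{claim:hydis} directly gives $d_X\geq d^{(r)}$ and $d_Z\geq(2/\Delta)^{r-1}d^{(1)}\cdots d^{(r-1)}$. With $D_0,D_0'=\Omega(n^t)$ and $\Delta=2n$, this yields $\min\{(D_0/n)^{r-1},D_0'\}=\Omega(n^t)=\Omega(N^{1/r})$, since $(t-1)(r-1)\geq t$. The yield is $|M|^r=K_0^r=\Omega(N)$; it is not the classical dimension itself, which is only $\Theta(N^{1/r})$.

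Two of your final steps would also fail as written:
\begin{itemize}
\item Reed--Solomon local codes need $q\geq 2n$, which is polynomial in $N$, so $u=\Theta(\log N)$. The $u^{O_r(1)}$ loss of \cite[Lemma~3.44]{golowich_quantum_2025-1} then destroys both the $\Omega(N)$ yield and the $\Omega(N^{1/r})$ distance. You need the constant-alphabet AG codes of \Cref{lem:AG}, with $q=p^{O_r(1)}$.
\item \cite[Lemma~3.45]{golowich_quantum_2025-1} costs $w^{O(1)}=N^{O(\epsilon)}$ when the sparsity is $w=N^{\epsilon}$, so it must not be applied. It is unnecessary anyway, since depth $rw$ from \Cref{fact:sparsitytodepth} is within the allowed $N^{\epsilon}$ after rescaling $\epsilon$.
\end{itemize}
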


Similarly as for \Cref{thm:instconstinf}, \Cref{thm:instpolyinf} follows immediately by applying the alphabet reduction in \cite[Lemma~3.44]{golowich_quantum_2025-1} to the codes in \Cref{cor:instpoly} (see \Cref{sec:inst}).

While our definition of a transversal $C^{r-1}Z$ gate of logical yield $s$ applies $C^{r-1}Z$ to a fixed set of $s$ disjoint tuples of logical qudits, it may be desirable to only apply $C^{r-1}Z$ to a chosen (``addressed'') subset of these $s$ tuples. Such \emph{addressable} $C^{r-1}Z$ gates, which have previously been studied on non-LDPC codes \cite{he_quantum_2025,he_asymptotically_2025}, may be particularly useful on LDPC codes. Indeed, existing techniques for addressable/targeted fault-tolerant logical gates on LDPC codes such as code surgery (e.g.~\cite{cohen_low-overhead_2022,cross_improved_2024,cowtan_parallel_2026}) or code switching (e.g.~\cite{golowich_constant-overhead_2025,xu_batched_2025}) typically require larger space-time overheads, especially when performing many logical gates in parallel. In contrast, in \Cref{sec:addressable} we extend \Cref{thm:instconstinf,thm:instpolyinf} to allow for addressable $C^{r-1}Z$ gates, while preserving the same code parameter and circuit depth bounds.

Many leading proposals for fault-tolerance schemes in both theory and practice perform non-Clifford gates by preparing \emph{magic states}, such as $CCZ\ket{+}^{\otimes 3}$, which are then consumed to implement a non-Clifford gate via gate teleportation (e.g.~\cite{nguyen_quantum_2025,gidney_how_2025,cain_shors_2026}). Such magic states are often distilled from noisier magic states using a code supporting a transversal non-Clifford gate. Higher code distance and logical yield allow a given number of good magic states to be distilled from fewer noisy states.

In particular, for a code of length $n$ and distance $d$ supporting a transversal non-Clifford gate with logical yield $s$, \cite{bravyi_magic-state_2012} showed how to distill $m$ magic states\footnote{While \cite{bravyi_magic-state_2012} presented their protocol for the $T$ gate, it can be extended to alternative non-Clifford gates such as $CCZ$, as observed by \cite{paetznick_universal_2013}.} with infidelity $<\epsilon$ from $m\cdot\log^\gamma(1/\epsilon)$ states with some constant infidelity, where the \emph{magic state overhead exponent}
\begin{equation}
  \label{eq:gamma}
  \gamma := \log(n/s)/\log(d).
\end{equation}
measures the scheme's overall performance.
While lower-overhead distillation schemes have been constructed in certain settings (e.g.~\cite{wills_constant-overhead_2024}), the value of $\gamma$ in \Cref{eq:gamma} remains a good measure of the resource overhead of distillation, especially when restricting attention to the leading-order term $\lim_{n\rightarrow\infty}\gamma$.

\Cref{thm:instconstinf} provides the first known codes with constant stabilizer weight that obtain arbitrarily small values of $\gamma>0$, improving upon the polylogarithmic stabilizer weight of \cite{golowich_quantum_2025-1}. Indeed, the codes of \cite{zhu_topological_2025, li_transversal_2026-1} satisfying \Cref{eq:repbound} must have $\lim_{n\rightarrow\infty}\gamma\geq 1$. Previously, \cite{wills_constant-overhead_2024,golowich_asymptotically_2025,nguyen_good_2025} constructed codes with $\gamma\rightarrow 0$, albeit with linear-weight stabilizers. \cite{golowich_near-asymptotically-good_2025} also obtained $\gamma\rightarrow 0$ for codes with check weight\footnote{Strictly speaking, the codes of \cite{golowich_near-asymptotically-good_2025} have stabilizer weight $\Theta(n)$, but they are subsystem codes with gauge operators of weight $O(\sqrt{n})$.} $O(\sqrt{n})$ (or $O(n^{1/3})$ over exponentially large alphabets). Magic state distillation is still possible with such high-weight stabilizers, by concatenating with a code supporting fault-tolerant Clifford gates. However, low-weight stabilizers allow for more efficient stabilizer measurement circuits, and hence more efficient encoding and decoding, which translates to lower space-time overhead for magic state distillation.

It in particular remains an open question to construct a magic state distillation procedure with constant space-time overhead. While there exist schemes with a constant ratio of noisy input to low-error output states \cite{wills_constant-overhead_2024}, they still require polynomial space-time overhead to run encoding/decoding circuits. One approach to obtaining constant space-time overhead would be to remove the $n^\epsilon$ loss factors in \Cref{thm:instconstinf} or in \Cref{thm:instpolyinf}, to obtain codes with stabilizer weight $O(1)$, polynomial distance $n^{\Omega(1)}$, and logical yield $\Omega(n)$. If concatenated with a sufficiently low-overhead outer code supporting fault-tolerant Clifford gates, such constant stabilizer weight combined with linear logical yield could translate to constant overall space-time overhead for preparing low-error magic states.

We also emphasize that qLDPC codes supporting transversal non-Clifford gates of high logical yield raise the possibility for alternative approaches to low-overhead fault-tolerance, without using magic state distillation. For instance, because qLDPC codes natively support fault-tolerant stabilizer measurement and error-correction (see e.g.~\cite{gottesman_fault-tolerant_2014}), fault-tolerant computation could be performed directly on qLDPC codes with transversal non-Clifford gates. Such an approach removes the need for magic state preparation, but would still require additional techniques such as code switching (e.g.~\cite{bombin_dimensional_2016,golowich_constant-overhead_2025,tan_single-shot_2025,xu_batched_2025}), resource state preparation (e.g.~\cite{gottesman_fault-tolerant_2014,nguyen_quantum_2025}), or code surgery (e.g.~\cite{litinski_game_2019,cohen_low-overhead_2022,cross_improved_2024}) to perform fault-tolerant Clifford gates.

As an important technical remark, to obtain the quantum-code distance bounds in \Cref{thm:instconstinf,thm:instpolyinf}, we restrict attention to a subsystem of the entire code space, meaning our codes are strictly speaking \emph{subsystem codes}. However, we emphasize that our codes still have truly low-weight stabilizers, rather than having low-weight gauge operators (which may be more difficult to measure fault-tolerantly; see e.g.~\cite{bacon_sparse_2017}). Rather, our use of subsystem codes simply indicates the possibility of certain ``bad'' logical operators specifying ``gauge'' qudits that may not be protected by the code's distance, and hence are not used when encoding the message, or when determining the $C^{r-1}Z$ gate's logical yield $s$. These gauge qudits, along with all logical qudits outside of our $s$ chosen disjoint tuples, must be set (i.e.~''gauge-fixed'') to $\ket{0}$ prior to performing the transversal $C^{r-1}Z$ gate.

After performing such gauge-fixing, our codes allow for fault-tolerant application of $C^{r-1}Z$ gates. Specifically, let $\ket{\psi}$ denote the gauge-fixed code state, and let $C$ denote the physical transversal $C^{r-1}Z$ circuit. Assuming low-weight errors $E$ and $E'$ arise from gauge-fixing and from executing $C$ respectively, the final state after applying transversal $C^{r-1}Z$ is
\begin{equation*}
  E'CE\ket{\psi} = E'(CEC^\dagger)C\ket{\psi}.
\end{equation*}
Assuming $C$ is a constant-depth physical circuit, then the propagated error $CEC^\dagger$ has weight only a constant factor larger than that of $E$. Hence our output state consists of the noiseless post-transversal-gate state $C\ket{\psi}$, up to a low-weight residual error $E'':=E'(CEC^\dagger)$. Note that while $E,E',E''$ may not be Pauli errors (and $E''$ may not be Pauli even if $E,E'$ are Pauli), we can decompose arbitrary such low-weight errors into linear combinations of low-weight Pauli errors. Our CSS code distance and decoding guarantees (see \Cref{sec:prelim}) can then be applied to each such Pauli term.

The design of fault-tolerant protocols for gauge-fixing, as well as for performing more general general Clifford gates, is beyond the scope of this paper. Indeed, it remains an active area of research to construct such Clifford gate protocols, which when combined with non-Clifford gates yield universal quantum computation. We remark that our codes and gauge operators have a tensor product structure (see \Cref{sec:highyield}), so they may be particularly well-suited for performing Clifford gates via code switching (see e.g.~\cite{golowich_constant-overhead_2025,tan_single-shot_2025,xu_batched_2025}). Also note that in the context of magic state distillation, the code with transversal non-Clifford gates is typically concatenated with a code supporting fault-tolerant Clifford gates. Hence in this setting the gauge-fixing procedure need not be fault-tolerant, which gives further flexibility.


\subsection{Our Techniques}
We now overview our techniques for proving \Cref{thm:instconstinf} and \Cref{thm:instpolyinf}. We construct our quantum codes as tensor products of cochain complexes associated to classical Tanner codes. We first prove a general result showing that classical Tanner codes that behave appropriately under component-wise multiplication yield quantum codes supporting transversal $C^{r-1}Z$. We then instantiate this result using a classical tensor product of algebraic codes, which we carefully puncture to express within the Tanner code framework. Our efficient decoding algorithms follow from known decoders for classical (folklore) as well as quantum \cite{quintavalle_reshape_2022} tensor product codes. Below, we provide more details on each of these steps. For simplicity, in this overview we restrict attention to constant-degree Tanner graphs as in \Cref{thm:instconstinf}, though the techniques generalize naturally to graphs with growing degree as in \Cref{thm:instpolyinf}.

\subsubsection{General Classical-to-Quantum Mapping}
Our general mapping from classical Tanner codes to quantum codes supporting transversal $C^{r-1}Z$ is stated in \Cref{thm:highyieldinf} below. Here we specify a classical Tanner code over a field $\bF$ by a graph $\Gamma$ along with a \emph{system of local codes} $\cF$ on $\Gamma$, which assigns to every vertex $v$ a local full-rank generator matrix $\cF_{v\rightarrow E(v)}:\bF^{m_v}\rightarrow\bF^{E(v)}$. The Tanner code $\Tan(\Gamma,\cF)\subseteq\bF^{E(\Gamma)}$ then contains all length-$|E(\Gamma)|$ vectors whose restriction to the neighborhood $E(v)$ of each vertex $v$ lies in the associated local code $\im(\cF_{v\rightarrow E(v)})$.

\begin{theorem}[Informal statement of \Cref{thm:highyield}]
  \label{thm:highyieldinf}
  For a fixed integer $r\geq 3$, let $\cF^{(1)},\dots,\cF^{(r)}$ be systems of local codes on a constant-degree graph $\Gamma$ satisfying the \emph{multiplication property}
  \begin{equation}
    \label{eq:mpinf}
    \im(\cF^{(1)}_{v\rightarrow E(v)})*\cdots*\im(\cF^{(r-1)}_{v\rightarrow E(v)})\subseteq\im(\cF^{(r)}_{v\rightarrow E(v)}) \hspace{1em} \text{ for each } v\in V(\Gamma).
  \end{equation}
  Assume that there exists a set $M\subseteq E(\Gamma)$ such that each classical Tanner code $\Tan(\Gamma,\cF^{(i)})$ has distance $\geq d$, and has an information set\footnote{Recall that an information set of a code $C\subseteq\bF^n$ is a maximal set of coordinates $S\subseteq[n]$ for which $C|_S=\bF^S$, so that in particular $|S|=\dim(C)$.} containing $M$.
  Then there exists an associated family of quantum codes of length $\Theta(|V(\Gamma)|^r)$ and distance $\Omega(d)$ supporting constant-depth transversal $C^{r-1}Z$ gates with logical yield $|M|^r$.
\end{theorem}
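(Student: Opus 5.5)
The plan is to build the quantum code as a tensor product of $r$ two-term cochain complexes, one per Tanner code, and to obtain the transversal $C^{r-1}Z$ from a cup product on this product complex.

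\textbf{Step 1 (one-dimensional complexes).} For each $i\in[r]$, view $\Gamma$ as a $1$-dimensional complex and define a cochain complex $X^{(i)}: \bF^{\oplus_v m_v}\to\bF^{E(\Gamma)}$ from the local generators $\cF^{(i)}_{v\to E(v)}$. Concretely, the coboundary sends local messages to the sum of their encodings, after subdividing each edge $e=\{u,v\}$ so that the two endpoint contributions live on separate coordinates. Then $H^0$ or $H^1$ of this complex is governed by $\Tan(\Gamma,\cF^{(i)})$, and the cocycles of the relevant degree are exactly Tanner codewords. On these complexes I define a local cup-type product
\begin{equation*}
  (X^{(1)})^{a_1}\times\cdots\times(X^{(r-1)})^{a_{r-1}}\to (X^{(r)})^{a_1+\cdots+a_{r-1}},
\end{equation*}
given by coordinatewise multiplication on edges. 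The multiplication property \eqref{eq:mpinf} guarantees that products of local codewords land in $\im(\cF^{(r)}_{v\to E(v)})$, and this is exactly what is needed for a Leibniz rule $\delta(a\cup b)=\delta a\cup b\pm a\cup\delta b$.

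\textbf{Step 2 (quantum code and gate).} Take the $r$-fold tensor product complex $X=X^{(1)}\otimes\cdots\otimes X^{(r)}$, but with the factors cyclically permuted across the $r$ ``copies''. That is, for $j\in[r]$ let $X_j$ be the tensor product in which the $j$th factor carries the system of local codes appropriate for the $j$th input of the product. Place qudits on the middle degree, which has size $\Theta(|V(\Gamma)|^r)$ because $\Gamma$ has constant degree. The CSS code on degree $k$ is $\ker\delta/\im\delta$, with low-weight stabilizers since $\Gamma$ has constant degree. The physical gate is $\prod \ket{x_1,\dots,x_r}\mapsto \omega^{\int x_1\cup\cdots\cup x_r}\ket{x_1,\dots,x_r}$, where the cup product on the tensor complex is the tensor product of the $1$-dimensional cup products and $\int$ is evaluation against a fundamental cycle. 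Because each term couples only $O(1)$ qudits, the circuit has constant depth.

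\textbf{Step 3 (logical action).} First show that the trilinear form vanishes whenever some argument is a coboundary and the others are cocycles; this follows from Leibniz together with $\int\delta(\cdot)=0$. Hence the gate is a logical operator. Next, use the information set $M$. By a K\"unneth argument, logical representatives are tensor products of Tanner codewords, and the codewords $c_m$ dual to $M$, with $c_m|_M=\1_m$, give logical basis elements indexed by $M^r$. I then need to compute the form on these basis elements and show that it is diagonal, i.e.\ $\delta$-like, pairing each $(m_1,\dots,m_r)$ across the $r$ copies. This should be arranged by puncturing and gauge-fixing: evaluate the form only on $M$ and set the remaining logical and gauge qudits to $\ket0$. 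The result is $C^{r-1}Z$ on $|M|^r$ disjoint tuples.

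\textbf{Step 4 (distance), the main obstacle.} The distance of a tensor product of two-term complexes is controlled by the distances of the factors. The cohomological distance is bounded below by the minimum classical distance $d$ via the standard product-code argument, since a low-weight nontrivial cocycle must project onto a nonzero codeword in some factor. The difficulty is that the homology side, meaning dual codes and cocycles, need not have good distance. I would handle this by restricting to the subsystem spanned by the $M^r$ logicals and declaring all other logicals gauge, so that only the $\ge d$ distance of the Tanner codes is used. I would still need to verify that every low-weight operator acts trivially on this subsystem, which is the step I expect to require the most care.
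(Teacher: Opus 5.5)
Your architecture (cyclically permuted tensor products of Tanner-code complexes, an $r$-linear form from a cup product, cohomology invariance from Leibniz, a subsystem to protect distance) matches the paper's. However, the proposal is missing the two ingredients that carry the actual content.

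First, your product $X^{(1)}\times\cdots\times X^{(r-1)}\to X^{(r)}$ has only $r-1$ inputs. So the $r$-th tensor slot and the cycle you integrate against are never specified. The $\cF^{(r)}$-complex does not supply such a cycle in general: $\1_{E(\Gamma)}\in Z_1(\Gamma,\cF)$ only if every local codeword of $\cF$ has entries summing to zero. The paper puts the \emph{dual} system ${\cF^{(r)}}^\perp$ in the $r$-th slot, so products land in $\cC'=\cC^*(\Gamma,\cF^{(1)}*\cdots*\cF^{(r-1)}*{\cF^{(r)}}^\perp)$. The multiplication property is used exactly to show that local codewords of $\cF^{(1)}*\cdots*\cF^{(r-1)}*{\cF^{(r)}}^\perp$ sum to zero, i.e.\ $\1_{E(\Gamma)}\in Z_1(\cC')$ and hence $\1_{E(\Gamma)}^{\otimes r}\in Z_r(\cA')$. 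The Leibniz rule itself needs no hypothesis, since \Cref{def:tancup} simply targets $\cF*\cF'$. Two details there: the $0$--$1$ and $1$--$0$ products must use the tail and head of each edge respectively, and the coboundary must compare the two endpoint encodings on the \emph{same} edge coordinate. Your ``subdivision'' puts them on separate coordinates, which makes $\delta$ injective and kills $H^0$. Finally, since $\1_{E(\Gamma)}^{\otimes r}$ sits in degree $r$, the input degrees must sum to $r$. This forces qudits at level $1$, with exactly one degree-$1$ factor per copy; at the middle degree the product vanishes identically once $r\geq 4$.

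Second, the logical yield and the distance. Taking Tanner codewords $c_m$ with $c_m|_M=\1_m$ in every slot does not give a diagonal form. The evaluation sums $\prod_i c_{m_i}(e)$ over all edges, including those outside $M$ where these codewords are uncontrolled. ``Evaluating only on $M$'' is not available, because $\1_M$ is in general not a cycle and cohomology invariance would be lost. Setting other qudits to $\ket{0}$ does nothing about cross terms among the chosen tuples. The missing idea is to represent the degree-$1$ factor in the ${\cF^{(r)}}^\perp$ slot by the weight-one cocycle $\1_e$ with $e\in M$. Then in each tensor position the cup product is $c^{(1)}_{e_1}*\cdots*c^{(r-1)}_{e_{r-1}}*\1_{e_r}$, which equals $\1_{e_r}$ if all $e_i$ coincide and $0$ otherwise, with all Koszul signs $+1$ (\Cref{claim:mlsubrank}). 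The same structure closes your Step~4, which you leave open. Take $L^{(r)}_Z=H^0(\Gamma,\cF^{(1)})\otimes\cdots\otimes H^0(\Gamma,\cF^{(r-1)})\otimes H^1(\Gamma,{\cF^{(r)}}^\perp)$ and its cyclic shifts. Contracting a nontrivial $X$-logical against a pure tensor of $0$-cocycles gives a nonzero element of $Z_1(\Gamma,{\cF^{(r)}}^\perp)=\Tan(\Gamma,\cF^{(r)})$. Contracting a nontrivial $Z$-logical against a $1$-cycle gives a nonzero element of $\bigotimes_{j<r}Z^0(\Gamma,\cF^{(j)})$. Hence $d_X\geq d^{(r)}$ and $d_Z\geq(2/\Delta)^{r-1}\prod_{j<r}d^{(j)}$ (\Cref{claim:hydis}).
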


Our quantum codes in \Cref{thm:highyieldinf} are CSS codes given by the tensor products (sometimes also called hypergraph products \cite{tillich_quantum_2014})
\begin{align*}
  \cA^{(1)} &= \cC^*(\Gamma,{\cF^{(r)}}^\perp) \otimes \cC^*(\Gamma,\cF^{(1)}) \otimes \cC^*(\Gamma,\cF^{(2)}) \otimes \cdots \otimes \cC^*(\Gamma,\cF^{(r-1)}) \\
  \cA^{(2)} &= \cC^*(\Gamma,\cF^{(r-1)}) \otimes \cC^*(\Gamma,{\cF^{(r)}}^\perp) \otimes \cC^*(\Gamma,\cF^{(1)}) \otimes \cdots \otimes \cC^*(\Gamma,\cF^{(r-2)}) \\
            &\vdots \\
  \cA^{(r)} &= \cC^*(\Gamma,\cF^{(1)}) \otimes \cC^*(\Gamma,\cF^{(2)}) \otimes \cdots \otimes \cC^*(\Gamma,\cF^{(r-1)}) \otimes \cC^*(\Gamma,{\cF^{(r)}}^\perp),
\end{align*}
of 1-dimensional cochain complexes $\cC(\Gamma,\cF^{(1)}),\dots,\cC(\Gamma,\cF^{(r-1)}),\cC(\Gamma,{\cF^{(r)}}^\perp)$ associated to the respective Tanner codes $\Tan(\Gamma,\cF^{(1)}),\dots,\Tan(\Gamma,\cF^{(r-1)}),\Tan(\Gamma,{\cF^{(r)}}^\perp)$. Specifically, these 1-dimensional complexes are defined to have coboundary maps equal to the parity-check matrices of the associated Tanner codes. Here ${\cF^{(r)}}^\perp$ denotes the system given by dualizing all local codes in $\cF^{(r)}$.

To obtain the transversal $C^{r-1}Z$ gate in \Cref{thm:highyieldinf}, we define a \emph{cup product}
\begin{equation*}
  \smile:\cA^{(1)}\times\cdots\times\cA^{(r)}\rightarrow\cA',
\end{equation*}
which is a multilinear map sending cochains $a^{(1)}\in\cA^{(1)},\dots,a^{(r)}\in\cA^{(r)}$ to some cochain $a^{(1)}\smile\cdots\smile a^{(r)}$ in an appropriate cochain complex $\cA'$. Intuitively, the cup product provides a quantum/high-dimensional analogue of component-wise multiplication of classical codewords. More formally, we ensure the cup product induces a well-defined map on cohomology. We then fix a cycle $z_*\in Z_*(\cA')$, and define our physical transversal gate to perform $C^{r-1}Z$ on every $r$-tuple of qudits whose cup product has nontrivial inner product with $z_*$. Here the multiplication property in \Cref{eq:mpinf} ensures that there exists an appropriate such cycle $z_*$, which is given by the all-1s vector on an appropriate sector of the complex $\cA'$.

This general strategy of constructing transversal $C^{r-1}Z$ gates from cup products was developed in prior works, e.g.~\cite{zhu_non-clifford_2023,lin_transversal_2024,breuckmann_cups_2024, zhu_topological_2025, li_poincare_2025}. Our key to obtaining a simple and general statement in \Cref{thm:highyieldinf} is to first construct a cup product on the 1-dimensional complexes $\cC(\Gamma,\cF^{(1)}),\dots,\cC(\Gamma,\cF^{(r-1)}),\cC(\Gamma,{\cF^{(r)}}^\perp)$ using the multiplication property in \Cref{eq:mpinf}, following techniques in \cite{lin_transversal_2024,li_poincare_2025}.
We then extend this cup product to our higher-dimensional tensor product complexes $\cA^{(1)},\dots,\cA^{(r)}$ via the formula
\begin{align}
  \label{eq:cuptensorinf}
  (a\otimes b)\smile(a'\otimes b') &= (-1)^{i_bi_{a'}}(a\smile a')\otimes(b\smile b'),
\end{align}
where $i_b$ (resp.~$i_{a'}$) denotes the level of the cochain $b$ (resp.~$a'$). While \Cref{eq:cuptensorinf} was also used in \cite{breuckmann_cups_2024}, the works \cite{lin_transversal_2024,li_poincare_2025} proposed instead directly constructing cup products on high-dimensional complexes, which can complicate the analysis of the logical yield. Indeed, our bound on logical yield in \Cref{thm:highyieldinf} follows directly from the multiplication property on each factor in the tensor products defining $\cA^{(1)},\dots,\cA^{(r)}$.

As a technical remark, to ensure that our quantum codes in \Cref{thm:highyieldinf} satisfy the distance bound of $\Omega(d)$, we construct them as subsystem codes, though still with truly low-weight stabilizers; see the discussion earlier in \Cref{sec:intro} above.


\subsubsection{Instantiating the General Mapping}
\label{sec:punctensinf}
To instantiate the general mapping in \Cref{thm:highyieldinf}, we construct new punctured versions of classical tensor products of algebraic codes. Here our goal is to construct classical LDPC Tanner codes satisfying the multiplication property in \Cref{eq:mpinf} that exhibit high dimension and distance.
The construction of such classical codes has proven to be a difficult problem. A core difficulty is that traditional classical Tanner code constructions require local codes of rate $>1/2$ to ensure positive global rate via constraint-counting (e.g.~\cite{sipser_expander_1996}). However, high-distance classical local codes with a multiplication property have rate $\leq 1/2$. Hence to obtain Tanner codes exhibiting both a multiplication property and large global rate, we need to carefully choose the local codes alongside the graph, and bound the rate using additional structure beyond simply constraint-counting.

Such an approach was taken to construct Tanner codes with the multiplication property in \cite{dinur_new_2023} and \cite[Section~4]{golowich_quantum_2025-1}. However, these prior constructions required super-constant check weight in order to obtain large (i.e.~close-to-linear) code dimension. We therefore provide a new construction of Tanner codes with constant check weight exhibiting a multiplication property.

To construct our LDPC Tanner codes, we begin with well-known \emph{non-LDPC} families of good classical codes exhibiting a multiplication property, such as algebraic-geometry codes (see e.g.~\cite{couvreur_algebraic_2021}). If we take a large constant-sized such code $C$ of length $n$, dimension $k\geq n^{1-\epsilon}$, and distance $d\geq n^{1-\epsilon}$, the classical tensor powers $C^{\otimes t}$ for $t\rightarrow\infty$ provide a family of codes with a multiplication property that have length $N=n^t$, dimension $K=k^t\geq N^{1-\epsilon}$, distance $D=d^t\geq N^{1-\epsilon}$, and check weight $n=O(1)$. However, $C^{\otimes t}$ is not natively expressible as a Tanner code, i.e.~there is no natural associated graph $\Gamma$ and system of local codes $\cF$.

To resolve this issue, we show how to puncture $C^{\otimes t}$ to obtain such a Tanner code, while retaining the same code properties and parameters up to a small loss. The main idea is to define a $t$-partite graph $\bar{\Gamma}$ whose vertices are axis-parallel columns (i.e.~lines) in the hypercube $[n]^t$, and whose edges connect pairs of distinct columns that intersect. Each edge then has an associated point in $[n]^t$ given by the intersection of the two associated columns. However, a single point may have multiple associated edges. We therefore carefully choose a subgraph $\Gamma\subseteq\bar{\Gamma}$ for which each point in $[n]^t$ has at most one associated edge. We then define an associated system of local codes $\cF$ given by puncturings of the base code $C$, so that the resulting Tanner code $\Tan(\Gamma,\cF)$ is a puncturing of $C^{\otimes t}$. See \Cref{thm:punctens} in \Cref{sec:punctens} for details (note that there the hypercube $[2n]^t$ replaces $[n]^t$ in this informal overview).

\subsubsection{Efficient Decoding Algorithm}
Our efficient decoding algorithms against adversarial errors in \Cref{thm:instconstinf,thm:instpolyinf} follow from known decoders for tensor product codes. In particular, \cite{quintavalle_reshape_2022} show how to decode quantum codes obtained as tensor products of cochain complexes associated to classical codes, given access to decoders for the classical codes. As described in \Cref{sec:punctensinf} above, we construct these classical codes by puncturing tensor products of classical algebraic-geometry codes. Such classical algebraic-geometry codes, and their (classical) tensor products, have well-known efficient decoding algorithms. Our choice of puncturing ensures our final classical codes can be viewed as the disjoint union of a small number of (unpunctured) classical tensor product codes (see \Cref{sec:punctens}), each of which we can decode with these known decoders. Hence we obtain efficient decoders for our quantum codes.

\subsection{Directions for Future Work}
Two natural areas for improvement in \Cref{thm:instconstinf} are:
\begin{enumerate}
\item Improve the distance bound $d\geq n^{(1-\epsilon)/r}$ to (even close to) linear in $n$, and
\item Improve the logical yield bound $s\geq n^{1-\epsilon}$ to linear in $n$.
\end{enumerate}

\Cref{thm:instpolyinf} provides one way to address the second point above, though at the cost of having super-constant stabilizer weight and physical transversal $C^{r-1}Z$ circuit depth. As described above, a family of polynomial-distance codes with linear logical yield and constant stabilizer weight could lead to asymptotically optimal (i.e.~constant space-time overhead) magic state preparation under locally stochastic noise. Constructing such codes using \Cref{thm:highyieldinf} would require polynomial-distance constant-rate classical LDPC codes with a multiplication property, whose construction remains an intriguing open question.

In contrast, we could hope to improve the distance bounds in \Cref{thm:instconstinf} and \Cref{thm:instpolyinf} by replacing the tensor product of cochain complexes used to prove \Cref{thm:highyieldinf} with a \emph{balanced product} (also called a \emph{lifted product}) \cite{panteleev_quantum_2022,breuckmann_balanced_2021,panteleev_asymptotically_2022}. Such balanced products can be viewed as tensor products quotiented by a group action, which have been shown to increase the resulting quantum code distance to nearly-linear in the block length, even in high-dimensional products \cite{dinur_expansion_2024,kalachev_maximally_2025}. However, existing techniques to prove such near-linear distance only apply to Tanner codes on expander graphs whose local codes exhibit strong expansion properties. It remains an open question to adapt such proof techniques to our Tanner codes, or to modify our Tanner codes so that such techniques apply, while still preserving high-yield transversal $C^{r-1}Z$ gates. Some progress in this direction was made by the recent work \cite{li_transversal_2026-1}, which obtained algebraic local codes with sufficient expansion to show near-linear distance by the results of \cite{dinur_expansion_2024}. However, the logical yield of the resulting transversal $C^{r-1}Z$ gate was constant.

While \Cref{thm:instconstinf,thm:instpolyinf} give efficient decoding algorithms against adversarial errors, it also remains an interesting question to consider other noise models, such as locally stochastic noise, particularly in the context of constant space-time overhead magic state preparation.

\subsection{Roadmap}
The remainder of this paper is organized as follows. \Cref{sec:prelim} presents preliminary definitions and results that we use throughout. \Cref{sec:highyield} presents our formal statement and proof of \Cref{thm:highyieldinf}, which provides our mapping from classical codes with a multiplication property to quantum codes with transversal $C^{r-1}Z$. \Cref{sec:punctens} then provides the general family of classical Tanner codes from punctured tensor codes that we use to invoke \Cref{thm:highyieldinf}, and \Cref{sec:inst} provides our specific instantiations.

\section{Preliminaries}
\label{sec:prelim}
In this section, we present preliminary notions regarding classical and quantum codes.

\subsection{Classical Codes}
We begin with basic classical code definitions.

\begin{definition}
  \label{def:ccode}
  A length-$n$ \emph{classical (linear) code} over a field $\bF_q$ is a subspace $C\subseteq\bF_q^n$. The code $C$ has \emph{dimension} $k=\dim_{\bF_q}(C)$ and \emph{distance} $d=\min_{c\in C\setminus\{0\}}|c|$. We summarize these parameters by saying $C$ is a $[n,k,d]_q$ code.
\end{definition}

\begin{definition}
  For a classical code $C\subseteq\bF_q^n$, the \emph{dual code} $C^\perp\subseteq\bF_q^n$ is given by
  \begin{equation*}
    C^\perp = \{x\in\bF_q^n:\langle x,y\rangle=0\;\forall y\in C\}.
  \end{equation*}
\end{definition}

Our code constructions will be based on tensor products:

\begin{definition}
  For $i=1,2$, let $C_i$ be an $[n_i,k_i,d_i]_q$ classical code. The \emph{tensor product code} $C_1\otimes C_2$ is the $[n_1n_2,k_1k_2,d_1d_2]_q$ code consisting of all $n_1\times n_2$ matrices for which every column lies in $C_1$ and every row lies in $C_2$.
\end{definition}

The following definition presents subsets of a code's coordinates to which we can view the restriction of a codeword as its encoded message.

\begin{definition}
  For a $k$-dimensional classical code $C\subseteq\bF_q^n$, an \emph{information set} is a subset $S\subseteq[n]$ of size $|S|=k$ such that $C|_S=\bF_q^S\cong\bF_q^k$. An \emph{extendable set} $M\subseteq[n]$ is a set that can be extended into an information set, that is, $M$ is extendable for $C$ if there exists an information set $S\supseteq M$ for $C$.
\end{definition}

Below, we define the notion of a decoder for a classical code.

\begin{definition}
  \label{def:cdec}
  A \emph{radius-$\rho$ time-$T$ decoder} for a classical code $C\subseteq\bF^n$ is a (classical) algorithm with running time $\leq T$ that takes as input a string $a=c+e\in\bF^n$ for some codeword $c\in C$ and error $e\in\bF^n$, and outputs $c$ as long as $|e|<\rho$. If $|e|\geq\rho$, the output can be arbitrary.
\end{definition}

It is well-known that a tensor code $C\otimes C'$ can be decoded by first running a decoder for $C$ in all columns of the corrupted codeword $x\in\bF^{n\times n'}$, and then running a decoder for $C'$ in all rows. The resulting product-code decoder has the following parameters:

\begin{lemma}[Well-known; see e.g.~Fact~2.15 in \cite{kopparty_list_2021}]
  \label{lem:tensordec}
  For $i=1,2$, let $C_i\subseteq\bF^{n_i}$ be a classical code with a radius-$\rho_i$ time-$T_i$ decoder. Then the tensor product code $C_1\otimes C_2$ has a radius-$\rho_1\rho_2$ time-$(n_1T_2+n_2T_1)$ decoder.
\end{lemma}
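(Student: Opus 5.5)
The plan is to analyze the standard two-phase decoder: decode every column with the decoder for $C_1$, then decode every row with the decoder for $C_2$. View the received word as a matrix $a=c+e\in\bF^{n_1\times n_2}$ with $c\in C_1\otimes C_2$ and $|e|<\rho_1\rho_2$. In phase one, for each column index $j\in[n_2]$, we run the radius-$\rho_1$ decoder for $C_1$ on the column $a_{\cdot,j}$ and replace that column by the output. This gives an intermediate matrix $a'$. In phase two, for each row $i\in[n_1]$, we run the radius-$\rho_2$ decoder for $C_2$ on $a'_{i,\cdot}$ and output the resulting matrix. The running time is $n_2T_1+n_1T_2$, plus the linear bookkeeping needed to read and write rows and columns, which we absorb or assume is dominated by these terms.

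For correctness, the key step is counting the columns that phase one can corrupt. Call a column $j$ \emph{bad} if $|e_{\cdot,j}|\geq\rho_1$. Since $\sum_j|e_{\cdot,j}|=|e|<\rho_1\rho_2$, the number of bad columns is less than $\rho_2$. Every good column has $|e_{\cdot,j}|<\rho_1$, and each column $c_{\cdot,j}$ lies in $C_1$, so the decoder returns exactly $c_{\cdot,j}$ on it. A bad column may be replaced by an arbitrary vector. Hence $a'-c$ is supported on fewer than $\rho_2$ columns.

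Phase two then finishes the argument. Each row of $a'-c$ has fewer than $\rho_2$ nonzero entries, because its support lies within the bad columns. Each row $c_{i,\cdot}$ lies in $C_2$, so the row decoder recovers $c_{i,\cdot}$ exactly, and the output equals $c$.

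I do not expect a serious obstacle. The one point needing care is that the decoder's output on a bad column is arbitrary and need not even be a codeword of $C_1$. The argument only uses that such a column is contained in the bad set, so the row errors stay inside those fewer than $\rho_2$ coordinates.
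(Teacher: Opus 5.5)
Your proposal is correct and is the same column-then-row decoder the paper describes just before the lemma; the paper gives no further proof and defers to the cited reference. Your counting step ($B\rho_1\le|e|<\rho_1\rho_2$, so fewer than $\rho_2$ bad columns, which bounds every row's residual error) is the standard argument.
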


\subsection{Quantum Codes and Transversal Gates}
\label{sec:qcodes}
We now present the notion of (subsystem) quantum CSS codes.

\begin{definition}
  \label{def:qcode}
  A length-$n$ \emph{quantum (CSS) code} over $\bF_q$ is a pair $Q=(Q_X,Q_Z)$ of classical codes $Q_X,Q_Z\subseteq\bF_q^n$ satisfying $Q_X^\perp\subseteq Q_Z$.

  A $k$-dimensional \emph{subsystem} $L=(L_X,L_Z)$ of the quantum CSS code $Q=(Q_X,Q_Z)$ is a pair of $k$-dimensional subspaces $L_X\subseteq Q_X/Q_Z^\perp,\; L_Z\subseteq Q_Z/Q_X^\perp$ for which the restriction of the natural bilinear form $\langle\cdot,\cdot\rangle:Q_X/Q_Z^\perp\times Q_Z/Q_X^\perp\rightarrow\bF_q$ to $L_X\times L_Z$ is nondegenerate. We call the data $(Q,L)$ a \emph{quantum CSS subsystem code}. The \emph{distance} of the subsystem code is $d=\min\{d_X,d_Z\}$ for
  \begin{align*}
    d_X &= \min_{x\in Q_X\setminus L_Z^\perp}|x| \\
    d_Z &= \min_{z\in Q_Z\setminus L_X^\perp}|z|.
  \end{align*}
  Here and below, by abuse of notation, we also write $L_X$ and $L_Z$ for their full preimages in $Q_X$ and $Q_Z$, respectively. Thus $Q_Z^\perp\subseteq L_X\subseteq Q_X$ and $Q_X^\perp\subseteq L_Z\subseteq Q_Z$.
  
  We summarize the parameters above by saying $(Q,L)$ is a $[[n,k,d]]_q$ subsystem code.
\end{definition}

We emphasize that we will only consider subsystem codes with low-weight (commuting) stabilizers. The subsystem property means that we specify a subsystem $L$ of the logical qudit space in which we encode our message; the remainder of the logical qudit space may be corrupted by low-weight errors. By excluding these easily-corruptible logical qudits, our subsystem codes may have larger distance than the associated non-subsystem code.

Our usage of subsystem codes with low-weight stabilizers is in contrast to other works that use subsystem codes with high-weight stabilizers that are decomposed into low-weight non-commuting gauge operators. While such constructions are sometimes able to obtain improved rate and distance with low-weight checks (e.g.~\cite{bacon_sparse_2017}), the checks are given by non-commuting gauge operators that may be difficult to measure fault-tolerantly.

Below, we define the notion of a decoder for a quantum code.

\begin{definition}
  \label{def:qdec}
  A \emph{radius-$\rho$ time-$T$ decoder} for a quantum CSS subsystem code $(Q,L)$ is a (classical) algorithm with running time $\leq T$ that takes as input a pair of strings $a_X=c_X+e_X,\;a_Z=c_Z+e_Z\in\bF^n$ for some codewords $c_X\in Q_X,\; c_Z\in Q_Z$ and errors $e_X,e_Z\in\bF^n$, and outputs the pair of cosets $(c_X+L_Z^\perp,\;c_Z+L_X^\perp)$ as long as $|e_X|,|e_Z|<\rho$. If either $|e_X|\geq\rho$ or $|e_Z|\geq\rho$, the output can be arbitrary.
\end{definition}

We will not try to optimize the precise running time of our decoders, and will rather simply ensure the running time is polynomial in the block length. Our notion of a polynomial-time decoder in \Cref{def:qdec} is equivalent to that of a polynomial-time decoder that needs to correct a weight-$\rho$ error on a code state given as input a noise-free syndrome of the corrupted codeword. Indeed, given the syndrome, we can perform Gaussian elimination to compute some valid $a_X$ and $a_Z$ inducing that syndrome, and then compute the appropriate Pauli $X$ and $Z$ corrections as $a_X-c_X$ and $a_Z-c_Z$, respectively.

We now define transversal $C^{r-1}Z$ gates on quantum CSS (subsystem) codes. We begin by defining the $C^{r-1}Z$ gate.

\begin{definition}
  \label{def:CrZgate}
  For a finite field $\bF_q$ of characteristic $p$ and for $a\in\bF_q$, the unitary gate $C^{r-1}Z^a$ acts on $r$ qudits of dimension $q$ by
  \begin{equation*}
    C^{r-1}Z^a\ket{x_1,\dots,x_r} = e^{2\pi i\tr_{\bF_q/\bF_p}(a\cdot x_1\cdots x_r)/p}\ket{x_1,\dots,x_r}.
  \end{equation*}
\end{definition}

A physical circuit of $C^{r-1}Z$ gates acting across $r$ quantum code states can be represented by a multilinear form, as defined below.

\begin{definition}
  \label{def:cczphys}
  Let $Q^{(1)},\dots,Q^{(r)}$ be quantum CSS codes over $\bF_q$ of respective lengths $n^{(1)},\dots,n^{(r)}$, and let
  \begin{equation*}
    f:\bF_q^{n^{(1)}}\times\cdots\times\bF_q^{n^{(r)}}\rightarrow\bF_q
  \end{equation*}
  be a multilinear form. We define an associated physical circuit $C^{r-1}Z^f$ acting on qudits labeled by $[n^{(1)}]\sqcup\cdots\sqcup[n^{(r)}]$, which applies the gate $C^{r-1}Z^{f(\1_{c^{(1)}},\dots,\1_{c^{(r)}})}$ to every $r$-tuple $(c^{(1)},\dots,c^{(r)})\in[n^{(1)}]\times\cdots\times[n^{(r)}]$ of physical qudits.

  We define the \emph{sparsity} of $f$ to be the maximum number of non-identity\footnote{That is, gates $C^{r-1}Z^a$ for $a\neq 0$.} $C^{r-1}Z$ gates in which any single qudit participates.
\end{definition}

\Cref{fact:sparsitytodepth} below shows that sparsity provides a good proxy for circuit depth.

\begin{fact}
  \label{fact:sparsitytodepth}
  If $f$ has sparsity $w$, then $C^{r-1}Z^f$ can be implemented in circuit depth $rw$.
\end{fact}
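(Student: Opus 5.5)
The plan is to reduce the statement to edge-coloring a hypergraph and then schedule one color class per layer. Define the hypergraph $H$ on vertex set $[n^{(1)}]\sqcup\cdots\sqcup[n^{(r)}]$, with one hyperedge $\{c^{(1)},\dots,c^{(r)}\}$ for each $r$-tuple $(c^{(1)},\dots,c^{(r)})$ with $f(\1_{c^{(1)}},\dots,\1_{c^{(r)}})\neq 0$. Gates with coefficient $0$ are the identity, so only these hyperedges need to be implemented. By the definition of sparsity, every vertex of $H$ has degree at most $w$. Each hyperedge has exactly $r$ vertices, because it contains one qudit from each of the $r$ disjoint blocks.

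Next I will show that the hyperedges of $H$ can be partitioned into at most $rw$ matchings, i.e.\ sets of pairwise disjoint hyperedges. I will do this by a greedy coloring with $r(w-1)+1\leq rw$ colors. Process the hyperedges in any order and give each the smallest color not already used on a hyperedge sharing a vertex with it. Each of its $r$ vertices lies in at most $w-1$ other hyperedges, so at most $r(w-1)$ colors are forbidden, and a free color always exists. Each color class is then a matching.

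Finally, for each color class I apply all of its gates $C^{r-1}Z^{a}$ simultaneously as one layer. This is legitimate because the gates in a class act on pairwise disjoint sets of qudits. All $C^{r-1}Z^a$ gates are diagonal in the computational basis, so they commute, and the order of the layers does not affect the overall unitary. The product over all layers is therefore exactly $C^{r-1}Z^f$, implemented in depth at most $rw$. No step here is a real obstacle; the only point needing care is the diagonal/commuting observation, which is what justifies reordering the gates into layers.
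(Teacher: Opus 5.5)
Your proof is correct and is essentially the paper's argument. The paper properly colors the conflict graph on gates, which has maximum degree $\leq r(w-1)$, using $r(w-1)+1$ colors; this is exactly your greedy partition of the hyperedges into matchings, and your remark that the diagonal gates commute makes explicit a point the paper leaves implicit.
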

\begin{proof}
  Define a bipartite graph $\Gamma$ whose left vertices are physical qudits, and whose right vertices are the physical $C^{r-1}Z$ gates performed in $C^{r-1}Z^f$, with edges connecting each qudit to the gates it participates in. By definition, $\Gamma$ has maximum left-degree $\leq w$ and right-degree $r$. Let $\Gamma^2_R$ denote the right component of the square of $\Gamma$ excluding self-loops, so that $\Gamma^2_R$ has vertices given by physical $C^{r-1}Z$ gates performed in $C^{r-1}Z^f$, and edges connecting every pair of distinct gates with a shared qudit. Then $\Gamma^2_R$ has maximum degree $\leq r(w-1)$, so its vertices can be properly colored by the set $[r(w-1)+1]$, i.e., no two vertices connected by an edge share the same color. It follows that we can perform each $C^{r-1}Z$ gate in $C^{r-1}Z^f$ in the time step given by its assigned color, and no two gates sharing a qudit will be executed in the same time step. This implementation has circuit depth $r(w-1)+1\leq rw$, as desired.
\end{proof}

We next characterize when the physical circuit $C^{r-1}Z^f$ induces logical $C^{r-1}Z$ gates on $r$-tuples of logical qudits.

\begin{definition}
  \label{def:cohinv}
  We say the multilinear form $f$ in \Cref{def:cczphys} is \emph{cohomology-invariant} if it holds for every $(z^{(i)}\in Q^{(i)}_Z)_{i\in[r]}$ and every $(b^{(i)}\in{Q^{(i)}_X}^\perp)_{i\in[r]}$ that
  \begin{equation*}
    f(z^{(1)}+b^{(1)},\dots,z^{(r)}+b^{(r)}) = f(z^{(1)},\dots,z^{(r)}).
  \end{equation*}
  Such cohomology-invariance implies a well-defined induced multilinear form
  \begin{equation*}
    \tilde{f}:Q^{(1)}_Z/{Q^{(1)}_X}^\perp\times\cdots\times Q^{(r)}_Z/{Q^{(r)}_X}^\perp\rightarrow\bF_q.
  \end{equation*}
\end{definition}

The term \emph{cohomology-invariant} arises from the view of quantum CSS codes as cochain complexes (see below). Cohomology-invariance ensures that the physical circuit $C^{r-1}Z^f$ maps an $r$-tuple of valid code states to valid states of the same codes. This physical circuit will always induce some logical circuit $C^{r-1}Z^{\tilde{f}}$ consisting of logical $C^{r-1}Z$ gates. We want this logical circuit to perform $C^{r-1}Z$ gates on as many disjoint $r$-tuples of logical qudits as possible, without any logical gates acting across different $r$-tuples. Indeed, such a circuit structure allows for performing many logical $C^{r-1}Z$ gates in parallel, without additional unwanted entangling operations. If we allow ourselves to choose any basis of logical operators for our codes, and we set all logical qudits outside of our chosen $r$-tuples to $\ket{0}$, then the maximum number of disjoint logical $C^{r-1}Z$ gates induced by $C^{r-1}Z^f$ is precisely the \emph{subrank} of $\tilde{f}$:

\begin{definition}
  \label{def:subrank}
  The \emph{subrank} of a multilinear form $g:V^{(1)}\times\cdots\times V^{(r)}\rightarrow\bF$ is the maximum integer $s$ such that for every $j\in[r]$ there exist $v^{(j)}_1,\dots,v^{(j)}_s\in V^{(j)}$ satisfying
  \begin{equation*}
    g(v^{(1)}_{\ell_1},\dots,v^{(r)}_{\ell_r}) = \1_{\ell_1=\cdots=\ell_r} \hspace{1em} \forall \ell_1,\dots,\ell_r\in[s].
  \end{equation*}
\end{definition}

Some of the codes $Q^{(i)}$ with cohomology-invariant multilinear forms that we will construct may have poor distance, but nevertheless have natural subsystems $L^{(i)}$ with better distance (\Cref{def:qcode}). Our goal is then to define such subsystem codes of large distance along with a multilinear form $f$ such that the restriction of $\tilde{f}$ to the span of the specified $L^{(i)}_Z$ has large subrank:

\begin{definition}
  \label{def:transversal}
  We say subsystem codes $(Q^{(i)},L^{(i)})$ for $i\in[r]$ support a \emph{transversal $C^{r-1}Z$ gate of sparsity $w$ and logical yield $s$} if there exists a cohomology-invariant form $f$ (see \Cref{def:cczphys,def:cohinv}) of sparsity $w$ for which $\tilde{f}|_{L^{(1)}_Z\times\cdots\times L^{(r)}_Z}$ has subrank $s$.
\end{definition}

More details on the motivation and implications of the definitions above can be found in \cite[Section~3]{lin_transversal_2024}, albeit with slightly different notation.

\subsection{Quantum Codes from Cochain Complexes}
In this section, we present basic definitions regarding (co)chain complexes, which provide useful language and machinery for quantum CSS codes.

  

\begin{definition}
  An \emph{$r$-dimensional chain complex} $\cC_*$ over a field $\bF$ is a graded $\bF$-vector space $\cC=\bigoplus_{i=0}^r\cC_i$ along with a \emph{boundary map} $\partial^{\cC}:\cC\rightarrow\cC$ satisfying $\partial^{\cC}(\cC_i)\subseteq\cC_{i-1}$ and $(\partial^{\cC})^2=0$. We write $\partial^{\cC}_i=\partial^{\cC}|_{\cC_i}$. We restrict attention to \emph{based} chain complexes, meaning that each $\cC_i=\bF^{C_i}$ for some specified basis $C_i$, and then $\cC=\bF^C$ with $C=\bigsqcup_{i=0}^rC_i$. The \emph{locality} $w$ of $\cC$ is the maximum Hamming weight of any row or column of $\partial^{\cC}\in\bF^{C\times C}$.

  The \emph{cochain complex $\cC^*$ associated to $\cC_*$}  consists of the same graded vector space $\cC=\bigoplus_{i=0}^r\cC^i$ with each $\cC^i=\bF^{C^i}=\bF^{C_i}=\cC_i$, along with the \emph{coboundary map} $\delta^{\cC}=(\partial^{\cC})^\top$. We write $\delta^{\cC}_i=\delta^{\cC}|_{\cC^i}$.

  Writing $\partial=\partial^{\cC}$, $\delta=\delta^{\cC}$ for the chain complex $\cC_*$, we define the following spaces:
  \begin{align*}
    i\emph{-chains }\cC_i, & \hspace{2em} i\emph{-cochains }\cC^i \\
    i\emph{-cycles }Z_i(\cC) = \ker(\partial_i), & \hspace{2em} i\emph{-cocycles }Z^i(\cC) = \ker(\delta_i) \\
    i\emph{-boundaries }B_i(\cC) = \im(\partial_{i+1}), & \hspace{2em} i\emph{-coboundaries }B^i(\cC) = \im(\delta_{i-1}) \\
    i\emph{-homology }H_i(\cC) = Z_i(\cC)/B_i(\cC), & \hspace{2em} i\emph{-cohomology }H^i(\cC) = Z^i(\cC)/B^i(\cC).
  \end{align*}
  We also define the
  \begin{align*}
    i\emph{-systolic distance }d_i(\cC) = \min_{c\in Z_i(\cC)\setminus B_i(\cC)}|c|, & \hspace{2em} i\emph{-cosystolic distance }d^i(\cC) = \min_{c\in Z^i(\cC)\setminus B^i(\cC)}|c|.
  \end{align*}

  The \emph{quantum code $Q=(Q_X,Q_Z)$ associated to level $i$ of $\cC^*$} is defined by $Q_X=Z_i(\cC)$ and $Q_Z=Z^i(\cC)$.
\end{definition}

Our codes will be constructed from tensor products of 1-dimensional cochain complexes, as defined below.

\begin{definition}
  \label{def:tensorcc}
  For cochain complexes $\cA^*,\cB^*$,
  the \emph{tensor product} is the cochain complex $\cA^*\otimes\cB^*$ given by
  \begin{align*}
    (\cA\otimes\cB)^i &= \bigoplus_{j\in\bZ}\cA^j\otimes\cB^{i-j},
  \end{align*}
  with coboundary map acting on the product of $a\in\cA^j$, $b\in\cB^{i-j}$ by
  \begin{align*}
    \delta(a\otimes b) &= \delta(a)\otimes b+(-1)^ja\otimes\delta(b).
  \end{align*}
\end{definition}

The following well-known formula shows how (co)homology behaves under tensor products.

\begin{lemma}[K\"{u}nneth formula]
  \label
  {lem:kunneth}
  For cochain complexes $\cC^*=\cA^*\otimes\cB^*$, there exists an isomorphism
  \begin{align*}
    H^i(\cC) &\cong \bigoplus_{j\in\bZ}H^j(\cA)\otimes H^{i-j}(\cB)
  \end{align*}
  given by
  \begin{align*}
    a\otimes b+B^i(\cC) &\mapsto (a+B^j(\cA))\otimes(b+B^{i-j}(\cB)).
  \end{align*}
  for $a\in Z^j(\cA)$, $b\in Z^{i-j}(\cB)$.
\end{lemma}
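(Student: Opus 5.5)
We prove the Künneth formula over the field $\bF$ by first showing the stated map is well defined and then checking that it is an isomorphism through a splitting of each factor into a cohomology part plus an acyclic part.

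\textbf{Well-definedness.} For $a\in Z^j(\cA)$ and $b\in Z^{i-j}(\cB)$, \Cref{def:tensorcc} gives $\delta(a\otimes b)=\delta a\otimes b+(-1)^j a\otimes\delta b=0$. So the inverse candidate
\begin{equation*}
  \Phi:\bigoplus_j H^j(\cA)\otimes H^{i-j}(\cB)\rightarrow H^i(\cC),\qquad [a]\otimes[b]\mapsto[a\otimes b],
\end{equation*}
is well defined, because it does not depend on the chosen representatives. Indeed, $\delta a'\otimes b=\delta(a'\otimes b)$ when $\delta b=0$, and $a\otimes\delta b'=\pm\delta(a\otimes b')$ when $\delta a=0$. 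We will show $\Phi$ is bijective; its inverse is then the map stated in the lemma, read on product representatives.

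\textbf{Splitting the factors.} Since we work over a field, each $\cA^j$ decomposes as
\begin{equation*}
  \cA^j=B^j(\cA)\oplus H_A^j\oplus D_A^j,
\end{equation*}
where $Z^j(\cA)=B^j(\cA)\oplus H_A^j$ and $\delta$ maps $D_A^j$ isomorphically onto $B^{j+1}(\cA)$. Thus $\cA^*\cong\mathcal{H}_A\oplus\mathcal{E}_A$ as complexes. Here $\mathcal{H}_A$ is $H_A$ with zero differential. The complex $\mathcal{E}_A$ is a direct sum of two-term complexes $D^j_A\xrightarrow{\sim}B^{j+1}(\cA)$, so it is acyclic and in fact contractible: it has a homotopy $h$ with $\delta h+h\delta=\mathrm{id}$. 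We decompose $\cB^*$ in the same way. Because the tensor product distributes over direct sums of complexes,
\begin{equation*}
  \cC^*\cong(\mathcal{H}_A\otimes\mathcal{H}_B)\oplus(\mathcal{H}_A\otimes\mathcal{E}_B)\oplus(\mathcal{E}_A\otimes\mathcal{H}_B)\oplus(\mathcal{E}_A\otimes\mathcal{E}_B).
\end{equation*}

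\textbf{Computing cohomology.} The first summand has zero differential, so its cohomology is $\bigoplus_j H_A^j\otimes H_B^{i-j}\cong\bigoplus_j H^j(\cA)\otimes H^{i-j}(\cB)$. Every other summand contains a contractible factor, and so is itself contractible. For example, on $\mathcal{E}_A\otimes X$ the map $h\otimes\mathrm{id}$ is a contracting homotopy; checking this is a routine sign computation using \Cref{def:tensorcc}. Hence those summands have zero cohomology. Under this decomposition, $\Phi$ sends $[a]\otimes[b]$ with $a\in H_A$, $b\in H_B$ to the class of $a\otimes b$ in the first summand, so $\Phi$ is an isomorphism.

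The main point needing care is that the splitting is an isomorphism of complexes and not merely of graded spaces. This holds because $\delta$ vanishes on $B\oplus H$ and sends $D$ into $B$. The remaining work is the sign bookkeeping in the contracting homotopy for the tensor product. If these are finite-dimensional based complexes, a dimension count gives an alternative to checking the contraction.
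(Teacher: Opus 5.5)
Your proof is correct. The paper does not prove this lemma; it quotes it as the standard K\"{u}nneth formula. So there is no argument in the paper to compare against, and what you give is the usual self-contained proof over a field.

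Your splitting $\cA^j=B^j(\cA)\oplus H^j_A\oplus D^j_A$ is exactly where the field hypothesis is used. Over a ring such as $\bZ$ neither $Z^j=B^j\oplus H^j$ nor $\cA^j=Z^j\oplus D^j$ need split, and a Tor correction appears.

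The contracting homotopies check out:
\begin{itemize}
\item On $\mathcal{E}_A\otimes X$, the map $h\otimes\mathrm{id}$ works with no extra sign: the two cross terms $(-1)^{j-1}h(a)\otimes\delta x$ and $(-1)^{j}h(a)\otimes\delta x$ cancel.
\item On $\mathcal{H}_A\otimes\mathcal{E}_B$ you need the Koszul-signed map $x\otimes b\mapsto(-1)^{|x|}x\otimes h(b)$. Without the sign you get $\delta H+H\delta=(-1)^{|x|}\mathrm{id}$ instead of $\mathrm{id}$. This is the sign bookkeeping you flagged.
\end{itemize}

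You also read the direction of the map correctly. The lemma's formula only prescribes values on classes $a\otimes b+B^i(\cC)$ of product cocycles. It therefore defines a map on all of $H^i(\cC)$ only once you know that such classes span and that the cross product $\Phi$ is injective. Proving $\Phi$ bijective and taking $\Phi^{-1}$ makes the statement precise. The paper's citation leaves that implicit, so this identification is what your argument adds.

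One caveat: your closing remark that a dimension count could replace the contraction argument is not really an alternative. A count would require knowing $\dim H^i(\cC)$, which is exactly what is being computed. Nothing in your main argument depends on that remark.
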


The following lemma shows that a quantum code has an efficient decoder (\Cref{def:qdec}) if it is given by a tensor product of chain complexes specifying efficiently-decodable classical codes (\Cref{def:cdec}). The case where the factor complexes are 1-dimensional was shown by \cite{quintavalle_reshape_2022}.

\begin{lemma}[Generalization of \cite{quintavalle_reshape_2022}]
  \label{lem:qproddec}
  Let $\cA^*$ and $\cB^*$ be $r_{\cA}$- and $r_{\cB}$-dimensional cochain complexes respectively, and let $\cC^*=\cA^*\otimes\cB^*$.
  Assume that the classical codes\footnote{Here $H^0(\cA)$ and $H_{r_{\cB}}(\cB)$ can be viewed as classical codes (i.e.~subspaces) rather than quotient spaces because $B^0(\cA)=0$ and $B_{r_{\cB}}(\cB)=0$.} $H^0(\cA)$ and $H_{r_{\cB}}(\cB)$
both have radius-$\rho$ time-$T$ decoders.
  Let $Q$ be the CSS code at level $r_{\cB}$ of $\cC$, and define the subsystem $L=(L_X,L_Z)$, where
  \begin{equation*}
L_X=H_0(\cA)\otimes H_{r_{\cB}}(\cB)\qquad\text{and}\qquad\; L_Z=H^0(\cA)\otimes H^{r_{\cB}}(\cB).
  \end{equation*}
  Note that by the K\"{u}nneth formula (\Cref{lem:kunneth}) we may view $L_X$ and $L_Z$ as subspaces of $H_{r_{\cB}}(\cC)$ and $H^{r_{\cB}}(\cC)$ respectively. Then the quantum CSS subsystem code $(Q,L)$ has a radius-$\rho$ time-$\poly(T,|C|)$ decoder.\footnote{Recall here that $C$ denotes the set of basis elements in $\cC=\bF^C$.}
\end{lemma}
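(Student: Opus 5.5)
We follow the strategy of \cite{quintavalle_reshape_2022}: decode the $X$- and $Z$-parts separately, and reduce each to running the classical decoders along ``lines'' of the product. Since $\cC=\bF^{A\times B}$ with $A,B$ the bases of $\cA,\cB$, a cochain at level $r_{\cB}$ of $\cC$ is a collection of matrices $x_j\in\cA^j\otimes\cB^{r_{\cB}-j}$ for $j=0,\dots,\min(r_{\cA},r_{\cB})$.

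\textbf{$Z$-part.} We receive $a_Z=c_Z+e_Z$ with $c_Z\in Z^{r_{\cB}}(\cC)$, and must output $c_Z+L_X^\perp$. The plan is to reduce the task to recovering the component of $c_Z$ in $\cA^0\otimes\cB^{r_{\cB}}$ modulo an appropriate space. First, we show that $L_X^\perp\cap Z^{r_{\cB}}(\cC)$ contains $B^{r_{\cB}}(\cC)$ together with all cocycles whose K\"unneth class lies in the summands other than $H^0(\cA)\otimes H^{r_{\cB}}(\cB)$. Hence the coset $c_Z+L_X^\perp$ is determined by the pairing of $c_Z$ against $L_X=H_0(\cA)\otimes H_{r_{\cB}}(\cB)$. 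This pairing only sees the block $y:=(c_Z)_0\in\cA^0\otimes\cB^{r_{\cB}}$, modulo $\cA^0\otimes B^{r_{\cB}}(\cB)+B^0$-type contributions; since $B^0(\cA)=0$, these contributions reduce to $\cA^0\otimes B^{r_{\cB}}(\cB)$ plus images of $\delta^{\cA}$-terms. Concretely, we take the received block $(a_Z)_0$ and apply the coboundary $\delta^{\cA}_0\otimes I$ to see which columns fail to be $\cA$-cocycles, i.e.\ which columns fail to lie in $H^0(\cA)=\ker\delta^{\cA}_0$ after adjusting by $\delta^{\cB}$-images coming from $(a_Z)_1$.

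The simplest route is then to use Gaussian elimination to replace $a_Z$ by a cohomologous representative whose higher blocks vanish where possible. We then run the $H^0(\cA)$ decoder on each column of the $\cA^0\otimes\cB^{r_{\cB}}$ block, and read off the class in $H^{r_{\cB}}(\cB)$ row-wise. The error weight is not increased by more than what the decoder radius tolerates, because each column's error is bounded by $|e_Z|<\rho$.

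\textbf{$X$-part.} This is symmetric: we apply the decoder for $H_{r_{\cB}}(\cB)$ along the $\cB$-direction to the block $\cA_0\otimes\cB_{r_{\cB}}$, and pair against $L_Z$.

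\textbf{Main obstacle.} The hard step is justifying that the error restricted to the relevant block can be ``cleaned'' (moved by boundaries) without its per-line weight exceeding $\rho$, while retaining the correct logical class. We would prove this by induction on $r_{\cA}$, mimicking the reshape argument: at each step, restrict to a line, apply the classical decoder, and show via the K\"unneth decomposition that any residual lies in $L_X^\perp$ (respectively $L_Z^\perp$). The total running time is $\poly(T,|C|)$, since we make $|C|$ decoder calls plus linear algebra.
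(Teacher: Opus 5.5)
There is a genuine gap, and it is the step you flag as your ``main obstacle.'' Running the $H^0(\cA)$ decoder on each column of the $\cA^0\otimes\cB^{r_{\cB}}$ block is not justified, because those columns of the codeword are not codewords of $Z^0(\cA)$. Write $(c_Z)_0\in\cA^0\otimes\cB^{r_{\cB}}$ and $(c_Z)_1\in\cA^1\otimes\cB^{r_{\cB}-1}$ for the blocks. The cocycle condition only gives $(\delta^{\cA}\otimes I)(c_Z)_0=\mp(I\otimes\delta^{\cB})(c_Z)_1$.

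The trivial class already breaks this. Take $c_Z=\delta w$ with $w\in\cA^0\otimes\cB^{r_{\cB}-1}$. Then $(c_Z)_0=(I\otimes\delta^{\cB})w$, whose columns are essentially arbitrary vectors of $\cA^0$, so the classical decoder receives inputs nowhere near its code and may output anything.

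Your repair does not fix this. $a_Z$ is not a cocycle. Whatever $w$ you compute from $a_Z$, the new codeword $c_Z+\delta w$ has block-$1$ part $(a_Z+\delta w)_1-(e_Z)_1$. So the unknown error in block $1$ feeds through $\delta^{\cB}$ into the column constraints of block $0$. In general you also cannot make the block-$1$ part vanish by adding coboundaries. The induction on $r_{\cA}$ is not spelled out, so the core of the argument is missing. Your $X$-part has the same flaw: rows of the $\cA_0\otimes\cB_{r_{\cB}}$ block are not $\cB$-cycles, since $\partial^{\cA}$-images from the $\cA_1\otimes\cB_{r_{\cB}-1}$ block enter their boundary.

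The missing idea is to contract first and never decode individual columns.
\begin{enumerate}
\item Fix dual bases of representatives $a_i\in Z_0(\cA)$, $a^i\in Z^0(\cA)$ and $b_j\in Z_{r_{\cB}}(\cB)$, $b^j\in Z^{r_{\cB}}(\cB)$. Then $\{a_i\otimes b_j\}$ and $\{a^i\otimes b^j\}$ are dual bases of $L_X/Q_Z^\perp$ and $L_Z/Q_X^\perp$.
\item Hence the coset $c_Z+L_X^\perp$ is determined, by Gaussian elimination, from the values $\langle a_i\otimes b_j,c_Z\rangle=\langle a_i,(I\otimes b_j^\top)c_Z\rangle$.
\item Since $\partial^{\cB}b_j=0$, applying $I\otimes b_j^\top$ to the identity above kills the $\delta^{\cB}$ term. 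So $(I\otimes b_j^\top)c_Z\in Z^0(\cA)$ exactly.
\item The contracted error $(I\otimes b_j^\top)e_Z$ is supported on the rows where $e_Z$ is nonzero, so its weight is at most $|e_Z|<\rho$.
\item One call to the $H^0(\cA)$ decoder on $(I\otimes b_j^\top)a_Z$ for each $j$ therefore recovers $(I\otimes b_j^\top)c_Z$, and pairing with the $a_i$ finishes.
\end{enumerate}
The $X$-part is identical with $a^{i\top}\otimes I$, using $\delta^{\cA}a^i=0$ to kill the $\partial^{\cA}$ term and the $H_{r_{\cB}}(\cB)$ decoder. This is the paper's argument; it needs no cleaning of the error, no cohomologous representative, and no induction.
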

\begin{proof}
  We will present the $Z$-decoding algorithm for $(Q,L)$, which takes as input the noisy codeword $x=c+e$ for $c\in Z^{r_{\cB}}(\cC)$ and $e\in\cC^{r_{\cB}}$ of weight $|e|<\rho$, and outputs the coset $c+L_X^\perp$. The $X$-decoding algorithm is analogous (i.e.~is obtained by running the $Z$-decoding algorithm on the dual chain complex $\cC_*=\cA_*\times\cB_*$).

  Letting $k_{\cA}=\dim(H^0(\cA))$ and $k_{\cB}=\dim(H^{r_{\cB}}(\cB))$, we fix representative elements of dual homology-cohomology bases
  \begin{align*}
    &a_1,\dots,a_{k_{\cA}} \in Z_0(\cA), \hspace{1em} a^1,\dots,a^{k_{\cA}} \in Z^0(\cA), \\
    &b_1,\dots,b_{k_{\cB}} \in Z_{r_{\cB}}(\cB), \hspace{1em} b^1,\dots, b^{k_{\cB}} \in Z^{r_{\cB}}(\cB),
  \end{align*}
  so that $\langle a_j,a^{j'}\rangle=\1_{j=j'}$ and $\langle b_j,b^{j'}\rangle=\1_{j=j'}$ (see e.g.~\cite[Lemma~3.9]{golowich_quantum_2025-2}). Then
  \begin{equation*}
    (a_i\otimes b_j\in Z_{r_{\cB}}(\cC))_{i\in[k_{\cA}],j\in[k_{\cB}]}, \hspace{1em} (a^i\otimes b^j\in Z^{r_{\cB}}(\cC))_{i\in[k_{\cA}],j\in[k_{\cB}]}
  \end{equation*}
  form dual bases for $L_X/Q_Z^\perp,\; L_Z/Q_X^\perp$, so that 
  \begin{equation*}
    \langle a_i\otimes b_j,a^{i'}\otimes b^{j'}\rangle=\langle a_i,a^{i'}\rangle\langle b_j, b^{j'}\rangle=\1_{i=i'}\1_{j=j'}=\1_{(i,j)=(i',j')}.
  \end{equation*}
  Therefore, the coset $c+L_X^\perp\in Q_Z/L_X^\perp$ is uniquely determined (and can be computed in time $\poly(|C|)$ via Gaussian elimination) from the $k_{\cA}k_{\cB}$ values $\langle a_i\otimes b_j,c\rangle$ across all $(i,j)\in[k_{\cA}]\times[k_{\cB}]$.

  Therefore given input $x=c+e$, to compute the coset $c+L_X^\perp$ it suffices to compute $(I\otimes b_j^\top)c$ for every $j\in[k_{\cB}]$, as then $\langle a_i\otimes b_j,c\rangle=\langle a_i,(I\otimes b_j^\top)c\rangle$. For this purpose, we first compute
  \begin{equation*}
    (I\otimes b_j^\top)x = (I\otimes b_j^\top)c + (I\otimes b_j^\top)e.
  \end{equation*}
  Because $b_j\in Z_{r_{\cB}}(\cB)$ and $c\in Z^{r_{\cB}}(\cC)=Z^{r_{\cB}}(\cA\otimes\cB)$, the K\"{u}nneth formula (\Cref{lem:kunneth}) implies that $(I\otimes b_j^\top)c\in H^0(\cA)$. Meanwhile, $|(I\otimes b_j^\top)e|\leq|e|<\rho$. Hence we may run our radius-$\rho$ time-$T$ decoder for $H^0(\cA)$ on $(I\otimes b_j^\top)x$ to recover $(I\otimes b_j^\top)c$. The total running time to compute the values $\langle a_i\otimes b_j,c\rangle$ across all $(i,j)\in[k_{\cA}]\times[k_{\cB}]$ is therefore $\poly(T,|C|)$, so we can compute the coset $c+L_X^\perp$ in time $\poly(T,|C|)$, as desired.
\end{proof}

We specifically construct our quantum codes as tensor products of 1-dimensional complexes associated to classical Tanner codes, as defined below.

\begin{definition}
  \label{def:tancomplex}
  Let $\Gamma$ be a directed graph. For a field $\bF$,
  we define a \emph{system of local codes $\cF$ on $\Gamma$} to assign to every vertex $v\in V(\Gamma)$ a local full-rank generator matrix $\cF_{v\rightarrow E(v)}:\cF_v\rightarrow\bF^{E(v)}$ for some vector space $\cF_v=\bF^{m_v}$ of dimension $m_v\leq|E(v)|$, so that $\im(\cF_{v\rightarrow E(v)})\subseteq\bF^{E(v)}$ is a local code supported on the edges $E(v)\subseteq E(\Gamma)$ incident to $v$. For $e\in E(v)$, we let $\cF_e=\bF$, and we let $\cF_{v\rightarrow e}=\cF_{v\rightarrow E(v)}|_e:\cF_v\rightarrow\cF_e$ denote the restriction to edge $e$.

  We then define the associated Tanner code $\Tan(\Gamma,\cF)\subseteq\bF^{E(\Gamma)}$ to contain all vectors in $\bF^{E(\Gamma)}$ whose restriction to every neighborhood $E(v)$ lies in the specified local code, that is,
  \begin{equation*}
    \Tan(\Gamma,\cF) = \{c\in\bF^{E(\Gamma)}:c|_{E(v)}\in\im(\cF_{v\rightarrow E(v)})\;\forall v\in V(\Gamma)\}.
  \end{equation*}
  We also define an associated 1-dimensional cochain complex $\cC^*(\Gamma,\cF)$ by
  \begin{align*}
    \cC^0(\Gamma,\cF) &= \bigoplus_{v\in V(\Gamma)}\cF_v \\
    \cC^1(\Gamma,\cF) &= \bigoplus_{e\in E(\Gamma)}\cF_e = \bF^{E(\Gamma)},
  \end{align*}
  such that for $c\in\cC^0(\Gamma,\cF)$ and $e=(v,v')\in E(\Gamma)$,
  \begin{align}
    \label{eq:tccob}
    (\delta(c))_e &= \cF_{v\rightarrow e}(c_v)-\cF_{v'\rightarrow e}(c_{v'}).
  \end{align}
\end{definition}



We use the following basic notation regarding systems of local codes.

\begin{definition}
  For a system of local codes $\cF$, we let $\cF^\perp$ be a system in which $\cF^\perp_{v\rightarrow E(v)}$ is a generator matrix for the dual local code $\im(\cF^\perp_{v\rightarrow E(v)}) = \im(\cF_{v\rightarrow E(v)})^\perp$ for each vertex $v$.
  
  For $c,c'\in\bF^n$, we let $c*c'=(c_ic'_i)_{i\in[n]}\in\bF^n$ denote the component-wise product. We extend this notation to classical codes $C,C'\subseteq\bF^n$ by letting $C*C'=\spn\{c*c':c\in C,\;c'\in C'\}$, and then to systems of local codes $\cF,\cF'$ by letting $\cF''=\cF*\cF'$ be a system with $\im(\cF''_{v\rightarrow E(v)})=\im(\cF_{v\rightarrow E(v)})*\im(\cF'_{v\rightarrow E(v)})$ for every $v$.
\end{definition}


\Cref{fact:tantocc} below relates a Tanner code to the space of (co)cycles on associated (co)chain complexes.

\begin{fact}
  \label{fact:tantocc}
  Let $\Gamma$ be a graph with system of local codes $\cF$ over a field $\bF$. Assume that either $\Gamma$ is bipartite (with all edges directed from left to right), or $\bF$ has characteristic $2$. Then
  \begin{equation}
    \label{eq:tantocc}
    Z_1(\Gamma,\cF^\perp) = \Tan(\Gamma,\cF) \cong Z^0(\Gamma,\cF),
  \end{equation}
  where the isomorphism maps $c\in\Tan(\Gamma,\cF)$ to $z\in Z^0(\Gamma,\cF)$ satisfying $c|_{E(v)}=\cF_{v\rightarrow E(v)}(z_v)$ for every $v\in V(\Gamma)$.
  In particular, if $\Gamma$ has maximum degree\footnote{Here we count both incoming and outgoing edges in the degree.} $\Delta$ and $\Tan(\Gamma,\cF)$ has distance $d$, then
  \begin{equation}
    \label{eq:tandis}
    d_1(\Gamma,\cF^\perp) = d \leq \frac{\Delta}{2}\cdot d^0(\Gamma,\cF)
  \end{equation}
  Additionally, if $\Tan(\Gamma,\cF)$ has a radius-$\rho$ time-$T$ decoder, then $Z^0(\Gamma,\cF)$ has a radius-$\rho/\Delta$ time-$(T+O(\Delta^3|V(\Gamma)|))$-decoder.
\end{fact}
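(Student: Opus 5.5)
We prove the three claims of \Cref{fact:tantocc} in order: the identification $Z_1(\Gamma,\cF^\perp)=\Tan(\Gamma,\cF)$, the isomorphism with $Z^0(\Gamma,\cF)$, and then the distance and decoding consequences.

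\textbf{The identification $Z_1(\Gamma,\cF^\perp)=\Tan(\Gamma,\cF)$.} The boundary map $\partial_1=\delta_0^\top$ of $\cC(\Gamma,\cF^\perp)$ sends $c\in\bF^{E(\Gamma)}$ to the vector whose $v$-component is $\pm(\cF^\perp_{v\rightarrow E(v)})^\top(c|_{E(v)})$. Here the sign is $+$ if $v$ is the tail of the incident edges and $-$ if $v$ is the head. In the bipartite case each vertex is entirely a tail or entirely a head, so the sign is uniform on $E(v)$. In characteristic $2$ signs are irrelevant. Hence $\partial c=0$ if and only if $c|_{E(v)}\perp\im(\cF^\perp_{v\rightarrow E(v)})=\im(\cF_{v\rightarrow E(v)})^\perp$ for every $v$, which by finite-dimensional duality is exactly $c\in\Tan(\Gamma,\cF)$.

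\textbf{The isomorphism with $Z^0(\Gamma,\cF)$.} By \Cref{eq:tccob}, $z\in Z^0(\Gamma,\cF)$ exactly when $\cF_{v\rightarrow e}(z_v)=\cF_{v'\rightarrow e}(z_{v'})$ on every edge $e=(v,v')$. Given such a $z$, define $c_e$ as this common value. Then $c|_{E(v)}=\cF_{v\rightarrow E(v)}(z_v)$ for every $v$, so $c\in\Tan(\Gamma,\cF)$. Conversely, given $c\in\Tan(\Gamma,\cF)$, full rank of each $\cF_{v\rightarrow E(v)}$ yields a unique $z_v$ with $\cF_{v\rightarrow E(v)}(z_v)=c|_{E(v)}$, and the resulting $z$ is a cocycle. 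The two maps are linear and mutually inverse, which gives \Cref{eq:tantocc}.

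\textbf{Distance.} In a $1$-dimensional complex there are no $2$-chains and no $(-1)$-cochains, so $B_1=0$ and $B^0=0$. Therefore $d_1(\Gamma,\cF^\perp)$ is simply the minimum weight of a nonzero element of $\Tan(\Gamma,\cF)$, namely $d$. For the inequality, take a nonzero $z\in Z^0$ and its corresponding codeword $c$. Each edge in $\operatorname{supp}(c)$ has both endpoints in $\operatorname{supp}(z)$, since $c|_{E(v)}=0$ whenever $z_v=0$. Counting edge–endpoint incidences gives
\[
2|c|\le\sum_{v\in\operatorname{supp}(z)}\deg(v)\le\Delta|z|.
\]
Thus $d\le|c|\le\frac{\Delta}{2}|z|$, and minimizing over $z$ gives $d\le\frac{\Delta}{2}d^0$. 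One detail: Hamming weight of $z$ here should count vertices $v$ with $z_v\neq 0$ rather than coordinates of $\bF^{m_v}$. This is what is needed, since the coordinate weight is at least the vertex weight. I expect this weight convention to be the only real subtlety, and I will read $|z|$ as the number of nonzero coordinates, which is at least the vertex-support size, so the inequality still holds.

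\textbf{Decoding.} Given $z+e$ with $|e|<\rho/\Delta$, set $x_e=\cF_{v\rightarrow e}((z+e)_v)$, using the tail endpoint $v$ of $e$. Since $\cF_{v\rightarrow e}(z_v)=c_e$, this gives $x=c+e'$ where $e'$ is supported on edges incident to vertices in $\operatorname{supp}(e)$. Hence $|e'|\le\Delta|e|<\rho$, computed in time $O(\Delta^2|V|)$. Run the decoder for $\Tan(\Gamma,\cF)$ to recover $c$ in time $T$. Then recover each $z_v$ from $c|_{E(v)}$ by solving the full-rank linear system $\cF_{v\rightarrow E(v)}(z_v)=c|_{E(v)}$ via Gaussian elimination, at cost $O(\Delta^3)$ per vertex, for total time $T+O(\Delta^3|V(\Gamma)|)$.
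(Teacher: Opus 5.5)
Your proposal is correct and follows essentially the same route as the paper's proof. The shared steps are: the sign-uniformity argument giving $Z_1(\Gamma,\cF^\perp)=\Tan(\Gamma,\cF)$; the tail-endpoint map $z\mapsto c$ and its inverse via injectivity of $\cF_{v\rightarrow E(v)}$; the double-count of edge endpoints giving $d^0(\Gamma,\cF)\geq 2d/\Delta$; and the decoder that pushes $z+e$ forward to edges, decodes $\Tan(\Gamma,\cF)$, and inverts locally. Your remarks that $B_1=B^0=0$ in a $1$-dimensional complex, and that coordinate weight is at least vertex weight, are left implicit in the paper, and you handle both correctly.
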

\begin{proof}
  By definition $Z_1(\Gamma,\cF^\perp)$ equals the space of vectors $c\in\bF^{E(\Gamma)}$ satisfying
  \begin{equation}
    \label{eq:tcperp}
    (\cF^\perp_{v\rightarrow E(v)})^\top(c|_{E(v)}) = 0
  \end{equation}
  for every vertex $v$, meaning that the restriction of $c$ to the neighborhood $E(v)$ lies in $\im(\cF^\perp_{v\rightarrow E(v)})^\perp=\im(\cF_{v\rightarrow E(v)})$. Here we do not need to impose additional signs on the local code coordinates because if $\Gamma$ is bipartite, then each vertex $v$ only sees edges of a single orientation (incoming or outgoing), so the sign in the $v\rightarrow e$ term of \Cref{eq:tccob} is the same for all $e\in E(v)$. Meanwhile, if $\Gamma$ is not bipartite but $\bF$ has characteristic $2$, then all signs are $+1=-1$. Hence it follows by \Cref{eq:tcperp} that $Z_1(\Gamma,\cF^\perp)$ consists of all vectors $c\in\Tan(\Gamma,\cF)$, so the equality in \Cref{eq:tantocc} holds.

  The mapping from $c\in\Tan(\Gamma,\cF)$ to $z\in Z^0(\Gamma,\cF)$ in \Cref{eq:tantocc} is an isomorphism because it is invertible. Specifically, let $\Phi:C^0(\Gamma,\cF)\rightarrow\bF^{E(\Gamma)}$ be the map that sends $z\in C^0(\Gamma,\cF)$ to $c=\Phi(z)\in\bF^{E(\Gamma)}$ whose value on every edge $e=(v,v')$ is $c_e=\cF_{v\rightarrow e}(z_v)$. If $z\in Z^0(\Gamma,\cF)$ is a cocycle, then we have $c_e=\cF_{v\rightarrow e}(z_v)=\cF_{v'\rightarrow e}(z_{v'})$, so $c\in\Tan(\Gamma,\cF)$, and hence the restriction $\Phi|_{Z^0(\Gamma,\cF)}:Z^0(\Gamma,\cF)\rightarrow\Tan(\Gamma,\cF)$ indeed provides the desired inverse to the map $\Tan(\Gamma,\cF)\rightarrow Z^0(\Gamma,\cF)$ above.

  The inequality in \Cref{eq:tandis} holds because for every nonzero $c\in\Tan(\Gamma,\cF)$ with associated $0$-cocycle $z\in Z^0(\Gamma,\cF)$, then if $c_e\neq 0$ for an edge $e=(v,v')\in E(\Gamma)$, we must have $z_v\neq 0$ and $z_{v'}\neq 0$. Hence each nonzero edge in $c$ is incident to two nonzero vertices in $z$, and every vertex is incident to $\leq\Delta$ edges, so $z$ must have $\geq 2|c|/\Delta$ nonzero vertices, and thus $|z|\geq 2|c|/\Delta$.

  Assuming $\Tan(\Gamma,\cF)$ has a radius-$\rho$ time-$T$ decoder, we obtain a decoder for $Z^0(\Gamma,\cF)$ by first applying the map $\Phi$ defined above to the input $a=c+e$ where $c\in Z^0(\Gamma,\cF)$, $|e|<\rho/\Delta$, and then applying the decoder for $\Tan(\Gamma,\cF)$ before applying the inverse isomorphism $\Phi|_{Z^0(\Gamma,\cF)}^{-1}:\Tan(\Gamma,\cF)\xrightarrow{\sim}Z^0(\Gamma,\cF)$. Because $\Phi$ can only map each element in the support of $e$ (which lies on some vertex) to at most $\Delta$ edges, we run the decoder for $\Tan(\Gamma,\cF)$ on a corrupted codeword of $\Tan(\Gamma,\cF)$ with error weight $\leq\Delta\cdot|e|<\rho$, so the decoder gives the correct output. Each map takes time $O(\Delta^3|V(\Gamma)|)$ to implement, for a total running time of $T+O(\Delta^3|V(\Gamma)|)$.
\end{proof}

\begin{remark}
  \label{remark:reverseedges}
  Reversing the direction of any set of edges in $\Gamma$ does not change the Tanner code $\Tan(\Gamma,\cF)$, but does change the signs in \Cref{eq:tccob} defining the coboundary map of $\cC(\Gamma,\cF)$. Hence a Tanner code $\Tan(\Gamma,\cF)$ on a bipartite graph $\Gamma$ is preserved if we re-orient all edges to be directed from left to right, as required in \Cref{fact:tantocc}.
\end{remark}


\section{High-Yield Transversal Gates via Multiplication Property}
\label{sec:highyield}
In this section, we give a general construction of quantum codes supporting transversal $C^{r-1}Z$ gates from products of classical Tanner codes. We state this result in \Cref{sec:hystatement}. We then describe the necessary cup product machinery in \Cref{sec:hycup} that we subsequently use in \Cref{sec:hyproof} to prove the result. We show in \Cref{sec:addressable} how to extend this result to allow for addressable $C^{r-1}Z$ gates, meaning that we can additionally choose which subset of logical qudits on which to perform $C^{r-1}Z$.

\subsection{Result Statement}
\label{sec:hystatement}
Fix a directed graph $\Gamma$ of maximum degree $\Delta\geq 2$. For an integer $r\geq 2$, let $\cF^{(1)},\dots,\cF^{(r)}$ be systems of local codes on $\Gamma$ over a field $\bF$. Assume that either $\Gamma$ is bipartite (with all edges directed from left to right), or $\bF$ has characteristic $2$ (so that \Cref{fact:tantocc} holds; see also \Cref{remark:reverseedges}). Let $M\subseteq E(\Gamma)$ be an extendable set for the classical code $\Tan(\Gamma,\cF^{(i)})$ for every $1\leq i\leq r-1$.

We define product cochain complexes
\begin{align*}
  \cA^{(1)} &= \cC^*(\Gamma,{\cF^{(r)}}^\perp) \otimes \cC^*(\Gamma,\cF^{(1)}) \otimes \cC^*(\Gamma,\cF^{(2)}) \otimes \cdots \otimes \cC^*(\Gamma,\cF^{(r-1)}) \\
  \cA^{(2)} &= \cC^*(\Gamma,\cF^{(r-1)}) \otimes \cC^*(\Gamma,{\cF^{(r)}}^\perp) \otimes \cC^*(\Gamma,\cF^{(1)}) \otimes \cdots \otimes \cC^*(\Gamma,\cF^{(r-2)}) \\
            &\vdots \\
  \cA^{(r)} &= \cC^*(\Gamma,\cF^{(1)}) \otimes \cC^*(\Gamma,\cF^{(2)}) \otimes \cdots \otimes \cC^*(\Gamma,\cF^{(r-1)}) \otimes \cC^*(\Gamma,{\cF^{(r)}}^\perp),
\end{align*}
so that $\cA^{(i)}$ is given by cyclically permuting the tensor factors of $\cA^{(1)}$ by $i-1$ positions.
We will construct transversal $C^{r-1}Z$ gates across the $r$ codes associated to level $1$ of these $r$ cochain complexes $\cA^{(1)},\dots,\cA^{(r)}$, respectively.

\begin{theorem}
  \label{thm:highyield}
  If it holds for each $v\in V(\Gamma)$ that
  \begin{equation}
    \label{eq:hymult}
    \im(\cF^{(1)}_{v\rightarrow E(v)})*\cdots*\im(\cF^{(r-1)}_{v\rightarrow E(v)})\subseteq\im(\cF^{(r)}_{v\rightarrow E(v)}),
  \end{equation}
  then the quantum codes $Q^{(i)}$ at level $1$ of $\cA^{(i)}$ for $i\in[r]$ have subsystems $L^{(i)}$ that support a transversal $C^{r-1}Z$ gate of sparsity $\leq(4\Delta)^{r^2}$ and logical yield $\geq|M|^r$. Furthermore, if each classical code $\Tan(\Gamma,\cF^{(j)})$ has distance $d^{(j)}$ and supports a radius-$\rho^{(j)}$ time-$\poly(\Delta,|V(\Gamma)|)$ decoder, then $(Q^{(i)},L^{(i)})$ is a quantum subsystem code of locality $\leq r\Delta^2$, $X$-distance $d_X\geq d^{(r)}$, and $Z$-distance $d_Z\geq(2/\Delta)^{r-1}\cdot d^{(1)}d^{(2)}\cdots d^{(r-1)}$, which supports a radius-$\min\{\rho^{(1)}\cdots\rho^{(r-1)}/\Delta^{r-1},\;\rho^{(r)}\}$ time-$\poly(\Delta,|V(\Gamma)|)^r$ decoder.
\end{theorem}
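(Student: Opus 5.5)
I plan to prove \Cref{thm:highyield} in three stages: build a cup product on the 1-dimensional factor complexes, extend it to the tensor-product complexes $\cA^{(i)}$, and then read off the distance and decoding guarantees from the K\"unneth formula together with \Cref{lem:qproddec} and \Cref{fact:tantocc}.

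\emph{Cup product on the factors.} Write $\cC_j=\cC^*(\Gamma,\cF^{(j)})$ for $j<r$ and $\cC_r=\cC^*(\Gamma,{\cF^{(r)}}^\perp)$. I would define a multilinear ``cup'' on 0-cochains at a vertex $v$ by
\begin{equation*}
(z^{(1)},\dots,z^{(r)})\mapsto\sum_{e\in E(v)}\prod_{j}\cF^{(j)}_{v\rightarrow e}(z^{(j)}_v).
\end{equation*}
By \Cref{eq:hymult}, the product of the first $r-1$ local codewords lies in $\im(\cF^{(r)}_{v\rightarrow E(v)})$, which is orthogonal to $\im({\cF^{(r)}}^\perp_{v\rightarrow E(v)})$. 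Hence this vertex-local sum vanishes. This is exactly the identity that makes the all-ones evaluation a cocycle/cycle pairing, following \cite{lin_transversal_2024,li_poincare_2025}.

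Concretely, I would define a form $\int$ on $r$-tuples of cochains, one from each $\cC_j$, whose total degree is $1$ (exactly one factor is a 1-cochain on an edge $e$, the others are 0-cochains on endpoints of $e$). The form multiplies the edge value by the local evaluations $\cF^{(j)}_{v\rightarrow e}$ at a fixed endpoint $v$ of $e$, then sums. The key lemma is that this form vanishes whenever one argument is a coboundary and all others are cocycles. After expanding $\delta$ via \Cref{eq:tccob}, the vanishing reduces to the local orthogonality above plus a Leibniz-type telescoping along each edge. The bipartite or characteristic-2 assumption is what controls the signs here.

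\emph{Extension to $\cA^{(i)}$.} Next I would extend the form to $\cA^{(1)}\times\cdots\times\cA^{(r)}$ by tensoring via \Cref{eq:cuptensorinf}. In tensor slot $t$, the $i$-th complex contributes factor $\cC_{\sigma_i(t)}$ with $\sigma_i$ a cyclic shift, so each slot sees each of $\cC_1,\dots,\cC_r$ exactly once. I define
\begin{equation*}
f=\pm\prod_{t=1}^r\int_t ,
\end{equation*}
the product of the per-slot forms, where the total degree is $r$ and each slot must have degree exactly $1$. Cohomology-invariance then follows slot by slot from the graded Leibniz rule and the factor lemma. This gives the form required by \Cref{def:cohinv}.

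\emph{Sparsity.} Each qudit of $\cA^{(i)}$ is a tensor of basis elements, and each slot's form couples it only to elements within distance 1 in $\Gamma$, with at most $(\Delta\cdot\max m_v)$ choices per slot. Since $m_v\le\Delta$, this yields a sparsity bound of the form $(4\Delta)^{r^2}$ after crude counting.

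\emph{Yield.} By K\"unneth, $H^1(\cA^{(i)})$ contains $\bigotimes_t H^{\deg}(\cC_{\sigma_i(t)})$ with the $\cC_r$-factor at degree $1$ and the others at degree $0$. So I take $L^{(i)}_Z$ to be this summand, with $L^{(i)}_X$ the matching homology summand as in \Cref{lem:qproddec}. By \Cref{fact:tantocc}, $H^0(\cC_j)\cong\Tan(\Gamma,\cF^{(j)})$ and $H^1(\cC_r)\cong\bF^{E}/\Tan(\Gamma,\cF^{(r)})$.

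For $e\in M$, I choose $z^{(j)}_e\in Z^0(\cC_j)$ whose Tanner codewords are the indicator $\1_e$ on $M$, using extendability of $M$, and set the degree-1 factor to be $\1_e$. Each slot's form then evaluates to $\prod_j c^{(j)}_{e_j}\cdot[\text{edge}=e]$, which is $\1_{e_1=\cdots=e_r=e}$. Tensoring over the $r$ slots gives diagonal tensors indexed by $M^r$, so the subrank is at least $|M|^r$. The main obstacle I expect is getting the signs and the choice of endpoint in the factor form right so that cohomology-invariance genuinely holds, rather than just the yield computation.

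\emph{Distance and decoding.} Take $\cA=\bigotimes_{j<r}\cC_j$, and regard the $\cC_r$ factor as $\cB$ (after a cyclic reordering, which is harmless). Then the bounds follow from \Cref{lem:qproddec}. The $Z$-distance uses the tensor code $\bigotimes_j Z^0(\cC_j)$, whose distance is at least $\prod_j(2/\Delta)d^{(j)}$ by \Cref{eq:tandis}. The $X$-distance uses $H_1(\cC_r)=\Tan(\Gamma,\cF^{(r)})$, giving $d_X\ge d^{(r)}$. The decoders come from \Cref{fact:tantocc} combined with \Cref{lem:tensordec}, and the locality $\le r\Delta^2$ comes from the coboundary of a tensor of $r$ factors.
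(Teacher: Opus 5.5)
Your plan has a genuine gap at cohomology-invariance, and it is exactly the endpoint issue you flagged: evaluating every 0-cochain at one fixed endpoint of $e$ does not give an invariant form. Your factor lemma is true, because there the 0-cochains are cocycles and both endpoints give the same value. But it does not give invariance on $\cA^{(1)}\times\cdots\times\cA^{(r)}$.

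Coboundaries $\delta(x_1\otimes\cdots\otimes x_r)$ of a product complex, and general 1-cocycles there, have slot components that are arbitrary cochains rather than factor cocycles. So terms such as $f(b^{(1)},b^{(2)},\dots)$ with two coboundary arguments are never covered by the lemma. Handling them slot by slot needs a chain-level Leibniz identity for arbitrary cochains, and your product has none. Writing $c\smile' c'$ for the fixed-endpoint product, one gets
\[
\delta c\smile' c'+c\smile'\delta c'=\delta(c\smile' c')\pm\delta c*\delta c',
\]
so every slot carries a defect of the form $\pm\langle\delta x,\delta y\rangle$.

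This defect really breaks invariance. Take $\bF_2$, $r=3$, and a component of $\Gamma$ consisting of a single edge $e=(u,w)$. Let $\cF^{(1)}$ and $\cF^{(3)}$ be zero at $u$ and full at $w$, and let $\cF^{(2)}$ be full at both ends, so \eqref{eq:hymult} holds. Evaluating at $u$ kills every slot whose $\cC^*(\Gamma,\cF^{(1)})$-factor is a 0-cochain. Taking each $b^{(i)}$ to be $\delta$ of the all-ones 0-cochain of $\cA^{(i)}$ then gives $f(b^{(1)},b^{(2)},b^{(3)})=1\neq f(0,0,0)$. Evaluating at $w$ fails the same way, through the ${\cF^{(3)}}^\perp$-factor.

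The fix, which is the paper's route, is the ordered cup product of \Cref{def:tancup}. A 0-cochain to the left of the 1-cochain is evaluated at the tail $v$ of $e=(v,v')$, and one to the right at the head $v'$. This satisfies the Leibniz rule exactly for all cochains (\Cref{lem:tancupleib}), and the Koszul-signed extension of \Cref{lem:cuptensor} preserves it. Then $f=\langle\1_{E(\Gamma)}^{\otimes r},a^{(1)}\smile\cdots\smile a^{(r)}\rangle$ is invariant: \eqref{eq:hymult} makes $\1_{E(\Gamma)}$ a 1-cycle of $\cC^*(\Gamma,\cF^{(1)}*\cdots*\cF^{(r-1)}*{\cF^{(r)}}^\perp)$, so $\1_{E(\Gamma)}^{\otimes r}$ annihilates $B^r$. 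Your choice of $L^{(i)}$, the yield computation and the sparsity count survive unchanged, since on pure tensors of factor cocycles the choice of endpoint is irrelevant.

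Two smaller corrections:
\begin{itemize}
\item $H^1(\cC^*(\Gamma,{\cF^{(r)}}^\perp))=\bF^{E(\Gamma)}/\Tan(\Gamma,\cF^{(r)})^\perp$, not the quotient by $\Tan(\Gamma,\cF^{(r)})$.
\item \Cref{lem:qproddec} supplies only the decoders. The distance bounds need a separate contraction argument: pair a logical operator with part of a pure-tensor representative of $L^{(i)}_Z$ (resp.\ $L^{(i)}_X$), and use K\"unneth to obtain a nonzero element of $Z_1(\Gamma,{\cF^{(r)}}^\perp)$ (resp.\ of $Z^0(\Gamma,\cF^{(1)})\otimes\cdots\otimes Z^0(\Gamma,\cF^{(r-1)})$).
\end{itemize}
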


\subsection{Cup Products}
\label{sec:hycup}
To prove \Cref{thm:highyield}, we will construct the desired cohomology-invariant forms using \emph{cup products} on cochain complexes:

\begin{definition}
  \label{def:cup}
  For cochain complexes $\cC^*,{\cC'}^*,{\cC''}^*$,
  a \emph{cup product} is a bilinear operation $\smile:\cC^i\times{\cC'}^j\rightarrow{\cC''}^{i+j}$ for every $i,j\in\bZ$ that satisfies the following \emph{Leibniz rule} for every $c\in\cC^i$, $c'\in{\cC'}^j$:
  \begin{equation}
    \label{eq:leib}
    \delta(c\smile c') = \delta(c)\smile c' + (-1)^ic\smile\delta(c').
  \end{equation}
\end{definition}

Cup products have long been studied in algebraic topology, see e.g.~\cite{hatcher_algebraic_2001}. We only provide the minimal treatment necessary for our purposes.

The Leibniz rule ensures the following crucial cohomology-invariance property:

\begin{fact}
  \label{fact:cupcohom}
  A cup product satisfying the Leibniz rule as in \Cref{def:cup} induces a well-defined bilinear form on cohomology. That is, for every $c\in Z^i(\cC)$, $c'\in Z^j(\cC')$,
  \begin{equation*}
    (c+B^i(\cC))\smile(c'+B^j(\cC')) \subseteq c\smile c'+B^{i+j}(\cC'').
  \end{equation*}
\end{fact}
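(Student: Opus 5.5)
The plan is to work at the cochain level and then pass to cohomology. Take cocycles $c\in Z^i(\cC)$ and $c'\in Z^j(\cC')$, together with arbitrary coboundaries $b=\delta(x)\in B^i(\cC)$ with $x\in\cC^{i-1}$ and $b'=\delta(x')\in B^j(\cC')$ with $x'\in{\cC'}^{j-1}$. We need to show that $(c+b)\smile(c'+b')-c\smile c'$ lies in $B^{i+j}(\cC'')$. By bilinearity of $\smile$, this difference expands as
\begin{equation*}
  c\smile\delta(x') + \delta(x)\smile c' + \delta(x)\smile\delta(x'),
\end{equation*}
so it suffices to show that each of these three terms is a coboundary.

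Each term is handled by the Leibniz rule \Cref{eq:leib}, using that the relevant inputs are cocycles. For the first term, apply \Cref{eq:leib} to $c\smile x'$. Since $\delta(c)=0$, this gives $\delta(c\smile x')=(-1)^i c\smile\delta(x')$, hence $c\smile\delta(x')=(-1)^i\delta(c\smile x')\in B^{i+j}(\cC'')$. For the second term, apply \Cref{eq:leib} to $x\smile c'$. Since $\delta(c')=0$, this gives $\delta(x\smile c')=\delta(x)\smile c'$, which is again a coboundary. For the third term, apply \Cref{eq:leib} to $x\smile\delta(x')$. Using $\delta^2=0$, this yields $\delta(x\smile\delta(x'))=\delta(x)\smile\delta(x')$. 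Since $B^{i+j}(\cC'')$ is a subspace, the sum of the three terms lies in it.

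The same computation with $b=b'=0$ also gives the preliminary fact that $c\smile c'$ is itself a cocycle: $\delta(c\smile c')=\delta(c)\smile c'\pm c\smile\delta(c')=0$. Hence the induced map $H^i(\cC)\times H^j(\cC')\to H^{i+j}(\cC'')$ is well defined.

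There is no real obstacle here. The only point requiring care is the sign bookkeeping and the degree conventions: the cup product must be defined on the degree $(i-1,j)$, $(i,j-1)$ and $(i-1,j-1)$ components, which is guaranteed because \Cref{def:cup} defines $\smile$ for all $i,j\in\bZ$.
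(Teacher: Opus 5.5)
Your proof is correct and follows essentially the same route as the paper: both apply the Leibniz rule to $x\smile c'$ and $c\smile x'$, and use $\delta(c)=\delta(c')=0$ to see that the perturbation terms are coboundaries. Your explicit treatment of the cross term $\delta(x)\smile\delta(x')$, via $\delta(x\smile\delta(x'))$ and $\delta^2=0$, makes precise a step the paper leaves implicit; there it is absorbed by perturbing one argument at a time, using that $c+\delta(x)$ is again a cocycle.
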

\begin{proof}
  For every $a\in\cC^{i-1}$ we have
  \begin{equation*}
    \delta(a)\smile c' = \delta(a\smile c')-(-1)^{i-1}a\smile\delta(c') = \delta(a\smile c') \in B^{i+j}(\cC''),
  \end{equation*}
  where the first equality above holds by the Leibniz rule, and the second equality holds because $c'\in Z^j(\cC')$. Hence $(c+\delta(a))\smile c'$ differs from $c\smile c'$ by the coboundary $\delta(a\smile c')$. Similar reasoning also implies that every element of $c\smile(c'+B^j(\cC'))$ differs from $c\smile c'$ by a coboundary.
\end{proof}

We now show how to construct cup products on Tanner code complexes.

\begin{definition}
  \label{def:tancup}
  Let $\cF,\cF'$ be systems of local codes on a graph $\Gamma$. For $i,j\in\bZ$, we define the cup product
  \begin{equation*}
    \smile:\cC^i(\Gamma,\cF)\times\cC^j(\Gamma,\cF')\rightarrow\cC^{i+j}(\Gamma,\cF''=\cF*\cF')
  \end{equation*}
  such that
  \begin{align*}
    \forall c\in\cC^0(\Gamma,\cF),\; c'\in\cC^0(\Gamma,\cF'),\; v\in V(\Gamma): \hspace{1em} \cF''_{v\rightarrow E(v)}((c\smile c')_v) &= \cF_{v\rightarrow E(v)}(c_v)*\cF'_{v\rightarrow E(v)}(c'_v) \\
    \forall c\in\cC^0(\Gamma,\cF),\; c'\in\cC^1(\Gamma,\cF'),\; e=(v,v')\in E(\Gamma): \hspace{1em} (c\smile c')_e &= \cF_{v\rightarrow e}(c_v) \cdot c'_e \\
    \forall c\in\cC^1(\Gamma,\cF),\; c'\in\cC^0(\Gamma,\cF'),\; e=(v,v')\in E(\Gamma): \hspace{1em} (c\smile c')_e &= c_e \cdot \cF'_{v'\rightarrow e}(c'_{v'}).
  \end{align*}
  Note that the first equality above uniquely determines $(c\smile c')_v$ because
$\cF''_{v\rightarrow E(v)}$ is by assumption a full-rank (injective) generator matrix.
\end{definition}

\begin{lemma}
  \label{lem:tancupleib}
  The cup product in \Cref{def:tancup} satisfies the Leibniz rule.
\end{lemma}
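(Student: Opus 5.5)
The plan is to verify the Leibniz rule \Cref{eq:leib} by a direct case analysis on the degrees $(i,j)$. The complexes $\cC^*(\Gamma,\cF)$, $\cC^*(\Gamma,\cF')$, $\cC^*(\Gamma,\cF'')$ are $1$-dimensional: only levels $0$ and $1$ are nonzero, and $\delta$ vanishes on level $1$. This means that almost all cases are trivial, and the only real content is the case $(i,j)=(0,0)$.

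\textbf{Trivial cases.} If $i$ or $j$ lies outside $\{0,1\}$, then every cochain involved is zero. If $i+j\geq 2$, then both sides of \Cref{eq:leib} lie in $\cC''^{i+j+1}=0$. The remaining trivial cases are $(i,j)=(0,1)$ and $(1,0)$.
\begin{itemize}
\item For $(0,1)$, the left side is $\delta$ applied to a $1$-cochain, hence zero. On the right, $\delta(c)\smile c'$ lies in level $2$, hence is zero, and $c\smile\delta(c')=c\smile 0=0$.
\item The case $(1,0)$ is symmetric.
\end{itemize}

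\textbf{The main case $(i,j)=(0,0)$.} Fix $c\in\cC^0(\Gamma,\cF)$, $c'\in\cC^0(\Gamma,\cF')$ and an edge $e=(v,v')$. I would first note the consequence of the defining equation in \Cref{def:tancup}: restricting the identity $\cF''_{v\rightarrow E(v)}((c\smile c')_v)=\cF_{v\rightarrow E(v)}(c_v)*\cF'_{v\rightarrow E(v)}(c'_v)$ to coordinate $e$ gives
\begin{equation*}
\cF''_{v\rightarrow e}((c\smile c')_v)=\cF_{v\rightarrow e}(c_v)\,\cF'_{v\rightarrow e}(c'_v),
\end{equation*}
and similarly at $v'$.

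\emph{Left side.} By \Cref{eq:tccob},
\begin{equation*}
\delta(c\smile c')_e=\cF_{v\rightarrow e}(c_v)\cF'_{v\rightarrow e}(c'_v)-\cF_{v'\rightarrow e}(c_{v'})\cF'_{v'\rightarrow e}(c'_{v'}).
\end{equation*}

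\emph{Right side.} By the two mixed-degree formulas in \Cref{def:tancup}, the right side at $e$ is
\begin{equation*}
\delta(c)_e\,\cF'_{v'\rightarrow e}(c'_{v'})+\cF_{v\rightarrow e}(c_v)\,\delta(c')_e .
\end{equation*}
Note that a $1$-cochain cupped with a $0$-cochain uses the head $v'$, while a $0$-cochain cupped with a $1$-cochain uses the tail $v$. Expanding $\delta(c)_e$ and $\delta(c')_e$ via \Cref{eq:tccob}, the cross terms $\pm\cF_{v\rightarrow e}(c_v)\cF'_{v'\rightarrow e}(c'_{v'})$ cancel. What remains is exactly the left side, so \Cref{eq:leib} holds with sign $(-1)^0=+1$.

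\textbf{Expected obstacle.} The only point needing care is this sign and orientation bookkeeping: the head/tail conventions in \Cref{def:tancup} are chosen precisely so that the telescoping works. The check is to make sure the restriction-to-$e$ step for $\cF''$ is legitimate, which it is because $\cF''_{v\rightarrow e}$ is by definition the $e$-coordinate of $\cF''_{v\rightarrow E(v)}$.
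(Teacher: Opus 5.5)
Your proposal is correct and follows essentially the same route as the paper: dismiss every case with $i$ or $j$ nonzero, then for $i=j=0$ expand both sides at each edge $e=(v,v')$ using \Cref{eq:tccob} and \Cref{def:tancup} and observe that the cross terms $\pm\cF_{v\rightarrow e}(c_v)\,\cF'_{v'\rightarrow e}(c'_{v'})$ cancel. Your treatment of the trivial cases and of the head/tail conventions is, if anything, slightly more explicit than the paper's one-line dismissal.
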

\begin{proof}
  Let $c\in\cC^i(\Gamma,\cF)$, $c'\in\cC^j(\Gamma,\cF')$. If either $i$ or $j$ is nonzero then both sides of \Cref{eq:leib} are $0$. Thus assume $i=j=0$. Then for every $e=(v,v')\in E(\Gamma)$,
  \begin{align*}
    (\delta(c\smile c'))_e &= \cF_{v\rightarrow e}(c_v)\cdot\cF'_{v\rightarrow e}(c'_v)-\cF_{v'\rightarrow e}(c_{v'})\cdot\cF'_{v'\rightarrow e}(c'_{v'}) \\
    (\delta(c)\smile c')_e &= (\cF_{v\rightarrow e}(c_v)-\cF_{v'\rightarrow e}(c_{v'}))\cdot\cF'_{v'\rightarrow e}(c'_{v'}) \\
    (c\smile\delta(c'))_e &= \cF_{v\rightarrow e}(c_v)\cdot(\cF'_{v\rightarrow e}(c'_v)-\cF'_{v'\rightarrow e}(c'_{v'})),
  \end{align*}
  so \Cref{eq:leib} follows.
\end{proof}

\Cref{lem:cuptensor} below shows that the cup product respects tensor products.

\begin{lemma}
  \label{lem:cuptensor}
  Given cup products
  \begin{align*}
    \smile:\cA\times\cA'&\rightarrow\cA'' \\
    \smile:\cB\times\cB'&\rightarrow\cB'',
  \end{align*}
  then there exists a cup product (satisfying the Leibniz rule)
  \begin{align*}
    \smile:(\cA\otimes\cB)\times(\cA'\otimes\cB')&\rightarrow\cA''\otimes\cB''
  \end{align*}
  defined for $a\in\cA^{i_a}$, $a'\in{\cA'}^{i_{a'}}$, $b\in\cB^{i_b}$, $b'\in{\cB'}^{i_{b'}}$ by
  \begin{align}
    \label{eq:cuptensor}
    (a\otimes b)\smile(a'\otimes b') &= (-1)^{i_bi_{a'}}(a\smile a')\otimes(b\smile b').
  \end{align}
\end{lemma}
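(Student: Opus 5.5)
The proof is a direct verification of the Leibniz rule \Cref{eq:leib} for the product in \Cref{eq:cuptensor}. First we extend \Cref{eq:cuptensor} bilinearly to all of $(\cA\otimes\cB)\times(\cA'\otimes\cB')$. This is well defined because the formula is bilinear in each of $a,b,a',b'$ separately, hence factors through the tensor products. We then check degrees: for $a\otimes b\in(\cA\otimes\cB)^{i_a+i_b}$ and $a'\otimes b'\in(\cA'\otimes\cB')^{i_{a'}+i_{b'}}$, the output lies in $\cA''^{i_a+i_{a'}}\otimes\cB''^{i_b+i_{b'}}\subseteq(\cA''\otimes\cB'')^{i_a+i_b+i_{a'}+i_{b'}}$, as required. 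By bilinearity it suffices to verify the Leibniz rule on pure tensors $c=a\otimes b$ and $c'=a'\otimes b'$, with $c$ of degree $i=i_a+i_b$.

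Next we expand the left side $\delta(c\smile c')$. By \Cref{def:tensorcc}, $\delta(x\otimes y)=\delta x\otimes y+(-1)^{\deg x}x\otimes\delta y$. Applying this to $(-1)^{i_bi_{a'}}(a\smile a')\otimes(b\smile b')$ and then using the Leibniz rule in each factor gives four terms, each carrying the prefactor $(-1)^{i_bi_{a'}}$:
\begin{align*}
  &(\delta a\smile a')\otimes(b\smile b') + (-1)^{i_a}(a\smile\delta a')\otimes(b\smile b') \\
  &\quad+(-1)^{i_a+i_{a'}}(a\smile a')\otimes(\delta b\smile b') + (-1)^{i_a+i_{a'}+i_b}(a\smile a')\otimes(b\smile\delta b').
\end{align*}

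Then we expand the right side. The term $\delta(c)\smile c'$ equals $(\delta a\otimes b+(-1)^{i_a}a\otimes\delta b)\smile(a'\otimes b')$. Applying \Cref{eq:cuptensor}, the first piece carries the sign $(-1)^{i_bi_{a'}}$ and the second carries $(-1)^{i_a}(-1)^{(i_b+1)i_{a'}}$. These match the first and third terms above, since $(-1)^{i_a+(i_b+1)i_{a'}}=(-1)^{i_bi_{a'}}(-1)^{i_a+i_{a'}}$. The term $(-1)^{i_a+i_b}c\smile\delta(c')$ expands as $(-1)^{i_a+i_b}(a\otimes b)\smile(\delta a'\otimes b'+(-1)^{i_{a'}}a'\otimes\delta b')$. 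Its first piece has sign $(-1)^{i_a+i_b+i_b(i_{a'}+1)}=(-1)^{i_a+i_bi_{a'}}$, which matches the second term. Its second piece has sign $(-1)^{i_a+i_b+i_{a'}+i_bi_{a'}}$, which matches the fourth term.

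The only real obstacle is this sign bookkeeping, in particular remembering that $\delta$ raises the degrees $i_a$ and $i_{a'}$ by one inside the factor $(-1)^{i_bi_{a'}}$. Beyond that, no input is needed other than bilinearity and the Leibniz rules for the two factor cup products.
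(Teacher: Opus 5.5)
Your proposal is correct and follows the paper's proof: both expand $\delta\bigl((a\otimes b)\smile(a'\otimes b')\bigr)$ with the tensor coboundary and the factor Leibniz rules into four terms, then match each one with sign against a piece of $\delta(a\otimes b)\smile(a'\otimes b')+(-1)^{i_a+i_b}(a\otimes b)\smile\delta(a'\otimes b')$. The sign bookkeeping is right, and your remarks on bilinear extension and degrees fill in points the paper leaves implicit.
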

\begin{proof}
  We verify the Leibniz rule for the tensor product complexes, assuming that it holds for the factor complexes:
  \begin{align*}
    \hspace{1em}&\hspace{-1em}\delta((a\otimes b)\smile(a'\otimes b')) \\
    &= (-1)^{i_bi_{a'}} \cdot \delta((a\smile a')\otimes(b\smile b')) \\
    &= (-1)^{i_bi_{a'}} \cdot (\delta(a\smile a')\otimes(b\smile b')+(-1)^{i_a+i_{a'}}(a\smile a')\otimes\delta(b\smile b')) \\
    &= (-1)^{i_bi_{a'}} \cdot \left(\delta(a)\smile a'+(-1)^{i_a}a\smile\delta(a')\right)\otimes(b\smile b') \\
    &\hspace{1em}+ (-1)^{i_bi_{a'}} \cdot (-1)^{i_a+i_{a'}}(a\smile a')\otimes\left(\delta(b)\smile b'+(-1)^{i_b}b\smile\delta(b')\right) \\
    &= (\delta(a)\otimes b)\smile(a'\otimes b')+(-1)^{i_a}(a\otimes\delta(b))\smile(a'\otimes b') \\
    &\hspace{1em}+ (-1)^{i_a+i_b}(a\otimes b)\smile(\delta(a')\otimes b')+(-1)^{i_a+i_b+i_{a'}}(a\otimes b)\smile(a'\otimes\delta(b')) \\
    &= \delta(a\otimes b)\smile(a'\otimes b') + (-1)^{i_a+i_b}(a\otimes b)\smile\delta(a'\otimes b').
  \end{align*}
\end{proof}

\subsection{Proof of \Cref{thm:highyield}}
\label{sec:hyproof}
We first define the desired subsystems $L^{(i)}=(L^{(i)}_X\subseteq H_1(\cA^{(i)}),\; L^{(i)}_Z\subseteq H^1(\cA^{(i)}))$ of $Q^{(i)}$ for each $i\in[r]$. Specifically, using the isomorphism given by the K\"{u}nneth formula (\Cref{lem:kunneth}), we let
\begin{align}
  \label{eq:subsysdef}
  \begin{split}
    L^{(r)}_X &= H_0(\Gamma,\cF^{(1)})\otimes\cdots\otimes H_0(\Gamma,\cF^{(r-1)})\otimes H_1(\Gamma,{\cF^{(r)}}^\perp) \\
    L^{(r)}_Z &= H^0(\Gamma,\cF^{(1)})\otimes\cdots\otimes H^0(\Gamma,\cF^{(r-1)})\otimes H^1(\Gamma,{\cF^{(r)}}^\perp).
  \end{split}
\end{align}
Then for $i\in[r]$, we let $L^{(i)}_X,L^{(i)}_Z$ be given by cyclically shifting the tensor factors in $L^{(r)}_X,L^{(r)}_Z$ to the right by $i$ positions. The K\"{u}nneth formula ensures that this subsystem is well defined.

Below we show the distance bounds in \Cref{thm:highyield}:

\begin{claim}
  \label{claim:hydis}
  For $i\in[r]$, the subsystem code $(Q^{(i)},L^{(i)})$ has $X$-distance $d_X\geq d^{(r)}$ and $Z$-distance $d_Z\geq(2/\Delta)^{r-1}\cdot d^{(1)}d^{(2)}\cdots d^{(r-1)}$.
\end{claim}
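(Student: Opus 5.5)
By the cyclic symmetry of the construction (each $\cA^{(i)}$, $Q^{(i)}$ and $L^{(i)}$ is obtained from $\cA^{(r)}$, $Q^{(r)}$ and $L^{(r)}$ by permuting tensor factors, which preserves Hamming weight), it suffices to treat $i=r$. Write $\cA=\cA^{(r)}=\cC_1\otimes\cdots\otimes\cC_{r-1}\otimes\cC_r$ with $\cC_j=\cC^*(\Gamma,\cF^{(j)})$ for $j<r$ and $\cC_r=\cC^*(\Gamma,{\cF^{(r)}}^\perp)$. Level $1$ of $\cA$ decomposes into sectors indexed by which single factor sits at level $1$. The plan is the standard "contract with a logical in the complementary factors" argument for tensor product codes. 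We contract a logical of $Q$ against a fixed (co)cycle living in some of the factors, and land in a nonzero (co)cycle of the remaining factors whose weight is at most that of the original.

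\emph{$X$-distance.} Let $x\in Z_1(\cA)\setminus {L^{(r)}_Z}^\perp$. By \Cref{eq:subsysdef} there exist cocycles $a^j\in Z^0(\Gamma,\cF^{(j)})$ for $j<r$ and $b\in Z^1(\Gamma,{\cF^{(r)}}^\perp)$ with $\langle x,a^1\otimes\cdots\otimes a^{r-1}\otimes b\rangle\neq 0$. Only the sector $x_{(0,\dots,0,1)}$ of $x$ contributes to this pairing.
\begin{enumerate}
\item Define $y=((a^1\otimes\cdots\otimes a^{r-1})^\top\otimes I)\,x_{(0,\dots,0,1)}\in\cC_1(\Gamma,{\cF^{(r)}}^\perp)$.
\item Show $\partial y=0$. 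Expand $\partial x=0$ and extract the $(0,\dots,0,0)$ component. The terms involving $\partial$ on the first $r-1$ factors come from the other sectors of $x$. After contraction they become $\langle \delta a^j,\cdot\rangle=0$, because each $a^j$ is a cocycle. This leaves $\pm\partial y=0$.
\item Conclude $y\neq 0$: it lies in $Z_1(\Gamma,{\cF^{(r)}}^\perp)=\Tan(\Gamma,\cF^{(r)})$ by \Cref{fact:tantocc}, and $\langle y,b\rangle\neq0$.
\item Bound the weight. Each nonzero coordinate of $y$ requires a nonzero coordinate of $x$ in the corresponding fiber, and the fibers are disjoint, so $|y|\le|x|$. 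Hence $|x|\ge d^{(r)}$.
\end{enumerate}

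\emph{$Z$-distance.} Let $z\in Z^1(\cA)\setminus {L^{(r)}_X}^\perp$. Pick cycles $a_j\in Z_0(\Gamma,\cF^{(j)})$ and $b\in Z_1(\Gamma,{\cF^{(r)}}^\perp)$ pairing nontrivially with $z$. Again only the sector $z_{(0,\dots,0,1)}$ matters.
\begin{enumerate}
\item Define $w=(I\otimes b^\top)z_{(0,\dots,0,1)}\in\cC^0(\Gamma,\cF^{(1)})\otimes\cdots\otimes\cC^0(\Gamma,\cF^{(r-1)})$.
\item Show $w$ is a $0$-cocycle of $\cC_1\otimes\cdots\otimes\cC_{r-1}$. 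Take the components of $\delta z=0$ lying in sectors $(0,\dots,1,\dots,0,1)$ (a $1$ in slot $j<r$ and a $1$ in the last slot) and contract with $b$. The other contributing sector of $z$ has the last factor at level $0$; after contraction it gives $\langle \partial b,\cdot\rangle=0$, since $b$ is a cycle. This leaves $\pm(\delta_j\otimes I)w=0$ for each $j<r$.
\item Identify the cocycle space. Since $B^0=0$ in each factor, the $0$-cocycles of the product are exactly the tensor code $Z^0(\Gamma,\cF^{(1)})\otimes\cdots\otimes Z^0(\Gamma,\cF^{(r-1)})$.
\item Bound the weight. The element $w$ is nonzero because it pairs nontrivially with $a_1\otimes\cdots\otimes a_{r-1}$. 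The tensor code has distance $\prod_j d^0(\Gamma,\cF^{(j)})\ge\prod_j (2/\Delta)d^{(j)}$ by \Cref{eq:tandis}. As before $|w|\le|z|$, giving the claimed bound.
\end{enumerate}

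The main obstacle is the bookkeeping in the two cocycle/cycle verification steps. One must identify exactly which sectors of $\partial x$ and $\delta z$ to inspect, and check that all cross terms vanish after contraction, with the Koszul signs of \Cref{def:tensorcc} only affecting overall signs. This is essentially the contraction step in the proof of \Cref{lem:qproddec}, iterated over factors. I would first do it for two factors and then induct on the number of factors by grouping.
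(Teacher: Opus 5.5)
Your proposal is correct and matches the paper's proof essentially step for step. Both reduce to $i=r$, choose a pure-tensor logical that pairs nontrivially, contract the $(0,\dots,0,1)$ sector against the complementary factors, and combine disjointness of fibers ($|y|\le|x|$, $|w|\le|z|$) with \Cref{fact:tantocc}; the only difference is that you verify the (co)cycle property of the contracted vector directly from $\partial x=0$ and $\delta z=0$, where the paper simply invokes the K\"{u}nneth formula.
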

\begin{proof}
  By symmetry it suffices to consider the $i=r$ case. We bound the $X$- and $Z$-distances separately, using similar arguments:
  \begin{description}
  \item[$X$-distance:] Let $a\in Z_1(\cA^{(r)})\setminus {L^{(r)}_Z}^\perp$. Since pure tensors span $L^{(r)}_Z$, there exists $z=z^{(1)}\otimes\cdots\otimes z^{(r)}\in L^{(r)}_Z$, where $z^{(1)}\in Z^0(\Gamma,\cF^{(1)}),\dots,z^{(r-1)}\in Z^0(\Gamma,\cF^{(r-1)}),\; z^{(r)}\in Z^1(\Gamma,{\cF^{(r)}}^\perp)$, such that $\langle a,z\rangle\neq 0$. Let
    \begin{equation*}
      a' = (z^{(1)}\otimes\cdots\otimes z^{(r-1)}\otimes I)^\top a \in \cC_1(\Gamma,{\cF^{(r)}}^\perp)
    \end{equation*}
    be the vector obtained by restricting $a$ to components in $C_0(\Gamma,\cF^{(1)})\times\cdots\times C_0(\Gamma,\cF^{(r-1)})\times C_1(\Gamma,{\cF^{(r)}}^\perp)\subseteq A^{(r)}_1$ (yielding an $r$-dimensional tensor), and then applying $\langle\cdot,z^{(i)}\rangle$ to each direction-$i$ column for $i=1,\dots,r-1$. By the K\"{u}nneth formula
    \begin{equation*}
      a' \in Z_1(\Gamma,{\cF^{(r)}}^\perp),
    \end{equation*}
    and $\langle a',z^{(r)}\rangle=\langle a,z\rangle\neq 0$ so that $a'\neq 0$. Thus
    \begin{equation*}
      |a| \geq |a'| \geq d_1(\Gamma,{\cF^{(r)}}^\perp) = d^{(r)},
    \end{equation*}
    where the final equality holds by \Cref{fact:tantocc}.

  \item[$Z$-distance:] Now let $a\in Z^1(\cA^{(r)})\setminus {L^{(r)}_X}^\perp$. Since pure tensors span $L^{(r)}_X$, there exists $z=z^{(1)}\otimes\cdots\otimes z^{(r)}\in L^{(r)}_X$, where $z^{(1)}\in Z_0(\Gamma,\cF^{(1)}),\dots,z^{(r-1)}\in Z_0(\Gamma,\cF^{(r-1)}),\; z^{(r)}\in Z_1(\Gamma,{\cF^{(r)}}^\perp)$, such that $\langle z,a\rangle\neq 0$. Let
    \begin{equation*}
      a' = (I^{\otimes r-1}\otimes z^{(r)})^\top a \in \cC^0(\Gamma,\cF^{(1)})\otimes\cdots\otimes\cC^0(\Gamma,\cF^{(r-1)}).
    \end{equation*}
    By the K\"{u}nneth formula
    \begin{equation*}
      a' \in Z^0(\Gamma,\cF^{(1)})\otimes\cdots\otimes Z^0(\Gamma,\cF^{(r-1)}),
    \end{equation*}
    and $\langle z^{(1)}\otimes\cdots\otimes z^{(r-1)},a'\rangle=\langle z,a\rangle\neq 0$ so that $a'\neq 0$. Thus
    \begin{equation*}
      |a| \geq |a'| \geq d^0(\Gamma,\cF^{(1)})\cdots d^0(\Gamma,\cF^{(r-1)}) \geq \frac{2d^{(1)}}{\Delta}\cdots\frac{2d^{(r-1)}}{\Delta},
    \end{equation*}
    where the final inequality holds by \Cref{fact:tantocc}. \qedhere
  \end{description}
\end{proof}

Below we present the decoder in \Cref{thm:highyield}:

\begin{claim}
  \label{claim:hydecoder}
  For $i\in[r]$, the subsystem code $(Q^{(i)},L^{(i)})$ has a radius-$\min\{\rho^{(1)}\cdots\rho^{(r-1)}/\Delta^{r-1},\;\rho^{(r)}\}$ time-$\poly(\Delta,|V(\Gamma)|)^r$ decoder.
\end{claim}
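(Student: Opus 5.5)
By symmetry it suffices to treat $i=r$, where $\cA^{(r)}=\cA\otimes\cB$ with $\cA=\cC^*(\Gamma,\cF^{(1)})\otimes\cdots\otimes\cC^*(\Gamma,\cF^{(r-1)})$ (an $(r-1)$-dimensional complex) and $\cB=\cC^*(\Gamma,{\cF^{(r)}}^\perp)$ (1-dimensional). The plan is to apply \Cref{lem:qproddec} with $r_{\cB}=1$. The code $Q^{(r)}$ is then the level-$1$ code of $\cA\otimes\cB$. To match the subsystem $L^{(r)}$ in \Cref{eq:subsysdef}, we need $H^0(\cA)=Z^0(\Gamma,\cF^{(1)})\otimes\cdots\otimes Z^0(\Gamma,\cF^{(r-1)})$, which holds by iterating the K\"unneth formula (\Cref{lem:kunneth}), and $H_1(\cB)=Z_1(\Gamma,{\cF^{(r)}}^\perp)$. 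With these identifications, the $L_X,L_Z$ of \Cref{lem:qproddec} coincide with $L^{(r)}_X,L^{(r)}_Z$.

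It remains to supply decoders for the two classical codes, with a common radius equal to the minimum of the two radii below.

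\textbf{Decoder for $H_1(\cB)$.} By \Cref{fact:tantocc}, $H_1(\cB)=Z_1(\Gamma,{\cF^{(r)}}^\perp)=\Tan(\Gamma,\cF^{(r)})$. This is exactly the Tanner code, so its radius-$\rho^{(r)}$ decoder can be used directly.

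\textbf{Decoder for $H^0(\cA)$.} Again by \Cref{fact:tantocc}, each $Z^0(\Gamma,\cF^{(j)})$ has a radius-$\rho^{(j)}/\Delta$ decoder running in time $\poly(\Delta,|V(\Gamma)|)$. Applying \Cref{lem:tensordec} inductively over the $r-1$ factors gives a decoder for the tensor product $Z^0(\Gamma,\cF^{(1)})\otimes\cdots\otimes Z^0(\Gamma,\cF^{(r-1)})$ with radius $\prod_{j<r}\rho^{(j)}/\Delta^{r-1}$. Each inductive step multiplies the running time by at most the length of the factor being added, $\dim\cC^0(\Gamma,\cF^{(j)})\le\Delta|V(\Gamma)|$. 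The total time is therefore $\poly(\Delta,|V(\Gamma)|)^{r}$.

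Feeding both decoders into \Cref{lem:qproddec} yields a decoder for $(Q^{(r)},L^{(r)})$ with radius $\min\{\rho^{(1)}\cdots\rho^{(r-1)}/\Delta^{r-1},\rho^{(r)}\}$. Its running time is $\poly(T,|A^{(r)}|)$, and since $|A^{(r)}|\le(\Delta|V(\Gamma)|)^r$ this is $\poly(\Delta,|V(\Gamma)|)^r$.

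The main thing to check is that the $X$- and $Z$-decoding directions in \Cref{lem:qproddec} use exactly these two classical decoders. The $Z$-decoder contracts against $b_j\in Z_1(\cB)$ and decodes in $H^0(\cA)$. The $X$-decoder is the dual argument: it contracts against representatives of $H^0(\cA)$ and decodes in $H_1(\cB)$, which is the Tanner code $\Tan(\Gamma,\cF^{(r)})$. This matches the asymmetry between $d_X$ and $d_Z$ seen in \Cref{claim:hydis}.
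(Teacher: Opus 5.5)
Your proposal is correct and follows the paper's proof: you write $\cA^{(r)}$ as the $(r-1)$-fold product of the $\cC^*(\Gamma,\cF^{(j)})$ tensored with $\cC^*(\Gamma,{\cF^{(r)}}^\perp)$, obtain a radius-$\rho^{(1)}\cdots\rho^{(r-1)}/\Delta^{r-1}$ decoder for $Z^0$ of the first factor via \Cref{fact:tantocc} and \Cref{lem:tensordec}, and conclude with \Cref{lem:qproddec}. Identifying $H_1$ of the last factor with $\Tan(\Gamma,\cF^{(r)})$ (which supplies the $\rho^{(r)}$ term) and checking which decoder serves the $X$- versus $Z$-direction spells out what the paper leaves implicit; your looser $\poly(\Delta,|V(\Gamma)|)^{r}$ time bound for the tensor decoder is harmless.
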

\begin{proof}
  By symmetry it again suffices to consider the $i=r$ case. Let
  \begin{align*}
    \cA' &= \cC(\Gamma,\cF^{(1)})\otimes\cdots\otimes\cC(\Gamma,\cF^{(r-1)}) \\
    \cB' &= \cC(\Gamma,{\cF^{(r)}}^\perp),
  \end{align*}
  so that $\cA^{(r)}=\cA'\otimes\cB'$. Recall we assume that each $\Tan(\Gamma,\cF^{(j)})=Z_1(\Gamma,{\cF^{(j)}}^\perp)$ has a radius-$\rho^{(j)}$ time-$\poly(\Delta,|V(\Gamma)|)$ decoder. Then
  \begin{align*}
    Z^0(\cA') &= Z^0(\Gamma,\cF^{(1)})\otimes\cdots\otimes Z^0(\Gamma,\cF^{(r-1)}),
  \end{align*}
  so by \Cref{lem:tensordec,fact:tantocc}, $Z^0(\cA')$ has a radius-$\rho^{(1)}\cdots\rho^{(r-1)}/\Delta^{r-1}$ time-$\poly(\Delta,|V(\Gamma)|)^{r-1}$ decoder. Therefore by \Cref{lem:qproddec}, $(Q^{(r)},L^{(r)})$ has a radius-$\min\{\rho^{(1)}\cdots\rho^{(r-1)}/\Delta^{r-1},\;\rho^{(r)}\}$ time-$\poly(\Delta,|V(\Gamma)|)^r$ decoder, as desired.
\end{proof}

We now construct the desired multilinear form for the transversal $C^{r-1}Z$ gate in \Cref{thm:highyield}.
Let
\begin{align}
  \label{eq:hyCp}
  \cC' &= \cC^*(\Gamma,\cF^{(1)}*\cdots*\cF^{(r-1)}*{\cF^{(r)}}^\perp) \hspace{1em} \text{and} \hspace{1em} \cA' = {\cC'}^{\otimes r}.
\end{align}
For each $v\in V(\Gamma)$, the assumption in \Cref{eq:hymult} implies that all codewords in $\im(\cF^{(1)}_{v\rightarrow E(v)})*\cdots*\im(\cF^{(r-1)}_{v\rightarrow E(v)})*\im(\cF^{(r)}_{v\rightarrow E(v)})^\perp$ have entries summing to $0$, or equivalently, the dual $(\im(\cF^{(1)}_{v\rightarrow E(v)})*\cdots*\im(\cF^{(r-1)}_{v\rightarrow E(v)})*\im(\cF^{(r)}_{v\rightarrow E(v)})^\perp)^\perp$ contains the all-1s vector $\1_{E(v)}$. Hence $\Tan(\Gamma,(\cF^{(1)}*\cdots*\cF^{(r-1)}*{\cF^{(r)}}^\perp)^\perp)=Z_1(\cC')$ (see \Cref{fact:tantocc}) contains the all-1s vector $\1_{E(\Gamma)}$.

We then define a multilinear form
\begin{equation*}
  f:{\cA^{(1)}}^1\times\cdots\times{\cA^{(r)}}^1\rightarrow\bF
\end{equation*}
by
\begin{equation}
  \label{eq:mldef}
  f(a^{(1)},\dots,a^{(r)}) = \langle\1_{E(\Gamma)}^{\otimes r},a^{(1)}\smile\cdots\smile a^{(r)}\rangle. 
\end{equation}
Specifically, to define the cup product
\begin{equation*}
  a^{(1)}\smile\cdots\smile a^{(r)}\in{\cA'}^r
\end{equation*}
in \Cref{eq:mldef}, we first define the cup product from \Cref{def:tancup} on the factor complexes $\cC^*(\Gamma,\cF^{(i)})$, and then we obtain the cup product on the tensor product complexes $\cA^{(i)}$ by \Cref{lem:cuptensor}.
Here we process the cup products from left to right, i.e.~we take $((a^{(1)}\smile a^{(2)})\smile a^{(3)})\cdots$.
Also note in \Cref{eq:mldef} that $\1_{E(\Gamma)}\in Z_1(\cC')$, so $\1_{E(\Gamma)}^{\otimes r}\in Z_r(\cA')$.

\begin{claim}
  \label{claim:spbound}
  $f$ has sparsity $\leq(4\Delta)^{r^2}$.
\end{claim}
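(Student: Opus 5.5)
The plan is to reduce sparsity to a locality statement about the iterated cup product on a single factor, and then tensor. A physical qudit of $Q^{(i)}$ is a basis element of ${\cA^{(i)}}^1$, i.e.\ a pure tensor $x=x_1\otimes\cdots\otimes x_r$ of basis elements of the factor complexes, with exactly one factor an edge and the rest vertex basis elements. By \Cref{lem:cuptensor} applied iteratively (left to right), for basis pure tensors $a^{(j)}=a^{(j)}_1\otimes\cdots\otimes a^{(j)}_r$ we get
\begin{equation*}
  a^{(1)}\smile\cdots\smile a^{(r)} = \pm\bigotimes_{k=1}^r\left(a^{(1)}_k\smile\cdots\smile a^{(r)}_k\right),
\end{equation*}
and pairing with $\1_{E(\Gamma)}^{\otimes r}$ factors as a product over $k$ of $\langle\1_{E(\Gamma)},a^{(1)}_k\smile\cdots\smile a^{(r)}_k\rangle$ (when the factor cup product lands at level $1$; otherwise the term vanishes since $\1_{E(\Gamma)}^{\otimes r}$ is supported on level $(1,\dots,1)$). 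So $f(\1_{a^{(1)}},\dots,\1_{a^{(r)}})\neq0$ requires, for every coordinate $k$, that the $r$ basis elements $a^{(1)}_k,\dots,a^{(r)}_k$ (one edge among them, the rest vertex-basis vectors $\cF_v$-coordinates) have nonzero $r$-fold cup product.

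The key single-factor step: fix one basis element $x_k$ in coordinate $k$. From \Cref{def:tancup}, a cup product of vertex-level elements at $v$ only involves data at $v$, and an edge-level product $(c\smile c')_e$ involves only $c'_e$ together with the vertex values at the endpoints of $e$. Hence a nonzero iterated product $a^{(1)}_k\smile\cdots\smile a^{(r)}_k$ at level $1$ must be supported on a single edge $e$, with every vertex-level factor supported on an endpoint of $e$ (a basis vector of $\cF_v$ for $v\in e$, at most $\Delta$ choices each since $m_v\leq|E(v)|\leq\Delta$) and the edge factor equal to $e$. Given $x_k$, the edge $e$ has at most $\Delta$ choices (if $x_k$ is a vertex element) or is fixed; then each of the remaining $r-1$ entries has at most $2\Delta+1\leq 3\Delta$ choices (two endpoints times $\leq\Delta$ basis vectors, or the edge itself). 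This bounds the number of compatible $(r-1)$-tuples in coordinate $k$ by roughly $\Delta\cdot(3\Delta)^{r-1}\leq(4\Delta)^r$.

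Multiplying over the $r$ coordinates, the number of $r$-tuples of qudits containing a fixed qudit with nonzero $f$-value is at most $((4\Delta)^r)^r=(4\Delta)^{r^2}$, which is the claimed sparsity (the qudit's position among the $r$ codes is fixed, so no extra factor arises). The main obstacle is the careful bookkeeping in the single-factor locality step: making sure the left-to-right iterated product of \Cref{def:tancup} (with intermediate systems $\cF^{(1)}*\cdots$, and the mixture of vertex/edge levels in varying orders) really forces all vertex factors onto endpoints of one edge, including the asymmetric choice of $v$ vs.\ $v'$ in the two mixed-level formulas; I would handle it by induction on the number of factors, tracking that a level-$0$ partial product at $v$ depends only on factors at $v$ and a level-$1$ partial product on $e$ depends only on factors at $e$ and its endpoints.
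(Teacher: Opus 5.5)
Your proposal is correct and follows essentially the same route as the paper: factor the $r$-fold cup product coordinatewise via the tensor-product cup formula, note that in each coordinate the $r$ basis elements must be the single edge $e_j$ or vertex-basis elements at its endpoints, and bound the choices by $\Delta(2\Delta+1)^{r-1}\leq(4\Delta)^r$ per coordinate, giving $(4\Delta)^{r^2}$ in total. Your planned induction over the left-to-right product (vertex factors before the edge must sit at the tail, those after it at the head) makes explicit the localization step that the paper only asserts from the definition of the cup product.
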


We do not attempt to make the bound in \Cref{claim:spbound} tight. Rather, the important point is that it grows polynomially in $\Delta$ when $r$ is constant. To prove \Cref{claim:spbound}, recall that basis elements in $A^{(1)},\dots,A^{(r)}$ are by definition associated to vertices and edges in the Cartesian product $\Gamma^{\times r}$ of $r$ copies of $\Gamma$. The main idea is that the cup product in \Cref{eq:mldef} can only have nonzero terms for basis elements that are adjacent in $\Gamma^{\times r}$. Once we fix one basis element, the number of adjacent basis elements is then bounded in terms of the graph degree $\Delta$. We now present the details.

\begin{proof}[Proof of \Cref{claim:spbound}]
  If $C^{r-1}Z^f$ applies a non-identity gate on qudits
  $c^{(1)}\in{A^{(1)}}^1,\dots,c^{(r)}\in{A^{(r)}}^1$,
  then we must have $f(\1_{c^{(1)}},\dots,\1_{c^{(r)}})\neq 0$, and therefore
  \begin{equation}
    \label{eq:spcup}
    \1_{c^{(1)}}\smile\cdots\smile\1_{c^{(r)}}\neq 0.
  \end{equation}
  As each $\cA^{(i)}$ is the tensor product of $r$ complexes $\cC(\Gamma,\cF)$ (for different $\cF$), we can write each $c^{(i)}=(c^{(i)}_1,\dots,c^{(i)}_r)$ for basis elements $c^{(i)}_j$ in the factor complexes $C(\Gamma,\cF)$. Then \Cref{eq:spcup} implies that for every $j\in[r]$,
  \begin{equation}
    \label{eq:spcupfac}
    \1_{c^{(1)}_j}\smile\cdots\smile\1_{c^{(r)}_j}\neq 0.
  \end{equation}
  Recall that each basis element in $C^0(\Gamma,\cF)$ is associated to a vertex in $\Gamma$ (with $\dim(\cF_v)\leq\Delta$ basis elements for each vertex $v$), and $C^1(\Gamma,\cF)=E(\Gamma)$ is the set of edges. By \Cref{def:tancup}, \Cref{eq:spcupfac} can only hold if there exists some edge $e_j=(v_j,v'_j)\in E(\Gamma)$ such that each of $c^{(1)}_j,\dots,c^{(r)}_j$ either equal $e_j$, or else are associated to $v_j$ or $v'_j$.

  Hence once we fix $c^{(i)}=(c^{(i)}_1,\dots,c^{(i)}_r)$ for a single $i\in[r]$, then for each $j\in[r]$ there are $\leq\Delta$ choices of a valid edge $e_j=(v_j,v'_j)$ incident (or equal) to $c^{(i)}_j$, and then there are $\leq 2\Delta+1$ choices of $c^{(i')}_j$ for each $i'\in[r]\setminus\{i\}$ that are associated to $e_j$, $v_j,$ or $v'_j$. Thus for each $j\in[r]$ there are $\leq\Delta\cdot(2\Delta+1)^{r-1}$ choices of $(c^{(i')}_j)_{i'\in[r]\setminus\{i\}}$ satisfying \Cref{eq:spcupfac}, so the total number of choices of $(c^{(i')})_{i'\in[r]\setminus\{i\}}$ satisfying \Cref{eq:spcup} is $\leq(\Delta\cdot(2\Delta+1)^{r-1})^r\leq(4\Delta)^{r^2}$.

\end{proof}

\begin{claim}
  \label{claim:mlci}
  $f$ is cohomology-invariant.
\end{claim}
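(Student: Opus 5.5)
We need: for $z^{(i)}\in Q^{(i)}_Z = Z^1(\cA^{(i)})$ and $b^{(i)}\in {Q^{(i)}_X}^\perp = Z_1(\cA^{(i)})^\perp = B^1(\cA^{(i)})$ (the orthogonal complement of $\ker\partial_1$ is $\im\delta_0$), we have $f(z^{(1)}+b^{(1)},\dots,z^{(r)}+b^{(r)})=f(z^{(1)},\dots,z^{(r)})$. By multilinearity, and by replacing one $b^{(i)}$ at a time, it suffices to show that $f(z^{(1)},\dots,z^{(r)})$ is unchanged when a single argument $z^{(i)}$ is replaced by $z^{(i)}+\delta(y)$ for $y\in{\cA^{(i)}}^0$, with the other arguments cocycles. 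Equivalently, we must show $f(z^{(1)},\dots,\delta y,\dots,z^{(r)})=0$.

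First I will make sure the iterated cup product is well-defined as a map into ${\cA'}^r$. For this, apply \Cref{def:tancup} factorwise, using systems $\cF^{(1)}*\cdots*\cF^{(j)}$ for partial products. Then use \Cref{lem:cuptensor} to pass to the tensor products $\cA^{(1)}\smile\cA^{(2)}\smile\cdots$. The $j$th tensor factor of the iterated product is the cup product of the $j$th factors of the $\cA^{(i)}$. These are, in cyclic order, all of $\cF^{(1)},\dots,\cF^{(r-1)},{\cF^{(r)}}^\perp$ exactly once each. By commutativity of $*$, the target in each factor is the system in \Cref{eq:hyCp}, up to a harmless choice of generator matrices. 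Since \Cref{def:cup} only requires the Leibniz rule (each product landing in some complex), \Cref{lem:tancupleib} and \Cref{lem:cuptensor} show that the iterated product satisfies the Leibniz rule at each stage.

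Next, the main computation. Let $w=z^{(1)}\smile\cdots\smile z^{(i-1)}$, which is a cocycle by Leibniz. By the argument of \Cref{fact:cupcohom}, $w\smile\delta y=\pm\delta(w\smile y)$. Cupping further with the cocycles $z^{(i+1)},\dots,z^{(r)}$ and applying Leibniz repeatedly moves the coboundary outward:
\begin{equation*}
  z^{(1)}\smile\cdots\smile\delta y\smile\cdots\smile z^{(r)}=\pm\delta(x)
\end{equation*}
for some $x\in{\cA'}^{r-1}$.

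It then remains to show $\langle\1_{E(\Gamma)}^{\otimes r},\delta x\rangle=\langle\partial(\1_{E(\Gamma)}^{\otimes r}),x\rangle=0$. This holds because $\1_{E(\Gamma)}\in Z_1(\cC')$, which was established before \Cref{eq:mldef} from \Cref{eq:hymult} and \Cref{fact:tantocc}, so the tensor power is an $r$-cycle of $\cA'$.

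The main obstacle is bookkeeping. We must confirm that the identification $Q^{(i)}_X{}^\perp=B^1(\cA^{(i)})$ matches the paper's convention $Q_X=Z_1$. We must also check that the intermediate products in the left-to-right iteration really are Leibniz cup products between genuine Tanner complexes. A subtle point is that \Cref{def:tancup} needs the product system to have full-rank generators and to be defined at every vertex; this holds by choosing generator matrices for $\im(\cdot)*\im(\cdot)$. A second subtlety is that the final target complex must be exactly the $\cC'$ against which $\1_{E(\Gamma)}$ is a cycle. Since the cycle condition depends only on the local codes (images), the choice of basis is irrelevant up to identifying $\cC'^0$, and $\delta x$ lives in ${\cA'}^r$, whose level-$r$ part uses only the edge coordinates. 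So the pairing is basis-independent, and the argument goes through.
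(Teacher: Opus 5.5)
Your proposal is correct and follows the same route as the paper: by the Leibniz rule (\Cref{fact:cupcohom}, iterated through the intermediate cup products), perturbing the arguments by coboundaries changes $a^{(1)}\smile\cdots\smile a^{(r)}$ only by an element of $B^r(\cA')$, and this pairs to zero with the cycle $\1_{E(\Gamma)}^{\otimes r}\in Z_r(\cA')$. Your extra bookkeeping makes explicit what the paper's short proof leaves implicit: the identification ${Q^{(i)}_X}^\perp=B^1(\cA^{(i)})$, the reduction to changing one slot at a time, and the fact that the cycle condition does not depend on the choice of generator matrices.
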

\begin{proof}
  The claim follows from the cohomology-invariance of the cup product from \Cref{fact:cupcohom}. That is, for every $a^{(1)}\in Z^1(\cA^{(1)}),\dots,a^{(r)}\in Z^1(\cA^{(r)})$ and every $b^{(1)}\in B^1(\cA^{(1)}),\dots,b^{(r)}\in B^1(\cA^{(r)})$,
  \begin{equation*}
    (a^{(1)}+b^{(1)})\smile\cdots\smile(a^{(r)}+b^{(r)}) \in a^{(1)}\smile\cdots\smile a^{(r)}+B^r(\cA').
  \end{equation*}
  Then it follows because $\1_{E(\Gamma)}^{\otimes r}\in Z_r(\cA')$ that $\langle\1_{E(\Gamma)}^{\otimes r},B^r(\cA')\rangle=0$, so
  \begin{equation*}
    f(a^{(1)}+b^{(1)},\dots,a^{(r)}+b^{(r)}) = \langle\1_{E(\Gamma)}^{\otimes r},a^{(1)}\smile\cdots\smile a^{(r)}\rangle = f(a^{(1)},\dots,a^{(r)}).
  \end{equation*}
\end{proof}

We now show that the transversal gate $C^{r-1}Z^f$ has logical yield $|M|^r$ (see \Cref{def:transversal}):

\begin{claim}
  \label{claim:mlsubrank}
  The restriction of the induced form
  \begin{equation*}
    \tilde{f}:H^1(\cA^{(1)})\times\cdots\times H^1(\cA^{(r)})\rightarrow\bF
  \end{equation*}
  to $L^{(1)}_Z\times\cdots\times L^{(r)}_Z$ has subrank $\geq|M|^r$.
\end{claim}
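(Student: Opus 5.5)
We will compute $\tilde f$ on pure tensors of cocycles and reduce it to a product over the $r$ tensor coordinates, each factor being a classical multilinear form on the Tanner codes. Then we will exhibit an explicit diagonal set of $|M|^r$ tuples built from the extendable set $M$.

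\textbf{Step 1: factorization on pure tensors.} Take $a^{(i)}=a^{(i)}_1\otimes\cdots\otimes a^{(i)}_r\in L^{(i)}_Z$. By the cyclic structure, in each coordinate $j$ exactly one of $a^{(1)}_j,\dots,a^{(r)}_j$ is a $1$-cocycle in $Z^1(\Gamma,{\cF^{(r)}}^\perp)$. The others are $0$-cocycles $z^{(1)},\dots,z^{(r-1)}$ in $Z^0(\Gamma,\cF^{(1)}),\dots,Z^0(\Gamma,\cF^{(r-1)})$. Iterating \Cref{lem:cuptensor} gives
\[
a^{(1)}\smile\cdots\smile a^{(r)}=\pm\textstyle\bigotimes_j\bigl(a^{(1)}_j\smile\cdots\smile a^{(r)}_j\bigr).
\]
The sign is a fixed $\pm1$ that depends only on the level pattern, so it is irrelevant for subrank. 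Pairing with $\1_{E(\Gamma)}^{\otimes r}$ then factors $f$ as $\pm\prod_j g_j$.

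\textbf{Step 2: the factor forms.} By \Cref{def:tancup}, a product of $0$-cochains with a single edge cochain $y$ evaluates on each edge $e$ through the local encodings at one endpoint. When the $0$-cochains are cocycles, both endpoints agree, so the encodings are the Tanner codewords $c^{(k)}\in\Tan(\Gamma,\cF^{(k)})$ given by \Cref{fact:tantocc}. Hence each factor equals
\[
g_j=\sum_{e}c^{(1)}_e\cdots c^{(r-1)}_e\,y_e,
\]
up to a cyclic reordering of which input plays the role of $y$.

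\textbf{Step 3: choosing the diagonal tuples.} Since $M$ is extendable for every $\Tan(\Gamma,\cF^{(k)})$ with $k\le r-1$, for each $m\in M$ pick $c^{(k)}_m$ with $c^{(k)}_m|_M=\1_m$. Let $z^{(k)}_m$ be the corresponding $0$-cocycles. For the $1$-cocycle slot we need $y_m\in Z^1(\Gamma,{\cF^{(r)}}^\perp)$ with
\[
\sum_e (c^{(1)}_{m_1}\cdots c^{(r-1)}_{m_{r-1}})_e\,(y_{m})_e=\1_{m_1=\cdots=m_{r-1}=m}.
\]
Every cochain is a $1$-cocycle here, since $\cC^2=0$. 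Pairing with $y_m$ is well defined on cohomology: the product codeword lies in $Z_1(\Gamma,{\cF^{(r)}}^\perp)=\Tan(\Gamma,\cF^{(r)})$ by \eqref{eq:hymult}, and it pairs to zero with coboundaries.

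I expect the main obstacle to be that products $c^{(1)}_{m_1}*\cdots*c^{(r-1)}_{m_{r-1}}$ may be nonzero off $M$, so taking $y_m=\1_m$ fails. I would choose $y_m$ as a linear functional on the cohomology $\Tan(\Gamma,\cF^{(r)})$. The needed values are prescribed on the subspace spanned by these products. I would handle this by showing, via a linear-algebra extension argument, that the map sending such a product to its restriction to $M$ is consistent, i.e.\ a well-defined linear function on their span. If that fails, the fallback is to modify the $c$'s by triangularization: order $M$, use that $c^{(1)}_{m}*\cdots$ restricted to $M$ is $\1_m$ on the diagonal and $0$ otherwise, and pick the $y_m$ dual to the restrictions to $M$ of the diagonal products, which are linearly independent. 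The off-diagonal products vanish on $M$, but they need not vanish identically, so I would extend $M$ to an information set of $\Tan(\Gamma,\cF^{(r)})$ and choose the $y_m$ to annihilate the complementary coordinates.

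\textbf{Step 4: assembling the subrank.} Index the diagonal by $(m_1,\dots,m_r)\in M^r$. For each $i$, assign to $a^{(i)}$ in coordinate $j$ the element $z^{(k)}_{m_j}$ or $y_{m_j}$ according to the slot, and rescale by the sign. By Step 1, $\tilde f$ is the product over $j$ of the factor deltas, so
\[
\tilde f=\1_{\text{all index tuples equal}},
\]
which gives subrank at least $|M|^r$.
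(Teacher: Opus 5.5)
Your argument is correct and is essentially the paper's proof: you factor the cup product coordinatewise via \Cref{lem:cuptensor}, use extendability to pick $c^{(k)}_m$ with $c^{(k)}_m|_M=\1_m$, and index the diagonal by $M^r$. (The paper also checks that the sign is exactly $+1$, though your global-sign rescaling is enough.)

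However, your ``main obstacle'' is illusory, and the claim that $y_m=\1_m$ fails is false. Pairing with $\1_m$ reads only the coordinate $m\in M$, where each $c^{(k)}_{m_k}$ takes the value $\1_{m_k=m}$, so the off-$M$ support of the products never enters. This is exactly the paper's choice $z^{(r)}_e=\1_e$. The functional your extension argument would construct is just evaluation at $m$, i.e.\ $\1_m$ itself, so neither that argument nor the triangularization/information-set fallback is needed.
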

\begin{proof}
  For every $i\in[r-1]$ because $M$ is extendable for $\Tan(\Gamma,\cF^{(i)})$, there exists for every $e\in M$ a codeword $c^{(i)}_e\in\Tan(\Gamma,\cF^{(i)})$ whose restriction $c^{(i)}_e|_M=\1_e\in\bF^M$ to components in $M$ equals the indicator function for edge $e$. Let $z^{(i)}_e\in Z^0(\Gamma,\cF^{(i)})$ be the $0$-cocycle associated to $c^{(i)}_e$ under the isomorphism in \Cref{fact:tantocc}. Also let $z_e^{(r)}=c_e^{(r)}=\1_e\in Z^1(\Gamma,{\cF^{(r)}}^\perp)$.

  Then by the definition of the cup product on 1-dimensional complexes in \Cref{def:tancup}, for every $e^{(1)},\dots,e^{(r)}\in M$ and every $i\in[r]$,
  \begin{align*}
    z^{(1-i)}_{e^{(1)}}\smile z^{(2-i)}_{e^{(2)}}\smile\cdots\smile z_{e^{(r)}}^{(r-i)}
    &= c^{(i-1)}_{e^{(1)}}*c^{(i-2)}_{e^{(2)}}*\cdots*c^{(i-r)}_{e^{(r)}}
      = \begin{cases}
        \1_{e^{(r)}}&\text{if }e^{(1)}=\cdots=e^{(r)} \\
        0&\text{otherwise},
      \end{cases}
  \end{align*}
  where above we take the superscripts $1-i,2-i,\dots,r-i$ modulo $r$ to obtain indices in $[r]$.
  Therefore for every tuple of edges $e^{(i)}=(e^{(i)}_1,\dots,e^{(i)}_r)\in M^r$ for $i\in[r]$, letting
  \begin{align}
    \label{eq:zidef}
    z^{(i)}_{e^{(i)}} &= z^{(1-i)}_{e^{(i)}_1}\otimes z^{(2-i)}_{e^{(i)}_2}\otimes\cdots\otimes z^{(r-i)}_{e^{(i)}_r} \in Z^1(\cA^{(i)})
  \end{align}
  (where we again take superscripts modulo $r$), then by the definition of the cup product on $\cA^{(1)}\times\cdots\times\cA^{(r)}$ from \Cref{lem:cuptensor},
  \begin{align}
    \label{eq:hycupform}
    \begin{split}
      z^{(1)}_{e^{(1)}}\smile\cdots\smile z^{(r)}_{e^{(r)}}
      &= \bigotimes_{i\in[r]}\left(z^{(1-i)}_{e^{(1)}_i}\smile z^{(2-i)}_{e^{(2)}_i}\smile\cdots\smile z^{(r-i)}_{e^{(r)}_i}\right) \\
      &= \begin{cases}
        \bigotimes_{i\in[r]}\1_{e^{(1)}_i}&\text{if }e^{(1)}=\cdots=e^{(r)} \\
        0&\text{otherwise}.
      \end{cases}
    \end{split}
  \end{align}
  Specifically, the first equality above follows by repeatedly applying \Cref{eq:cuptensor} to transform the cup product of tensor products on the LHS into a tensor product of cup products;
  the sign $(-1)^{i_bi_{a'}}$ in \Cref{eq:cuptensor} is $+1$ in every one of these applications because $z^{(r)}_e=z^{(0)}_e$ is a $1$-cocycle whereas every $z^{(i)}_e$ for $i\not\equiv 0\pmod{r}$ is a $0$-cocycle.
  Therefore
  \begin{align*}
    f(z^{(1)}_{e^{(1)}},\dots,z^{(r)}_{e^{(r)}})
    &= \langle \1_{E(\Gamma)}^{\otimes r},z^{(1)}_{e^{(1)}}\smile\cdots\smile z^{(r)}_{e^{(r)}}\rangle
      = \begin{cases}
        1&\text{if }e^{(1)}_i=\cdots=e^{(r)}_i\;\forall i\in[r] \\
        0&\text{otherwise}.
      \end{cases}
  \end{align*}
  Thus for $i\in[r]$, the cohomology classes $z^{(i)}_{e^{(i)}}+B^1(\cA^{(i)})\in H^1(\cA^{(i)})$ across all possible choices of $e^{(i)}=(e^{(i)}_1,\dots,e^{(i)}_r)\in M^r$ satisfy the precise condition in \Cref{def:subrank} needed to imply that $\tilde{f}$ has subrank $\geq|M|^r$. Because each $z^{(i)}_{e^{(i)}}+B^1(\cA^{(i)})$ in fact lies in $L^{(i)}_Z\subseteq H^1(\cA^{(i)})$ by \Cref{eq:subsysdef,eq:zidef}, the same subrank bound holds for the restriction of $\tilde{f}$ to $L^{(1)}_Z\times\cdots\times L^{(r)}_Z$.
\end{proof}

\Cref{thm:highyield} follows from the claims above (below recall \Cref{def:transversal}):

\begin{proof}[Proof of \Cref{thm:highyield}]
  The transversal $C^{r-1}Z$ gate is constructed in \Cref{eq:mldef}, and cohomology-invariance is shown in \Cref{claim:mlci}. The desired logical yield is shown in \Cref{claim:mlsubrank}, and the sparsity in \Cref{claim:spbound}. The locality bound of $r\Delta^2$ follows because $\cA^{(i)}$ is the tensor product of $r$ 1-dimensional complexes each of locality $\leq\Delta^2$. The distance bounds are shown in \Cref{claim:hydis}, and the decoder
is constructed in \Cref{claim:hydecoder}.
\end{proof}

\section{Classical Tanner Codes with Multiplication Property}
\label{sec:punctens}
In this section, we present a classical Tanner code construction that has large dimension even for local codes of rate $<1/2$. Hence as shown in \Cref{sec:inst} below, we can instantiate this construction with algebraic local codes that respect component-wise multiplication, and hence apply \Cref{thm:highyield} to obtain quantum codes with transversal $C^{r-1}Z$ gates of large logical yield. In the statement below, we denote $[a:b]=\{a,a+1,\dots,b\}$ for $a,b\in\bZ$.

\begin{theorem}
  \label{thm:punctens}
  For every $n,t\in\bN$ with $t\geq 2$, there exists a bipartite graph $\Gamma=\Gamma(n,t)$ of maximum degree $2n$ with the following property. For every choice of classical codes $C_i\subseteq\bF_q^{2n}$ for $i\in[t]$ of length $2n$, dimension $k_i\leq n$, and distance $d_i>n$, there exists a system of local codes $\cF=\cF(C_1,\dots,C_t)$ on $\Gamma$ such that the Tanner code $\Tan(\Gamma,\cF)$ has parameters
  \begin{equation*}
    \left[N=(t-1)n^t,\; K=\prod_{i\in[t]}k_i,\; D\geq\prod_{i\in[t]}(d_i-n)\right]_q.
  \end{equation*}
  Furthermore, for each $v\in V(\Gamma)$, the local code $\im(\cF_{v\rightarrow E(v)})$ equals the puncturing $C_{i_v}|_{P_v}$ of some $C_{i_v}$ down to some set $P_v\subseteq[2n]$ of components, where $i_v\in [t]$, $P_v$, and the bijection $P_v\cong E(v)$ used to define the Tanner code\footnote{Specifically, the bijection is used to associate the components of $C_{i_v}|_{P_v}\subseteq\bF_q^{P_v}$ with the components of $\cF_v\subseteq\bF_q^{E(v)}$.} depend only on $n,t,v$ (and not on the choice of codes $C_1,\dots,C_t$). Specifically, each $P_v$ equals either $[1:n]$, $[n+1:2n]$, or $[1:2n]$.

  Additionally suppose for each $i\in[t]$ that we have a radius-$\rho_i$ time-$T_i$ decoder for $C_i|_{[1:n]}$ and for $C_i|_{[n+1:2n]}$. Then there exists a radius $\rho=\prod_{i\in[t]}\rho_i$ time $T=(t-1)n^{t-1}\sum_{i\in[t]}T_i$ decoder for $\Tan(\Gamma,\cF)$.
\end{theorem}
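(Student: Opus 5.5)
The plan is to realize $\Tan(\Gamma,\cF)$ as a puncturing of the tensor code $C_1\otimes\cdots\otimes C_t\subseteq\bF_q^{[2n]^t}$ to a set $S\subseteq[2n]^t$ of $(t-1)n^t$ points, with every point of $S$ lying on exactly two ``active'' axis-parallel lines. Write $L=[1:n]$ and $U=[n+1:2n]$.

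\textbf{Graph.} The vertices of $\Gamma$ are certain axis-parallel lines in $[2n]^t$. Each line is specified by its direction $j\in[t]$ and its fixed coordinates, and the points of $S$ on it form a full line, an $L$-half-line, or a $U$-half-line. These three cases give $P_v=[1:2n]$, $L$, or $U$, with $i_v=j$. Each point of $S$ becomes an edge joining the two active lines through it. I will always take those two lines to have consecutive directions $j,j+1$. The graph is then bipartite via the parity of the direction, and every vertex has degree at most $2n$.

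\textbf{Choice of $S$.} I would try a staircase decomposition $S=\bigsqcup_{j=1}^{t-1}S_j$, where each $S_j$ has size $n^t$ and its points are edges between direction-$j$ and direction-$(j+1)$ lines. A first attempt is to force coordinates $i<j$ into $U$ and coordinates $i>j+1$ into $L$, and to let the pair $(x_j,x_{j+1})$ range over a set of size $n^2$, namely a union of two $L/U$ rectangles. This choice must satisfy three conditions. First, the $S_j$ are pairwise disjoint. Second, each point of $S$ lies on exactly two active lines. Third, every active line meets $S$ in exactly $L$, $U$ or $[2n]$. Finding a choice that meets all three is the step I expect to be the main obstacle. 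I would tune the $L/U$ pattern on $(x_j,x_{j+1})$ until the count comes out to exactly $n^t$ per $j$ and the half-lines match up. The choice of $S$ depends only on $n$ and $t$, as the statement requires.

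\textbf{Codes and parameters.} Each local code is $C_{i_v}|_{P_v}$. Since $|P_v|\ge n$ and $d_{i_v}>n$, puncturing is injective, so the local codes have full-rank generators. The key claim is that $(C_1\otimes\cdots\otimes C_t)|_S=\Tan(\Gamma,\cF)$ and that restriction to $S$ is injective.
\begin{itemize}
\item \emph{Injectivity and distance.} Because $d_i>n$, a codeword of $C_i$ is determined by its values on $L$ or on $U$. Peeling direction by direction along the staircase therefore recovers the full tensor codeword from its values on $S$. This gives $K=\prod_i k_i$. The same peeling argument, keeping at least $d_i-n$ nonzeros per line after discarding at most $n$ punctured coordinates, gives $D\ge\prod_i(d_i-n)$.
\item \emph{Equality with the Tanner code.} Conversely, a Tanner codeword is locally a codeword on each active line. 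These local codewords extend uniquely to full $C_i$-codewords, and consistency across lines assembles them into an element of the tensor code.
\end{itemize}

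\textbf{Decoder.} Each $S_j$ is a disjoint union of a few sub-boxes, each of which is a tensor product of the codes $C_i|_L$, $C_i|_U$ and $C_i$. Applying \Cref{lem:tensordec} iteratively to each box, following the same peeling order, gives radius $\prod_i\rho_i$. The running time is $(t-1)n^{t-1}\sum_i T_i$, since there are $(t-1)n^{t-1}$ line-decodings per direction.
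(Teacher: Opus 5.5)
Your overall architecture matches the paper's. You puncture $C_1\otimes\cdots\otimes C_t$ to a staircase $S=\bigsqcup_{j\in[t-1]}S_j$ with $|S_j|=n^t$, take active axis-parallel lines as vertices joined along consecutive directions, and read off dimension, distance and the decoder from the tensor structure. \textbf{The gap is that the whole proof rests on the choice of $S$, which you leave open} (``the main obstacle'', to be ``tuned''). Your first attempt is also not viable as stated: a union of two of the $n\times n$ rectangles $L\times L$, $L\times U$, $U\times L$, $U\times U$ has $2n^2$ points, not $n^2$. Without a concrete $S$, none of your three conditions is checked. So the graph, the local codes, bipartiteness, the degree bound, and the box structure used by the decoder are all undefined.

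The missing idea is to take the single rectangle $U\times L$ for $(x_j,x_{j+1})$:
\[
S_j=U^{[1:j]}\times L^{[j+1:t]},\qquad V_j=U^{[1:j-1]}\times L^{[j+1:t]}\ \text{(the direction-$j$ lines)}.
\]
With this choice the three conditions hold:
\begin{itemize}
\item A line in $V_j$ has its $L$-half in $S_{j-1}$ and its $U$-half in $S_j$. So it meets $S$ in $[2n]$ for $2\le j\le t-1$, in $U$ for $j=1$, and in $L$ for $j=t$.
\item The $S_j$ are disjoint, since $S_j$ and $S_{j'}$ for $j<j'$ differ at coordinate $j+1$.
\item A direction-$m$ line through a point of $S_j$ with $m\notin\{j,j+1\}$ is inactive. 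Coordinate $j+1$ (if $m>j+1$) or coordinate $m+1$ (if $m<j$) lies in the wrong half.
\end{itemize}
Each $S_j$ is then a single box carrying $\bigotimes_{i\le j}C_i|_U\otimes\bigotimes_{i>j}C_i|_L$, which is exactly what the decoder needs. Injectivity and $D\ge\prod_i(d_i-n)$ follow from $S_1$ alone, with no peeling.

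Once $S$ is fixed, the step you only sketch, surjectivity onto $\Tan(\Gamma,\cF)$, is where the real work lies. ``Consistency across lines assembles them into an element of the tensor code'' does not address it. For $t\ge 3$, no active line of direction $m\ge 3$ meets $S_1$. So the fact that a Tanner codeword restricted to $S_1$ lies in $C_1|_U\otimes C_2|_L\otimes\cdots\otimes C_t|_L$ only emerges after propagating constraints up the staircase.

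The paper does this by induction on $t$:
\begin{itemize}
\item Each slice $x_t=j\in L$ is a copy of $\Gamma(n,t-1)$ whose constraints are implied by those of $\Gamma$.
\item The inductive hypothesis gives $c'\in(C_1\otimes\cdots\otimes C_{t-1})\otimes\bF_q^{L}$ agreeing with $c$ on $S_1,\dots,S_{t-2}$.
\item The full direction-$(t-1)$ lines, together with $d_{t-1}>n$, force agreement on $S_{t-1}$.
\item The direction-$t$ constraints on $S_{t-1}$ then yield $c''\in C_1\otimes\cdots\otimes C_t$ with $c''|_S=c$.
\end{itemize}
You would need this argument or an equivalent one.
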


\subsection{Proof of \Cref{thm:punctens}}
We will construct $\Gamma,\cF$ so that $\Tan(\Gamma,\cF)$ is a puncturing of the tensor code $\bigotimes_{i\in[t]}C_i$. The dimension and distance bounds, as well as the decoding algorithm, will then follow from those of the tensor code.

To begin, we define a $t$-partite graph $\bar{\Gamma}$ that has vertices corresponding to columns (in any of the $t$ directions) in the $t$-dimensional hypercube $[2n]^t$, and edges connecting pairs of columns that intersect. Formally,
\begin{align*}
  V(\bar{\Gamma}) &= \bar{V}_1\sqcup\cdots\sqcup\bar{V}_t \hspace{1em} \text{with each} \hspace{1em} \bar{V}_i = [2n]^{[t]\setminus\{i\}}\cong[2n]^{t-1}.
\end{align*}
A vertex $v\in\bar{V}_i$ corresponds to the direction-$i$ column
\begin{equation*}
  \col_i(v) = [2n]^{\{i\}}\times v\subseteq[2n]^t,
\end{equation*}
where above the $[2n]^{\{i\}}$ factor is moved into the $i$th position in $v_i\in[2n]^{[t]\setminus\{i\}}$.
A pair of vertices $(v_i\in\bar{V}_i,\;v_j\in\bar{V}_j)$ forms an edge in $E(\bar{\Gamma})$ if $i<j$ and the projections of $v_i,v_j$ onto all components in $[t]\setminus\{i,j\}$ are equal, or equivalently, if $\col_i(v_i)\cap\col_j(v_j)\neq\emptyset$.

To define $\Gamma$, recalling the notation $[a:b]=\{a,a+1,\dots,b\}$ for $a,b\in\bZ$, we let
\begin{equation*}
  V(\Gamma) = V_1\sqcup\cdots\sqcup V_t \hspace{1em} \text{with each} \hspace{1em} V_i = [n+1:2n]^{[1:i-1]}\times[1:n]^{[i+1:t]} \subseteq \bar{V}_i.
\end{equation*}
We let $\Gamma$ be the subgraph of $\bar{\Gamma}$ induced by $V(\Gamma)$.


\begin{claim}
  \label{claim:ptedges}
  Define sets $E_1,\dots,E_{t-1},E\subseteq[2n]^t$ by
  \begin{align*}
    E_i &= [n+1:2n]^{[1:i]}\times[1:n]^{[i+1:t]} \hspace{1em} \forall 1\leq i\leq t-1 \\
    E &= \bigsqcup_{i\in[t-1]}E_i.
  \end{align*}
  Then there is an isomorphism
  \begin{equation*}
    \edg : E \xrightarrow{\sim} E(\Gamma)
  \end{equation*}
  that maps $e=(e_1,\dots,e_t)\in E_i$ to $\edg(e)=(v,v')\in E(\Gamma)$ where
  \begin{equation*}
    v=(e_1,\dots,e_{i-1},e_{i+1},\dots,e_t)\in V_i, \hspace{1em} v'=(e_1,\dots,e_i,e_{i+2},\dots,e_t)\in V_{i+1}.
  \end{equation*}
  In particular, $\Gamma$ is bipartite, and for $v\in V(\Gamma)$,
  \begin{equation}
    \label{eq:graphdeg}
    \deg(v) = \begin{cases}
      n, & v\in V_1\sqcup V_t \\
      2n, & v\in V_2\sqcup\cdots\sqcup V_{t-1}.
    \end{cases}
  \end{equation}
\end{claim}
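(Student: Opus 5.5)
The plan is to check directly that $\edg$ is well defined, injective, and surjective onto $E(\Gamma)$. Bipartiteness and the degree formula then follow by counting preimages.

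\textbf{Well-definedness.} Fix $e\in E_i$. The vertex $v$ is obtained by deleting coordinate $i$. The remaining coordinates $e_1,\dots,e_{i-1}$ lie in $[n+1:2n]$ and $e_{i+1},\dots,e_t$ lie in $[1:n]$, so $v\in V_i$. Likewise $v'$ deletes coordinate $i+1$. Its coordinates $e_1,\dots,e_i$ lie in $[n+1:2n]$ (this uses $e_i\in[n+1:2n]$, the defining property of $E_i$) and $e_{i+2},\dots,e_t$ lie in $[1:n]$, so $v'\in V_{i+1}$. The projections of $v$ and $v'$ onto $[t]\setminus\{i,i+1\}$ agree, so $(v,v')\in E(\bar\Gamma)$. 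Since both endpoints lie in $V(\Gamma)$, we get $(v,v')\in E(\Gamma)$. The intersection point $\col_i(v)\cap\col_{i+1}(v')$ is exactly $e$, which will give injectivity.

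\textbf{Surjectivity.} This is the key step. Take any edge $(v_i,v_j)\in E(\Gamma)$ with $v_i\in V_i$, $v_j\in V_j$, $i<j$. The columns meet in a single point $p\in[2n]^t$.
\begin{itemize}
\item Coordinate $i$ of $p$ is coordinate $i$ of $v_j$. Since $i<j$, this lies in $[n+1:2n]$.
\item Coordinate $j$ of $p$ comes from $v_i$. Since $j>i$, it lies in $[1:n]$.
\item If $j\geq i+2$, then coordinate $i+1$ of $p$ must lie in $[1:n]$ (it comes from $v_i$, and $i+1>i$) and also in $[n+1:2n]$ (it comes from $v_j$, and $i+1<j$). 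This is a contradiction.
\end{itemize}
Hence $j=i+1$. The remaining coordinates are then forced: those $<i$ lie in $[n+1:2n]$ and those $>i+1$ lie in $[1:n]$. So $p\in E_i$ and $\edg(p)=(v_i,v_j)$.

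\textbf{Injectivity.} Distinct points give distinct edges, because an edge determines its intersection point $p$. The sets $E_i$ are pairwise disjoint because coordinate $i+1$ and coordinate $i$ distinguish them. This justifies the disjoint union $E$.

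\textbf{Consequences.} Every edge joins $V_i$ to $V_{i+1}$. So $\Gamma$ is bipartite with parts $\bigsqcup_{i\text{ odd}}V_i$ and $\bigsqcup_{i\text{ even}}V_i$.

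For the degrees, fix $v\in V_i$.
\begin{itemize}
\item Its edges to $V_{i+1}$ correspond to points $e\in E_i$ obtained by filling coordinate $i$ of $v$ with a value in $[n+1:2n]$. That gives $n$ choices.
\item Its edges to $V_{i-1}$ correspond to points of $E_{i-1}$ obtained by filling coordinate $i$ with a value in $[1:n]$. That gives $n$ choices.
\end{itemize}
Vertices in $V_1$ have no $V_0$ side and vertices in $V_t$ have no $V_{t+1}$ side. This gives degree $n$ for $V_1\sqcup V_t$ and degree $2n$ otherwise, as in \Cref{eq:graphdeg}.
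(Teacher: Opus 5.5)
Your proof is correct and takes essentially the same route as the paper. The key step, ruling out edges between $V_i$ and $V_j$ for $j\geq i+2$ because coordinate $i+1$ would have to lie in both $[1:n]$ and $[n+1:2n]$, is exactly the paper's argument. The bijection with $E$, bipartiteness, and the degree count then follow in the same way; you are simply more explicit than the paper about why $\edg$ is well defined and injective.
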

\begin{proof}
  For $i,j\in[t]$ with $i<j-1$, the projections of $V_i$ and $V_j$ onto component $i+1$ are $[1:n]$ and $[n+1:2n]$ respectively, so $\Gamma$ has no edges between $V_i$ and $V_j$.

  Therefore all edges in $E(\Gamma)$ lie in $V_i\times V_{i+1}$ for $i\in[t-1]$. Each $v=(e_1,\dots,e_{i-1},e_{i+1},\dots,e_t)\in V_i$ by definition has an edge to $v'=(e_1,\dots,e_i,e_{i+2},\dots,e_t)\in V_{i+1}$ for every $e_i\in[n+1:2n]$, so the claim follows.

  It also follows that $\Gamma$ is bipartite, with left vertices $V_1\sqcup V_3\sqcup\cdots$ and right vertices $V_2\sqcup V_4\sqcup\cdots$. Each vertex in $V_i$ has $n$ neighbors in $V_{i-1}$ (unless $i=1$) and $n$ neighbors in $V_{i+1}$ (unless $i=t$), so \Cref{eq:graphdeg} follows.
\end{proof}

In light of \Cref{claim:ptedges}, we may view edges in $E(\Gamma)$ as elements of $E=\bigsqcup_{i\in[t-1]}E_i$, where a vertex $v\in V_i$ lies in an edge $e\in E$ if $e\in\col_i(v)$.

With this view, we are ready to define the system of local codes $\cF=\cF(C_1,\dots,C_t)$ on $\Gamma$. For a vertex $v\in V_i$ for some $i\in[t]$, we choose $\cF_{v\rightarrow E(v)}$ to give local code
\begin{equation*}
  \im(\cF_{v\rightarrow E(v)}) = C_i|_{\col_i(v)\cap E}.
\end{equation*}
To make sense of this definition, recall that the set $\col_i(v)=[2n]^{\{i\}}\times v\cong[2n]$ is naturally in bijection with the $2n$ components of $C_i\subseteq\bF_q^{2n}$. Meanwhile, the first $n$ elements of $\col_i(v)$ lie in $E_{i-1}$ (unless $i=1$ in which case they lie outside $E$), and the last $n$ elements of $\col_i(v)$ lie in $E_i$ (unless $i=t$ in which case they lie outside $E$). Hence $\im(\cF_{v\rightarrow E(v)})=C_i$ for every $2\leq i\leq t-1$, while for $i=1$ or $i=t$ then $\im(\cF_{v\rightarrow E(v)})$ is obtained by restricting $C_i$ to the last or first $n$ components, respectively.

\begin{claim}
  \label{claim:ptresiso}
  Let $C=\bigotimes_{i\in[t]}C_i\subseteq \bF_q^{[2n]^t}$. Then the restriction map $\Res_E:\bF_q^{[2n]^t}\rightarrow\bF_q^E$ induces an isomorphism $\Res_E|_C:C\xrightarrow{\sim}\Tan(\Gamma,\cF)$.
\end{claim}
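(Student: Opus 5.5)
The plan is to show three things about the restriction map $\Res_E|_C$: it lands in $\Tan(\Gamma,\cF)$, it is injective, and it is surjective.

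\textbf{Well-definedness.} Let $c\in C=\bigotimes_i C_i$ and $v\in V_i$. The restriction $c|_{\col_i(v)}$ is a direction-$i$ column of a tensor codeword, so it lies in $C_i$. By \Cref{claim:ptedges}, the edges $E(v)$ are exactly the points $\col_i(v)\cap E$. Hence $(\Res_E c)|_{E(v)}=c|_{\col_i(v)\cap E}\in C_i|_{\col_i(v)\cap E}=\im(\cF_{v\rightarrow E(v)})$, so $\Res_E c\in\Tan(\Gamma,\cF)$.

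\textbf{Injectivity.} The key observation is that $d_i>n$ makes the puncturings $C_i\to C_i|_{[1:n]}$ and $C_i\to C_i|_{[n+1:2n]}$ injective. A nonzero codeword cannot vanish on $n$ of its $2n$ coordinates, since its weight would then be $\le n<d_i$. Consequently, for any box $S_1\times\cdots\times S_t$ with each $S_i$ a half of $[2n]$, the restriction $C\to\bigotimes_i C_i|_{S_i}$ is the tensor product of injective maps and so is injective. Now $E_{t-1}=[n+1:2n]^{t-1}\times[1:n]$ is such a box and lies inside $E$. So if $\Res_E c=0$ then $c|_{E_{t-1}}=0$, which forces $c=0$. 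A side consequence is that each $C_i|_{S_i}$ has dimension exactly $k_i$.

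\textbf{Surjectivity.} This is the step I expect to be the main obstacle. Given injectivity, it suffices to show $\dim\Tan(\Gamma,\cF)\le\prod_i k_i=\dim C$.

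I would first show that restriction to $E_{t-1}$ is injective on $\Tan(\Gamma,\cF)$, by peeling layers downward. Suppose $x\in\Tan$ vanishes on $E_{t-1}$. Each $v\in V_{t-1}$ has full local code $C_{t-1}$, with second half $\col_{t-1}(v)\cap E_{t-1}$ and first half $\col_{t-1}(v)\cap E_{t-2}$. Since $x$ vanishes on one half, the injectivity above gives $x=0$ on $\col_{t-1}(v)\cap E_{t-2}$. Ranging over $V_{t-1}$ covers all of $E_{t-2}$. Repeating with $V_{t-2},\dots,V_2$ gives $x=0$ on all of $E$.

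It then remains to bound $\dim\Tan|_{E_{t-1}}\le\prod_i k_i$. The natural target is to show $\Tan|_{E_{t-1}}\subseteq\bigotimes_i C_i|_{S_i}$ with $S_i=[n+1:2n]$ for $i<t$ and $S_t=[1:n]$. Direction-$t$ lines in $E_{t-1}$ are local codes of $V_t$, so they lie in $C_t|_{[1:n]}$. However, direction-$i$ lines for $i<t$ inside $E_{t-1}$ are not vertex columns, so this containment is not immediate. That mismatch is the difficulty.

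My first attempt would be induction on $t$ by slicing along the last coordinate. Fixing $x_t=a\in[1:n]$, the points of $E_1,\dots,E_{t-2}$ with $x_t=a$, together with the vertices of $V_1,\dots,V_{t-2}$, form a copy of $\Gamma(n,t-1)$. The $V_{t-1}$ vertices connect these slices to $E_{t-1}$.

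An alternative, which I would try in parallel, is a direct layered dimension count. Restrict first to $E_1$, whose direction-$1$ lines are $V_1$ codes, and propagate forward through $V_2,\dots$. At each stage, a $V_i$ column is determined by its first half (in $E_{i-1}$), and that first half must lie in $C_i|_{[1:n]}$. Each such step cuts the free coordinates in direction $i$ from $n$ to $k_i$, which should yield the bound $\prod_i k_i$.
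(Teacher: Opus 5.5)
Your well-definedness and injectivity arguments are correct, and so is the downward peeling showing that $x\mapsto x|_{E_{t-1}}$ is injective on $\Tan(\Gamma,\cF)$. The gap is the remaining bound $\dim\Tan(\Gamma,\cF)|_{E_{t-1}}\le\prod_i k_i$. This bound is the actual content of surjectivity, and you leave it open. You correctly observe that for $j\le t-2$ the direction-$j$ lines of $E_{t-1}$ are not (halves of) vertex columns, but neither of your sketches resolves this. The slicing induction is only set up. As written, each slice must also contain the layer $V_{t-1}$ with its first-half constraints $C_{t-1}|_{[1:n]}$; with only $V_1,\dots,V_{t-2}$ the slice is not a copy of $\Gamma(n,t-1)$, and the induction hypothesis does not apply. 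The layered count asserts that each layer ``cuts the free coordinates in direction $i$ from $n$ to $k_i$.'' It never explains why the constraints already obtained in directions $j<i$ survive the passage from $E_{i-1}$ to $E_i$, where the direction-$j$ lines are again not vertex columns. That is exactly the mismatch you flagged, so as written the bound is a heuristic.

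The missing observation is that each layer propagates data along a single direction. For $2\le i\le t-1$, the first halves of the columns $\col_i(v)$, $v\in V_i$, partition $E_{i-1}$, and their second halves partition $E_i$. Since $d_i>n$, there is a linear isomorphism $\phi_i:C_i|_{[1:n]}\to C_i|_{[n+1:2n]}$ sending a first half to its unique completion. Hence for $x\in\Tan(\Gamma,\cF)$, $x|_{E_i}$ is obtained from $x|_{E_{i-1}}$ by applying $\phi_i$ to every direction-$i$ line. This is an operator of the form $\mathrm{id}\otimes\phi_i\otimes\mathrm{id}$, so it preserves tensor constraints in all other directions. Now induct on $i$. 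Start from the $V_1$ constraint on $E_1$. At each stage, intersect with the constraint $C_i|_{[1:n]}$ on the direction-$i$ lines of $E_{i-1}$ (the first halves of $V_i$-columns), then apply $\phi_i$. This gives $x|_{E_i}\in\bigotimes_{j\le i}C_j|_{[n+1:2n]}\otimes\bF_q^{[1:n]^{[i+1:t]}}$. The $V_t$ constraint then yields $x|_{E_{t-1}}\in\bigotimes_{j<t}C_j|_{[n+1:2n]}\otimes C_t|_{[1:n]}$, a space of dimension $\prod_i k_i$. Equivalently, the unique $c\in C$ with $c|_{E_{t-1}}=x|_{E_{t-1}}$ satisfies $c|_E=x$ by your peeling argument.

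With this supplied, your route is a valid non-inductive alternative to the paper's proof, which instead inducts on $t$. The paper lifts each slice $x_t=j$ (including $V_{t-1}$) to a codeword $c'_j\in C_1\otimes\cdots\otimes C_{t-1}$. It then uses the full $C_{t-1}$-columns of $V_{t-1}$ to show that $(c'_j)_j$ agrees with the given Tanner codeword on $E_{t-1}$, and uses $V_t$ to extend to a preimage in $C$. In the paper, the induction hypothesis supplies the direction-$j$ constraints on $E_{t-1}$ for free; in your sweep they come from the direction-locality of the maps $\phi_i$.
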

\begin{proof}
  We show the result by induction on $t$.
  For the base case, when $t=2$, we have
  $
  E=E_1=[n+1:2n]\times[1:n].
  $
  Hence by definition of the local codes,
  \[
    \Tan(\Gamma,\cF)
    =
    (C_1|_{[n+1:2n]})\otimes(C_2|_{[1:n]}).
  \]
  On the other hand,
  $
  C|_E
  =
  (C_1|_{[n+1:2n]})\otimes(C_2|_{[1:n]}),
  $ and hence
  $
  \Tan(\Gamma,\cF)=C|_E.
  $

  Furthermore, since
  $d_1,d_2>n$, the restriction map from each $C_i$ to its puncturing is
  an isomorphism.
  Therefore
  $
  \Res_E|_C:C\to\Tan(\Gamma,\cF)
  $
  is an isomorphism.

  For the inductive step, let $t>2$, and assume the claim holds for $t-1$. By definition $\Res_E(C)=C|_E$ lies inside $\Tan(\Gamma,\cF)$, as every constraint of $\Tan(\Gamma,\cF)$ simply checks consistency with $C_i$ inside some subset of some direction-$i$ column for some $i\in[t]$. Furthermore, $\Res_E|_C$ is injective: for $c\in C$ with $c|_E=0$, then $c|_{E_1}=c|_{[n+1:2n]\times[1:n]^{t-1}}=0$, so $c=0$ because every $C_i$ has distance $d_i>n$ (similarly as in the $t=2$ case above).

  Hence it remains to show that $\Res_E|_C$ is surjective onto $\Tan(\Gamma,\cF)$. For this purpose, fix some $c\in\Tan(\Gamma,\cF)$. We will construct some $c''\in C$ with $c''|_E=c$.
  First, for each $j\in[1:n]$, we define a subgraph $\Gamma|_{t\rightarrow j}$ of $\Gamma$ consisting of those vertices and edges that are entirely contained in $[2n]^{[t-1]}\times\{j\}^{\{t\}}\subseteq[2n]^{[t]}$. Formally, for $i\in[t-1]$, let $V_i|_{t\rightarrow j}\subseteq V_i$ be the set of vertices $v\in V_i$ with $v_t=j$, and let $E_i|_{t\rightarrow j}\subseteq E_i$ be the set of edges $e\in E_i$ with $e_t=j$. Then we let
  \begin{align*}
    V(\Gamma|_{t\rightarrow j}) &= \bigsqcup_{i\in[t-1]}V_i|_{t\rightarrow j} \\
    E(\Gamma|_{t\rightarrow j}) &= \bigsqcup_{i\in[t-2]}E_i|_{t\rightarrow j}
  \end{align*}

  For each $j\in[1:n]$, by definition $\Gamma|_{t\rightarrow j}$ is isomorphic to $\Gamma(n,t-1)$ (recall here that $\Gamma=\Gamma(n,t)$). It follows that the restriction of $\Tan(\Gamma,\cF)$ to components in $E(\Gamma|_{t\rightarrow j})$ must lie inside the code $\Tan(\Gamma(n,t-1),\cF(C_1,\dots,C_{t-1}))$ (recall that $\cF=\cF(C_1,\dots,C_t)$), as every local code constraint in the latter Tanner code is also a constraint in the former. Then by the inductive hypothesis, for each $j\in[1:n]$ there exists some $c'_j\in C_1\otimes\cdots\otimes C_{t-1}$ such that $c'_j|_{E(\Gamma(n,t-1))}=c|_{E(\Gamma|_{t\rightarrow j})}$. Letting $c'=(c'_1,\dots,c'_n)\in (C_1\otimes\cdots\otimes C_{t-1})\otimes\bF_q^{[1:n]}$, then it follows that
  \begin{equation}
    \label{eq:ptcpc}
    c'|_{E_i} = c|_{E_i} \hspace{1em} \forall i\in[t-2].
  \end{equation}

  Now recall that every vertex $v\in V_{t-1}$ imposes a constraint on $c\in\Tan(\Gamma,\cF)$ that requires the $2n$ components of $c$ in $\col_{t-1}(v)\subseteq E$ to form a codeword in $C_{t-1}$. The first $n$ components of $c$ in $\col_{t-1}(v)$ lie inside $E_{t-2}$, and hence agree with $c'$ by \Cref{eq:ptcpc}. Then because $C_{t-1}$ has distance $d_{t-1}>n$, the last $n$ components of $c$ in $\col_{t-1}(v)$, which lie inside $E_{t-1}$, also must agree with $c'$. Applying this argument to every $v\in V_{t-1}$, we conclude that
  \begin{equation}
    \label{eq:ptcpc2}
    c'|_{E_{t-1}} = c|_{E_{t-1}}.
  \end{equation}

  Now every vertex $v\in V_t$ imposes a constraint on $c\in\Tan(\Gamma,\cF)$ that requires the $n$ components of $c$ in $\col_t(v)\cap E_{t-1}\subseteq E$ to form a codeword in $C_t|_{[1:n]}$. It follows that $c'|_{E_{t-1}}=c|_{E_{t-1}}$ in fact lies inside $(C_1\otimes\cdots\otimes C_{t-1})\otimes C_t|_{[1:n]}$. Then because $E_{t-1}=[n+1:2n]^{[1:t-1]}\times[1:n]^{\{t\}}$ and every $C_i$ has distance $d_i>n$, there exists a unique $c''\in C=C_1\otimes\cdots\otimes C_t$ for which $c''|_{E_{t-1}}=c'|_{E_{t-1}}$, and $c''|_{[2n]^{t-1}\times[1:n]}=c'$. Thus $c''|_E=c'|_E$, which equals $c|_E$ by \Cref{eq:ptcpc,eq:ptcpc2}, as desired.
\end{proof}

\begin{proof}[Proof of \Cref{thm:punctens}]
  It only remains to show that $\Tan(\Gamma,\cF)$ has parameters $[N,K,D]_q$ and a radius-$\rho$ time-$T$ decoder as given in the statement of \Cref{thm:punctens}. The length is by definition
  \begin{equation*}
    N = |E| = \sum_{i\in[t-1]}|E_i| = (t-1)n^t.
  \end{equation*}
  Letting $C=\bigotimes_{i\in[t]}C_i$, by \Cref{claim:ptresiso} the dimension of $\Tan(\Gamma,\cF)$ is
  \begin{equation*}
    K = \dim(C) = \prod_{i\in[t]}k_i.
  \end{equation*}
  Also by \Cref{claim:ptresiso}, the distance $D$ of $\Tan(\Gamma,\cF)$ is at least the minimum value of $|c|_{E_1}|$ for any nonzero $c\in C$. Because $E_1=[n+1:2n]\times[1:n]^{t-1}$, we have
  \begin{equation*}
    C|_{E_1} = \left(C_1|_{[n+1:2n]}\right)\otimes \bigotimes_{i=2}^t\left(C_i|_{[1:n]}\right)
  \end{equation*}
  and each factor has distance at least $d_i-n$. Hence the minimum distance of $C|_{E_1}$ is at least $\prod_{i\in [t]}(d_i-n)$.

  To show that $\Tan(\Gamma,\cF)$ has a radius-$\rho=\prod_{i\in[t]}\rho_i$ time-$T=(t-1)n^{t-1}\sum_{i\in[t]}T_i$ decoder, it suffices to show that the restriction $\Tan(\Gamma,\cF)|_{E_i}$ to each $E_i$ has a radius-$\rho$ time-$n^{t-1}\sum_{i\in[t]}T_i$ decoder, as then we can decode the restriction to each $E_i$ independently. But by \Cref{claim:ptresiso}, each such restriction $\Tan(\Gamma,\cF)|_{E_i}$ is the tensor product of $t$ length-$n$ codes, where the $i$th factor is either $C_i|_{[1:n]}$ or $C_i|_{[n+1:2n]}$. The statement of \Cref{thm:punctens} assumes that each such factor code has a radius-$\rho_i$ time-$T_i$ decoder. Hence by \Cref{lem:tensordec}, their tensor product $\Tan(\Gamma,\cF)|_{E_i}$ has a radius-$\rho$ time-$n^{t-1}\sum_{i\in[t]}T_i$ decoder, as desired.
\end{proof}

\section{Instantiations of High-Yield Transversal Gates}
\label{sec:inst}
In this section, we present instantiations of quantum codes that have transversal gates with close-to-linear logical yield and good locality. Specifically, we instantiate the transversal gate construction in \Cref{thm:highyield} with classical Tanner codes from \Cref{thm:punctens}, in various parameter regimes. Our main instantiations are described in the corollaries below, which we will prove in \Cref{sec:taninst}.

We begin with the construction in \Cref{cor:instconst} below, which has constant locality (i.e.~constant stabilizer weight). In this construction, we apply \Cref{thm:punctens} with a high-dimensional tensor product of constant-sized algebraic geometry codes.

\begin{corollary}
  \label{cor:instconst}
  For every integer $r\geq 3$, every $0<\epsilon<1$, and every prime $p$, there exists an infinite family of $r$-tuples of
  \begin{equation*}
    [[N,\; K\geq N^{1-\epsilon},\; D\geq N^{(1-\epsilon)/r}]]_q
  \end{equation*}
  quantum codes $(Q^{(i)},L^{(i)})_{i\in[r]}$ of locality $O_{r,\epsilon}(1)$ and alphabet size $q=p^u$ for $u=O_r(1)$ with block lengths $N\rightarrow\infty$, which support transversal $C^{r-1}Z$ gates of sparsity $O_{r,\epsilon}(1)$ and logical yield $\geq N^{1-\epsilon}$. These codes support radius-$N^{(1-\epsilon)/r}$ time-$\poly(N)$ decoders.
\end{corollary}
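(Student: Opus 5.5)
We will apply \Cref{thm:highyield} to the Tanner codes of \Cref{thm:punctens} built from a fixed constant-length algebraic code with a multiplication property, then let the tensor dimension $t\to\infty$.

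\textbf{Choice of local codes.} Fix $r,\epsilon,p$. Pick a large constant $n$ and a family of algebraic-geometry codes over $\bF_q$, $q=p^u$ with $u=O_r(1)$ large enough that curves with many rational points exist (e.g.\ Garcia--Stichtenoth towers), all of length $2n$ on a common set of $2n$ rational points. Take $C^{(j)}=C_L(D,G_j)$ for $j\in[r]$ with $\deg G_1=\cdots=\deg G_{r-1}=m$ and $G_r=(r-1)G_1$. Then $C^{(1)}*\cdots*C^{(r-1)}\subseteq C^{(r)}$ by the Riemann--Roch multiplication property. Choose $m\approx n/r$ so that the following hold:
\begin{itemize}
\item $C^{(j)}$ for $j<r$ has dimension $k\geq m-g+1$ and distance $\geq 2n-m>n$.
\item The dual $C^{(r)\perp}$ is again an AG code (a residue code) of distance $\geq \deg G_r-2g+2\approx(r-1)n/r$.
\end{itemize}
For this we need the genus $g\ll n$ (ratio $g/n$ small but constant, depending on $\epsilon$); this is where $u$ must be large. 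To feed \Cref{thm:punctens}, which needs distance $>n$ and dimension $\leq n$ for each factor, we apply it separately to each of the $r$ systems.
\begin{itemize}
\item For $j<r$, set $\cF^{(j)}=\cF(C^{(j)},\dots,C^{(j)})$.
\item For $j=r$, we need $\Tan(\Gamma,\cF^{(r)})$ to have large distance, since $d_X\geq d^{(r)}$. Here $\cF^{(r)}=\cF(C^{(r)},\dots)$ has $k_r\approx (r-1)n/r\leq n$ but distance only $2n-(r-1)m\approx 2n/r+\dots$, which may be $<n$.
\end{itemize}
This is the main obstacle. I would first try setting $m$ smaller, $m\leq n/(r-1)-O(g)$, so that $C^{(r)}$ has distance $>n$, at the cost of $k\approx n/(r-1)$, which is still linear in $n$. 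With that choice everything is constant-rate and constant-relative-distance as $n$ grows, with $d_i-n=\Omega(n)$.

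\textbf{Multiplication property on local codes.} Since every local code of $\cF^{(j)}$ at vertex $v$ is the puncturing $C^{(j)}|_{P_v}$ with the same $P_v$ and bijection for all $j$ (the key uniformity clause of \Cref{thm:punctens}), puncturing commutes with $*$. This gives \Cref{eq:hymult}. The graph $\Gamma(n,t)$ is bipartite, so \Cref{fact:tantocc} applies.

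\textbf{Extendable set.} All $\Tan(\Gamma,\cF^{(j)})$ for $j<r$ are isomorphic, via \Cref{claim:ptresiso}, to $\bigotimes C^{(j)}$ restricted to $E$. Take $M\subseteq E_1$ to be $S_1\times S_2^{t-1}$, where $S_1\subseteq[n+1:2n]$ and $S_2\subseteq[1:n]$ are information sets of the punctured codes $C^{(j)}|_{[n+1:2n]}$ and $C^{(j)}|_{[1:n]}$; these are the same code for all $j<r$ since $G_1=\cdots=G_{r-1}$. Then $M$ is an information set of $\Tan$, so $|M|=k^t$.

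\textbf{Parameters.} The graph has $|V(\Gamma)|\leq tn^{t-1}$ and degree $\Delta=2n=O(1)$. The quantum length is $N=\Theta(|V|+|E|)^r=(O(t)n^t)^r$. The yield and dimension satisfy $K\geq|M|^r=k^{rt}$. The distances satisfy $d_X\geq\Omega(n)^t$ and $d_Z\geq(2/\Delta)^{r-1}\Omega(n)^{t(r-1)}\geq \Omega(n)^t$. Since $k=\Omega(n)$, we get $K\geq N^{1-\epsilon}$ and $D\geq N^{(1-\epsilon)/r}$ once $n\geq n_0(r,\epsilon)$, absorbing constants $c^t$ and the factor $t^r$ into $n^{\epsilon t}$. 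Locality $r\Delta^2$ and sparsity $(4\Delta)^{r^2}$ are $O_{r,\epsilon}(1)$.

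\textbf{Decoding.} AG codes and their puncturings have polynomial-time decoders up to about half their designed distance, so $\rho_i=\Omega(n)$. \Cref{thm:punctens} then gives $\rho=\Omega(n)^t$ for the Tanner codes, and \Cref{thm:highyield} gives a quantum decoder of radius $\geq\Omega(n)^t/\Delta^{r-1}\geq N^{(1-\epsilon)/r}$ in time $\poly(N)$.
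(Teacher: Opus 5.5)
Your proposal is correct and follows essentially the same route as the paper. Both fix a constant-length AG code with $n=n(r,\epsilon)$, build the punctured-tensor Tanner codes of \Cref{thm:punctens}, take an information set $M$ of size $k^t$, apply \Cref{thm:highyield}, and let $t\to\infty$; the paper packages your parameter choices into \Cref{lem:AG} and takes $\cF^{(r)}=\cF^{*r-1}$ rather than the slightly larger $C_L(D,(r-1)G)$.

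Two small remarks: your ``main obstacle'' comes from an arithmetic slip, since $m\approx n/r$ already gives $2n-(r-1)m\approx n+n/r>n$; and the genus ratio $g/n$ only needs to depend on $r$, not on $\epsilon$, which is what keeps $u=O_r(1)$.
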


We next show how to slightly improve the distance, from $N^{(1-\epsilon)/r}$ in \Cref{cor:instconst} above to $N^{1/r}/\poly(\log N)$ in \Cref{cor:instlog} below, by applying \Cref{thm:punctens} with slowly growing factor codes. Specifically, the length of the factor codes grows polylogarithmically with respect to the length of the final product code. As a result, our codes in \Cref{cor:instlog} also have polylogarithmic locality (rather than the constant locality in \Cref{cor:instconst}). We include \Cref{cor:instlog} below primarily to demonstrate that we can recover the precise parameter regimes achieved in \cite{golowich_quantum_2025-1}.

\begin{corollary}
  \label{cor:instlog}
  For every integer $r\geq 3$, every $0<\epsilon<1$, and every prime $p$, there exists an infinite family of $r$-tuples of
  \begin{equation*}
    [[N,\; K\geq N^{1-\epsilon},\; D\geq N^{1/r}/(\log N)^{O_{r,\epsilon}(1)}]]_q
  \end{equation*}
  quantum codes $(Q^{(i)},L^{(i)})_{i\in[r]}$ of locality $(\log N)^{O_{r,\epsilon}(1)}$ and alphabet size $q=p^u$ for $u=\Theta_{r,\epsilon}(\log_p(\log N))$ with block lengths $N\rightarrow\infty$, which support transversal $C^{r-1}Z$ gates of sparsity $(\log N)^{O_{r,\epsilon}(1)}$ and logical yield $\geq N^{1-\epsilon}$.
\end{corollary}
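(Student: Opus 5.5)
We need to prove Corollary \ref{cor:instlog}: apply \Cref{thm:highyield} to the Tanner codes of \Cref{thm:punctens}, taking the factor codes $C_i$ to be Reed--Solomon codes over a field that grows polylogarithmically.

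\textbf{Choice of local codes.} Fix $r$ and $\epsilon$, and set $t=\lceil c/\epsilon\rceil$ for a suitable constant $c=c_r$. For a growing parameter $n$, pick $q=p^u$ with $q\ge 2n$, which forces $u=\Theta(\log_p n)$. Let $C^{(j)}\subseteq\bF_q^{2n}$ for $j\in[r]$ be Reed--Solomon codes on $2n$ distinct evaluation points:
\begin{itemize}
\item for $j\le r-1$, $C^{(j)}$ consists of evaluations of polynomials of degree $<k$, with $k=\lfloor n/(r-1)\rfloor$;
\item $C^{(r)}$ consists of evaluations of polynomials of degree $<(r-1)(k-1)+1\le n$.
\end{itemize}
Then $C^{(1)}*\cdots*C^{(r-1)}\subseteq C^{(r)}$. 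Each $C^{(j)}$ has dimension $\le n$ and distance $2n-\deg>n$; in fact $d-n\ge n-n/(r-1)=\Omega(n)$.

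\textbf{Building the Tanner codes.} For the $j$th system apply \Cref{thm:punctens} with $C_1=\cdots=C_t=C^{(j)}$, obtaining $\cF^{(j)}=\cF(C^{(j)},\dots,C^{(j)})$ on the common graph $\Gamma=\Gamma(n,t)$. The multiplication property \Cref{eq:hymult} holds vertexwise. The reason is that the puncturing sets $P_v$ and the bijections $P_v\cong E(v)$ depend only on $v$, and puncturing commutes with $*$, so
\[
C^{(1)}|_{P_v}*\cdots*C^{(r-1)}|_{P_v}\subseteq C^{(r)}|_{P_v}.
\]
Since $\Gamma$ is bipartite, \Cref{fact:tantocc} applies.

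\textbf{The extendable set $M$.} This is the step needing the most care. The RS codes are nested: $C^{(j)}$ equals a common code $C^{(1)}$ for $j\le r-1$. Choose $M\subseteq E_1$ corresponding to $S^{\,t}$, where $S\subseteq[n+1:2n]$, $|S|=k$, is an information set of $C^{(1)}|_{[n+1:2n]}$; any $k$ coordinates work for RS codes. By \Cref{claim:ptresiso}, $\Tan(\Gamma,\cF^{(j)})|_{E_1}$ is a tensor product of RS codes restricted to $[n+1:2n]$ or $[1:n]$. Use the coordinates of $S$ in the first factor and a size-$k$ subset of $[1:n]$ in the other factors; since $\edg$ is a bijection, the resulting product set is an information set of the whole tensor code. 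This set is common to all $j\le r-1$, so $|M|=k^t$.

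\textbf{Parameters.} By \Cref{thm:punctens}, every Tanner code has distance $\ge (n-n/(r-1))^t=\Omega_r(n)^t$. Its length is $(t-1)n^t$ and $\Delta=2n$. Then \Cref{thm:highyield} gives the following.
\begin{itemize}
\item Quantum length $N=\Theta(|V(\Gamma)|+|E(\Gamma)|)^r\cdot O(\Delta)^{r}=n^{O(rt)}$; more precisely $N\le (O(t n^{t}))^r\cdot \Delta^{r}$, counting $\dim\cF_v\le 2n$ per vertex.
\item $d_X\ge\Omega(n)^t$ and $d_Z\ge(2/\Delta)^{r-1}\Omega(n)^{t(r-1)}=n^{t(r-1)-(r-1)}/O(1)^{tr}$. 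Both are at least $n^{t}/2^{O(tr)}$.
\item Logical yield $|M|^r=(n/(r-1))^{tr}$.
\item Locality $\le r\Delta^2$ and sparsity $(8n)^{r^2}$.
\end{itemize}

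\textbf{Choosing $n$ and $t$.} This is the main obstacle: the $\Delta^{r}$ overhead in $N$ and the $(r-1)^{-tr}$ losses must be made negligible. The length is $N=(c' t)^r n^{tr+O(r)}$, so $\log N=\Theta(tr\log n)$, and both yield and distance lose only factors of $n^{O(r)}\cdot 2^{O(tr)}$. Take $n=\Theta(\log N)^{\Theta(1)}$ by letting $t\to\infty$ with $t\approx \log N/(r\log n)$ and $n=\log N$, hence $u=\Theta(\log_p\log N)$. Then the following hold.
\begin{itemize}
\item Yield: $n^{tr}/(r-1)^{tr}=N\cdot n^{-O(r)}\cdot 2^{-O(tr)}$. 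Since $2^{O(tr)}=N^{O(1/\log\log N)}$, this is at least $N^{1-\epsilon}$ for large $N$. The $\epsilon$ only absorbs such terms, so $t$ may grow and need not be $O(1/\epsilon)$.
\item Distance: $D\ge n^t/2^{O(tr)}\ge N^{1/r}/(\log N)^{O(1)}$, after checking that the $2^{O(t)}$ loss is $N^{o(1)}$. Obtaining a clean $(\log N)^{O(1)}$ factor may require taking $t$ slightly smaller, or $n=(\log N)^{C}$ with $C$ large in terms of $\epsilon$, so that $2^{O(t)}\le(\log N)^{O(1)}$. Equivalently, choose $n=(\log N)^{\Theta_{r,\epsilon}(1)}$ to make $t=O(\log N/\log\log N)$ with a small constant.
\item Locality and sparsity are $\poly(n)=(\log N)^{O_{r,\epsilon}(1)}$, as required.
\end{itemize}
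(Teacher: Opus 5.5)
Your overall architecture matches the paper's: RS local codes fed through \Cref{thm:punctens} into \Cref{thm:highyield}, with $\cF^{(1)}=\cdots=\cF^{(r-1)}$, $\cF^{(r)}$ the $(r-1)$-fold product, and $q\ge 2n$. The gap is your choice of constant-rate local codes, which destroys the $X$-distance. With $k=\lfloor n/(r-1)\rfloor$, the code $C^{(r)}$ has dimension $(r-1)(k-1)+1\ge n-2r+3$. Its distance is therefore $2n-(r-1)(k-1)\le n+2r-2$, so $d^{(r)}-n=O(r)$. Your assertion that every Tanner code has distance $\ge(n-n/(r-1))^t$ holds only for $j\le r-1$; for $\Tan(\Gamma,\cF^{(r)})$, \Cref{thm:punctens} guarantees only $\prod_i(d_i-n)=O(r)^t$. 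This loss is real, not an artifact of the bound. The punctured codes $C^{(r)}|_{[1:n]}$ and $C^{(r)}|_{[n+1:2n]}$ have distance $\le 2r-2$. So a product codeword $c_1\otimes\cdots\otimes c_t$ whose factors are light on $[n+1:2n]$ for $j<t/2$ and light on $[1:n]$ for $j>t/2$ has weight only about $t\cdot O(r)^t\, n^{\lceil t/2\rceil}$ on $E$. Moreover, $d_X$ really is at most the distance of $\Tan(\Gamma,\cF^{(r)})$: take $x_1\otimes\cdots\otimes x_{r-1}\otimes c$ with suitable single-basis-element $0$-chains $x_j$ and $c\in\Tan(\Gamma,\cF^{(r)})$, which is a nontrivial $X$-logical of weight $|c|$. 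Hence your codes have distance about $N_0^{1/2+o(1)}\approx N^{1/(2r)+o(1)}$, far below $N^{1/r}/(\log N)^{O(1)}$.

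Lowering the rate to a smaller constant (so that $d^{(r)}-n\ge cn$) does not rescue this. You would still only get $d^{(r)}\ge(cn)^t$. With $n=(\log N)^{O(1)}$ you need $t=\Theta(\log N/\log\log N)$, so $c^t=N^{-\Theta(1/\log\log N)}$, which is not $(\log N)^{-O(1)}$. Your suggested remedy $n=(\log N)^C$ leaves $t=\Theta(\log N/\log\log N)$. Making $c^t$ polylogarithmic would require $t=O(\log\log N)$, i.e.\ $n=2^{\Theta(\log N/\log\log N)}$, which ruins the polylogarithmic locality and $u=\Theta(\log_p\log N)$. Your opening choice $t=\lceil c/\epsilon\rceil$ is instead the \Cref{cor:instpoly} regime, with locality $N^{\Theta(1/t)}$. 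The missing idea is to spend $\epsilon$ on rate rather than on distance. The paper takes $k=\lceil n^{1-\epsilon/2}\rceil$ and $t=\lceil n^{\epsilon/4}\rceil$. Then $d'-n\ge n\bigl(1-(r-1)n^{-\epsilon/2}\bigr)$ and $\bigl(1-(r-1)n^{-\epsilon/2}\bigr)^{n^{\epsilon/4}}=\Theta(1)$. So both $\Tan(\Gamma,\cF)$ and $\Tan(\Gamma,\cF^{*r-1})$ have distance $\Theta(n^t)=N_0/(\log N_0)^{O(1)}$. Meanwhile the yield $k^{tr}\ge n^{(1-\epsilon/2)tr}$ loses only a factor $N^{\epsilon/2}$ up to polylog factors, which the requirement $K\ge N^{1-\epsilon}$ absorbs. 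The rest of your setup goes through unchanged: the vertexwise multiplication property via the $v$-only dependence of $P_v$, $M$ an information set of the common code, and $q\ge 2n$.
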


Whereas \Cref{cor:instconst,cor:instlog} above both have dimension and logical yield $N^{1-\epsilon}$, we show in \Cref{cor:instpoly} below how to achieve linear dimension and logical yield $\Theta(N)$. The cost is that the locality now grows as a small polynomial $N^\epsilon$.

\begin{corollary}
  \label{cor:instpoly}
  For every integer $r\geq 3$, every $0<\epsilon<1$, and every prime $p$, there exists an infinite family of $r$-tuples of
  \begin{equation*}
    [[N,\; K\geq\Omega_{r,\epsilon}(N),\; D\geq\Omega_{r,\epsilon}(N^{1/r})]]_q
  \end{equation*}
  quantum codes $(Q^{(i)},L^{(i)})_{i\in[r]}$ of locality $N^\epsilon$ and alphabet size $q=p^u$ for $u=O_r(1)$ with block lengths $N\rightarrow\infty$, which support transversal $C^{r-1}Z$ gates of sparsity $N^\epsilon$ and logical yield $\geq\Omega_{r,\epsilon}(N)$. These codes support radius-$\Omega_{r,\epsilon}(N^{1/r})$ time-$\poly(N)$ decoders.
\end{corollary}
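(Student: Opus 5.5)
We would prove \Cref{cor:instpoly} by applying \Cref{thm:highyield} to Tanner codes from \Cref{thm:punctens}. The dimension $t$ is a large constant depending on $r,\epsilon$, while the factor codes are algebraic-geometry (AG) codes of growing length $2n$ over a constant-size field $\bF_q$, $q=p^u$. The locality and sparsity of \Cref{thm:highyield} are polynomial in $\Delta=2n$. The quantum length is about $n^{tr}$. So taking $t\gtrsim r/\epsilon$ makes these $\le N^\epsilon$, while the yield stays linear because the factor codes have constant rate.

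\textbf{Choice of factor codes.} Fix $u=O_r(1)$ large enough that, for $q=p^u$ a square, a Garcia--Stichtenoth tower supplies infinitely many curves $X$ over $\bF_q$ with at least $2n$ rational points and genus $g\le 2n/(\sqrt q-1)$, where $2n$ ranges over an infinite sequence.
\begin{itemize}
\item Fix $2n$ rational points and a divisor $G$ of degree $a=\lfloor n/(2(r-1))\rfloor$ avoiding them.
\item Let $C=C_L(G)\subseteq\bF_q^{2n}$, and set $C^{(j)}_i=C$ for every $i\in[t]$ and $j\le r-1$.
\item Let $C^{(r)}_i=C_L((r-1)G)$, so that $C^{*(r-1)}\subseteq C_L((r-1)G)$.
\end{itemize}
Then $C^{(r)}_i$ has distance $\ge 2n-(r-1)a\ge 3n/2>n$ and dimension $\le n$. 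The code $C$ has dimension $k\ge a-g+1=\Omega_r(n)$ once $q$ is a large enough constant, and distance $>n$. Puncturing $C$ to $[1:n]$ or $[n+1:2n]$ gives again an AG code of distance $\Omega_r(n)$, which has a standard $\poly(n)$-time decoder of radius $\rho=\Omega_r(n)$.

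\textbf{Checking the hypotheses of \Cref{thm:highyield}.}
\begin{itemize}
\item \emph{Multiplication property.} \Cref{thm:punctens} says that $\Gamma=\Gamma(n,t)$, each $P_v$, and the bijection $P_v\cong E(v)$ do not depend on the factor codes. So $\cF^{(j)}=\cF(C^{(j)}_1,\dots,C^{(j)}_t)$ all live on the same bipartite graph of maximum degree $2n$. Puncturing to $P_v$ commutes with componentwise products, so \Cref{eq:hymult} follows vertex by vertex from $C^{*(r-1)}\subseteq C_L((r-1)G)$.
\item \emph{Common extendable set.} Since $\cF^{(1)}=\cdots=\cF^{(r-1)}$, we take $M$ to be any information set of $\Tan(\Gamma,\cF^{(1)})$. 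Its size is $|M|=k^t$ by \Cref{thm:punctens}.
\item \emph{Distances.} \Cref{thm:punctens} gives $d^{(j)}\ge\Omega_r(n)^t$ for all $j$.
\end{itemize}

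\textbf{Parameters.} The level-1 space of $\cA^{(i)}$ has size $N=\Theta_{r,t}(n^{tr})$, since $|E(\Gamma)|=(t-1)n^t$ and $\dim\cC^0\le 2n|V(\Gamma)|=O_t(n^t)$.
\begin{itemize}
\item \emph{Yield and dimension.} The yield is $|M|^r=k^{tr}=\Omega_{r,t}(n^{tr})=\Omega(N)$, which also lower-bounds $K$.
\item \emph{Distance.} We have $d_X\ge\Omega(n^t)=\Omega(N^{1/r})$. Also
\begin{equation*}
d_Z\ge (1/n)^{r-1}\,\Omega(n^t)^{r-1}=\Omega\big(n^{(t-1)(r-1)}\big)\ge\Omega(n^t),
\end{equation*}
because $(t-1)(r-1)\ge t$ for $t\ge 2$ and $r\ge 3$.
\item \emph{Decoding.} The decoder of \Cref{thm:highyield}, fed by the decoders from \Cref{thm:punctens}, has radius
\begin{equation*}
\min\{\rho^{t(r-1)}/(2n)^{r-1},\ \rho^t\}=\Omega(n^t)=\Omega(N^{1/r})
\end{equation*}
and runs in $\poly(N)$ time.
\item \emph{Locality and sparsity.} These are $\le r(2n)^2$ and $(8n)^{r^2}$. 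Choosing $t=\lceil 2r/\epsilon\rceil$ makes both $\le n^{tr\epsilon}\le N^\epsilon$ for $n$ large.
\end{itemize}

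\textbf{Main obstacle.} We expect the delicate step to be choosing the AG parameters consistently over a field of size independent of $n$. We need simultaneously: the product code $C_L((r-1)G)$ has distance $>n$ on length $2n$; its dimension is $\le n$; the base code $C$ still has constant rate after subtracting the genus; and the half-length punctured codes have constant relative decoding radius. We would first verify the inequality $a-g+1\ge cn$ with $a\approx n/(2(r-1))$ and $g\le 2n/(\sqrt q-1)$. This forces $\sqrt q\gtrsim 8(r-1)$, hence $u=O_r(1)$. Finally, the admissible block lengths come from the tower, so they form an infinite sequence $N\to\infty$ rather than all $N$, which the statement allows.
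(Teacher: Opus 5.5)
Your proposal is correct and follows essentially the same route as the paper. Both arguments fix $t=t(r,\epsilon)$ and let $n\to\infty$, instantiate \Cref{thm:punctens} with constant-rate AG codes of length $2n$, and feed $\cF^{(1)}=\cdots=\cF^{(r-1)}=\cF$ together with an $r$-th system containing the $(r-1)$-fold products into \Cref{thm:highyield}, taking $M$ to be a maximal extendable set so that the yield is $k^{tr}=\Omega(N)$ and both distance and decoding radius are $\Omega(n^t)=\Omega(N^{1/r})$.

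The differences are cosmetic. The paper cites the packaged \Cref{lem:AG}, uses $\cF^{*r-1}$ (i.e.\ $C^{*(r-1)}$ itself) as the $r$-th system, and takes $t=\lceil 2r^2/\epsilon\rceil$. You build the codes from the Garcia--Stichtenoth tower, use $C_L((r-1)G)\supseteq C^{*(r-1)}$, and take the smaller but still sufficient $t=\lceil 2r/\epsilon\rceil$. One small omission: you also need half-length decoders for the punctured $C_L((r-1)G)$, and these follow exactly as they do for $C$.
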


The alphabet sizes, as well as the sparsities of the transversal gates, in \Cref{cor:instconst,cor:instlog,cor:instpoly} above are large constants or slowly-growing functions of the block length. However, these quantities can be reduced to small constants using known techniques, with just a small degradation in other parameters. Specifically, \cite[Lemma~3.44]{golowich_quantum_2025-1} shows that the alphabet size can be reduced from $q=p^u$ down to an arbitrary power of $p$ (including down to $p$ itself) with just a multiplicative $u^{O_r(1)}$ loss in other parameters. Similarly, \cite[Lemma~3.45]{golowich_quantum_2025-1} shows that the transversal gate sparsity can be reduced from $w$ to $1$, with just a multiplicative $w^{O(1)}$ loss in other parameters. We omit the details on these transformations to avoid redundancy with \cite{golowich_quantum_2025-1}, and rather refer the reader to the statements there.

While we use the classical Tanner codes from \Cref{thm:punctens} to prove all three corollaries above, we remark that \Cref{cor:instlog} can be recovered by instead using the classical codes in \cite[Section~4]{golowich_quantum_2025-1}. Also, a result similar to \Cref{cor:instpoly} can be recovered  by instead using the classical codes in \cite{kopparty_algebraic_2026}, though the codes of \cite{kopparty_algebraic_2026}
have alphabet size growing with the block length, and hence would require alphabet reduction to recover the constant alphabet size in \Cref{cor:instpoly}.
Indeed, both \cite[Section~4]{golowich_quantum_2025-1} and \cite{kopparty_algebraic_2026} provide classical Tanner codes on carefully designed graphs to ensure high rate when using local Reed-Solomon codes with a multiplication property. These graphs have the additional advantage of being good spectral expanders, which may have benefits such as for single-shot error-correction (see e.g.~\cite{fawzi_constant_2018,gu_single-shot_2024}). In contrast, while we do not show any spectral expansion bound for our graphs in \Cref{thm:punctens}, our Tanner codes are arguably simpler to construct. Furthermore, the constructions of \cite[Section~4]{golowich_quantum_2025-1} and \cite{kopparty_algebraic_2026} require graphs of super-constant degree, so they cannot be used to obtain codes of constant locality (i.e.~LDPC) as in \Cref{cor:instconst}.

\subsection{Algebraic Classical Codes}
\label{sec:algcodes}
In this section, we state the parameters of some well-known families of (non-LDPC) classical codes with a multiplication property. We will instantiate the Tanner codes in \Cref{thm:punctens} with these codes.

We first state the parameters of Reed-Solomon codes in \Cref{lem:RS} below, whose codewords consist of evaluations on $2n$ points of degree $<k$ polynomials over a finite field $\bF_q$. These codes satisfy a multiplication property because the product of $r$ polynomials of degree $<k$ is a polynomial of degree $<r(k-1)+1$. For details on Reed-Solomon codes, see for instance \cite[Chapter~5]{guruswami_essential_2022}.

\begin{lemma}[Reed-Solomon codes; well known, see e.g.~\cite{guruswami_essential_2022}]
  \label{lem:RS}
  For every integer $r\geq 2$, every prime power $q$, and every choice of integers $n\leq q/2$ and $k<2n/r+1$, there exists a $[2n,k,2n-k+1]_q$ classical code $C$, such that $C^{*r}$ is a $[2n,r(k-1)+1, 2n-r(k-1)]_q$ code.
\end{lemma}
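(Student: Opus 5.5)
The plan is to take $C$ to be the Reed--Solomon code of evaluations of polynomials of degree $<k$ at $2n$ distinct points of $\bF_q$. Such points exist because $2n\leq q$. Concretely, I fix distinct $\alpha_1,\dots,\alpha_{2n}\in\bF_q$ and set
\begin{equation*}
  C=\{(g(\alpha_1),\dots,g(\alpha_{2n})):\deg g<k\}.
\end{equation*}

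For the base code, I argue as follows. The evaluation map is linear. It is injective because a nonzero polynomial of degree $<k\leq 2n$ has fewer than $k$ roots. Note that $k<2n/r+1\leq 2n$ for $r\geq 2$, so $k\leq 2n$ holds. Hence $\dim C=k$. By the same root count, a nonzero codeword has at most $k-1$ zeros, so it has weight $\geq 2n-k+1$. The Singleton bound gives equality, or I can exhibit $\prod_{j<k}(x-\alpha_j)$, suitably adjusted to degree $k-1$, to attain it.

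For the $r$-fold product, the key step is to identify $C^{*r}$ with the evaluation code of polynomials of degree $\leq r(k-1)$. First, a componentwise product of $r$ evaluation vectors is the evaluation of the product polynomial, which has degree $\leq r(k-1)$. This gives the inclusion $\subseteq$. For the reverse inclusion, every monomial $x^m$ with $m\leq r(k-1)$ can be written as $x^{m_1}\cdots x^{m_r}$ with each $m_j\leq k-1$, by distributing $m$ greedily. Hence the span of products contains every evaluation of a polynomial of degree $\leq r(k-1)$.

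It remains to check that this evaluation code has the claimed parameters, which needs $r(k-1)+1\leq 2n$ so that evaluation stays injective. From $k<2n/r+1$ we get $r(k-1)<2n$, so $r(k-1)+1\leq 2n$. Then the same root-counting argument gives dimension $r(k-1)+1$ and distance $2n-r(k-1)$.

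There is no real obstacle here. The only step requiring care is the hypothesis $k<2n/r+1$, which ensures the degree bound of the product code stays below the length $2n$. Without it the evaluation map could fail to be injective, and both the dimension and the distance formulas for $C^{*r}$ would break.
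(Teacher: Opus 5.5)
Your proof is correct and follows the route the paper indicates. The paper gives no formal proof; it only remarks that $C$ is the Reed--Solomon evaluation code of polynomials of degree $<k$ on $2n$ points, and that a product of $r$ such polynomials has degree at most $r(k-1)$. Your monomial-splitting argument showing that $C^{*r}$ contains all evaluations of polynomials of degree at most $r(k-1)$, the use of $k<2n/r+1$ to get $r(k-1)+1\leq 2n$, and the Singleton bound for exact distances supply precisely the details the paper leaves to the reference. The only implicit assumption is $k\geq 1$, which the statement itself also needs, since otherwise the claimed dimension $r(k-1)+1$ is nonpositive.
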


We state the parameters of algebraic geometry (AG) codes in \Cref{lem:AG} below. These codes reduce the alphabet size of Reed-Solomon codes to a constant, at the cost of a slightly worse rate-distance tradeoff. The statement below for instance follows directly from \cite[Theorem~21, \& Section~4.2]{couvreur_algebraic_2021} (rate and distance bounds), \cite[Equation~(17)]{couvreur_algebraic_2021} (multiplication property, i.e.~expression for $C^{*r-1}$), and \cite[Theorem~75]{couvreur_algebraic_2021} (polynomial-time decoder).

\begin{lemma}[Algebraic geometry codes \cite{garcia_tower_1995,skorobogatov_decoding_1990}]
  \label{lem:AG}
  Fix an integer $r\geq 2$ and a prime $p$. Then for infinitely many $n\in\bN$, there exist codes $C$ with parameters $[2n,k,d]_q$ such that $k\geq n/4r$, $d\geq 3n/2$, $q=p^u$ for some integer $u=O_r(1)$, and such that $C^{*r-1}$ has distance $d'\geq 3n/2$. Furthermore, all four restricted codes
  \begin{equation}
    \label{eq:agres}
    C|_{[1:n]},\; C|_{[n+1:2n]},\; C^{*r-1}|_{[1:n]},\; C^{*r-1}|_{[n+1:2n]}
  \end{equation}
  have radius-$n/4$ time-$\poly(n)$ decoders.
\end{lemma}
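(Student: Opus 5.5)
The plan is to derive this lemma from the Garcia--Stichtenoth tower of function fields over $\bF_q$ with $q=p^u$ a square. Take $u$ even and large enough that $\sqrt{q}-1$ beats a threshold depending only on $r$. On a curve $X$ in this tower, with $N\geq 2n$ rational points and genus $g\leq N/(\sqrt{q}-1)$, fix $2n$ rational points $P_1,\dots,P_{2n}$. Choose a divisor $G$ of degree $m$ with support disjoint from the $P_j$, and set $C=C_L(X,\{P_j\},G)$.

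The key steps are as follows.
\begin{enumerate}
\item \emph{Multiplication property.} We have $L(G)^{r-1}\subseteq L((r-1)G)$. Hence $C^{*r-1}\subseteq C_L(X,\{P_j\},(r-1)G)$ (this is \cite[Equation~(17)]{couvreur_algebraic_2021}).
\item \emph{Distance.} The distance of $C^{*r-1}$ is at least that of the larger evaluation code, which is $\geq 2n-(r-1)m$. Taking $m=n/(2(r-1))$ gives $d'\geq 3n/2$, and also $d\geq 2n-m\geq 3n/2$.
\item \emph{Dimension.} Riemann--Roch gives $k\geq m-g+1$. To get $k\geq n/4r$, it suffices that $g\leq m-n/4r$. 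Since $m=n/(2(r-1))\geq n/(2r)$, it is enough to have $g\leq n/(4r)$.
\item \emph{Choosing the parameters.} Take $n$ proportional to $N$, say $2n\leq N$ with $n$ close to $N/2$. Then $g\leq N/(\sqrt{q}-1)\leq 2.1n/(\sqrt{q}-1)$, which is at most $n/4r$ once $\sqrt{q}\geq 10r$. So $u=O_r(1)$ suffices. Infinitely many $n$ arise from the infinitely many curves in the tower.
\end{enumerate}

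\emph{Decoders.} Each of the four restricted codes in \Cref{eq:agres} is itself contained in an AG code on $n$ points: we simply evaluate at $P_1,\dots,P_n$ or at $P_{n+1},\dots,P_{2n}$, using $L(G)$ or $L((r-1)G)$. For the larger code this subcode has designed distance $\geq n-(r-1)m=3n/4$. A polynomial-time AG decoder correcting up to $(d^*-1)/2-g/2$ errors applies here, such as the Skorobogatov--Vl\u{a}du\c{t} algorithm or the Pellikaan/Feng--Rao error-locating-pair decoder of \cite[Theorem~75]{couvreur_algebraic_2021}. Decoding the ambient AG code and then checking membership in $C|_{[1:n]}$ (respectively $C^{*r-1}$) suffices. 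This decodes up to $3n/8-g/2-1/2\geq n/4$ errors, since $g\leq n/4r$ and $r\geq 2$ make $g/2$ at most $n/16$. Note that decoding the ambient code corrects correctly whenever the true codeword lies in the subcode.

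\emph{Expected obstacle.} I expect the main obstacle to be simultaneously accommodating the genus loss in the dimension bound and in the decoding radius with a single constant $u$. I would handle this by fixing $\sqrt{q}\geq 10r$ up front, as above. I would also check that the tower provides enough rational points disjoint from the support of $G$. If necessary, I would take $G$ to be a multiple of one extra rational point, or a place of degree coprime to handle existence.
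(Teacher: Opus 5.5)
Your overall route (Garcia--Stichtenoth curves, $C=C_L(X,\{P_j\},G)$, the inclusion $L(G)^{r-1}\subseteq L((r-1)G)$, Riemann--Roch, and a standard AG decoder) is the one the paper points to through its citations. Your multiplication, distance, and dimension steps are fine. The decoder step, however, contains an arithmetic error that breaks it.

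With $m=n/(2(r-1))$ you have $(r-1)m=n/2$. So the length-$n$ ambient code $C_L(X,\{P_1,\dots,P_n\},(r-1)G)$ containing $C^{*r-1}|_{[1:n]}$ has designed distance $n-(r-1)m=n/2$, not $3n/4$. The same holds for $[n+1:2n]$. Your decoder, which corrects up to $(d^*-1)/2-g/2$ errors, then reaches only about $n/4-1/2-g/2$. This falls short of the required radius $n/4$ by roughly $g/2=\Theta(n/\sqrt{q})$, a deficit linear in $n$. Enlarging $q$ cannot fix this. With $\deg((r-1)G)=n/2$, the target radius $n/4$ is exactly half the designed distance, so there is zero slack for any genus term. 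So your plan of fixing $\sqrt{q}\ge 10r$ up front does not resolve the obstacle you anticipated. For $r=2$ the same failure also hits $C|_{[1:n]}$, since then $m=n/2$.

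There are two easy repairs.
\begin{enumerate}
\item Use a decoder that reaches half the designed distance with no genus loss, such as Feng--Rao majority voting and its extensions to arbitrary divisors (Duursma, Ehrhard). At designed distance $n/2$ this gives exactly radius $n/4$, which is what the lemma's tight pairing of $d'\ge 3n/2$ with radius $n/4$ calls for.
\item Create slack by shrinking the divisor, e.g.\ $m=\lfloor n/(4(r-1))\rfloor$. Then $d,d'\ge 7n/4$, and all four restricted codes have designed distance $\ge 3n/4$. Your $g/2$-loss decoder then reaches $3n/8-1/2-g/2\ge n/4$. The dimension bound $k\ge m-g+1\ge n/(4r)$ now requires $g\le n/(4r(r-1))$, i.e.\ $\sqrt{q}=\Omega(r^2)$ rather than $\Omega(r)$. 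This still gives $u=O_r(1)$.
\end{enumerate}
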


Note that in \Cref{lem:AG}, the restriction (i.e.~puncturing) of an AG code to a subset of coordinates is still an AG code, so known decoders (e.g.~\cite{skorobogatov_decoding_1990}) can be applied.





\subsection{Tanner Code Instantiations}
\label{sec:taninst}
We now apply the codes in \Cref{sec:algcodes} to prove the corollaries stated at the start of \Cref{sec:inst}.

\begin{proof}[Proof of \Cref{cor:instconst}]
  By \Cref{lem:AG}, for infinitely many $n\in\bN$ there exist codes $C$ with parameters $[2n,k,d]_q$ such that $k\geq n/4r$, $d\geq 3n/2$, $q=p^u$ for some integer $u=O_r(1)$, and such that $C^{*r-1}$ has distance $d'\geq 3n/2$. Furthermore, all four restricted codes in \Cref{eq:agres} have radius-$n/4$ time-$\poly(n)$ decoders. We fix some $n=n(r,\epsilon)$ sufficiently large such that $k\geq n/4r\geq n^{1-\epsilon/2}$, so that $n/4$ and $d'-n,\;d-n\geq n/2$ are also at least $n^{1-\epsilon/2}$.

  Then for $t\in\bN$, let $\Gamma=\Gamma(n,t)$ and $\cF=\cF(C_1=C,\dots,C_t=C)$ be defined as in \Cref{thm:punctens}. We now think of $r,\epsilon$ and $n=n(r,\epsilon)$ as fixed while $t\rightarrow\infty$. Then by \Cref{thm:punctens}, $\Tan(\Gamma,\cF)$ has length $N_0=(t-1)n^t$ with $t=\Theta_{r,\epsilon}(\log N_0)$, and has dimension $K_0=k^t\geq n^{(1-\epsilon/2)t}$ and distance $D_0\geq(d-n)^t\geq n^{(1-\epsilon/2)t}$, such that $\Tan(\Gamma,\cF^{*r-1})$ also has distance $D_0'\geq(d'-n)^t\geq n^{(1-\epsilon/2)t}$. \Cref{thm:punctens} also implies that $\Tan(\Gamma,\cF)$ and $\Tan(\Gamma,\cF^{*r-1})$ have radius-$\rho_0=(n/4)^t\geq n^{(1-\epsilon/2)t}$ time-$\poly(n^t)=\poly(N_0)$ decoders. Here by definition $n^{(1-\epsilon/2)t}\geq\Omega_{r,\epsilon}(N_0/\log N_0)^{1-\epsilon/2}$.

  We now apply \Cref{thm:highyield} with $\Gamma$ and with $\cF^{(1)}=\cdots=\cF^{(r-1)}=\cF$ and $\cF^{(r)}=\cF^{*r-1}$. We let $M$ be any maximal extendable set of $\Tan(\Gamma,\cF)$, so that $|M|=K_0$. Assuming $t$ is sufficiently large, then by \Cref{thm:highyield}, the resulting subsystem codes $(Q^{(i)},L^{(i)})$ have length $N=\Theta_r(N_0)^r$, distance
  \begin{equation*}
    D \geq \min\{(D_0/n)^{r-1},D_0'\} \geq N^{(1-\epsilon)/r},
  \end{equation*}
  locality $O_{r,\epsilon}(1)$, and support transversal $C^{r-1}Z$ gates of sparsity $O_{r,\epsilon}(1)$ with logical yield at least
  \begin{equation*}
    |M|^r = K_0^r \geq N^{1-\epsilon}.
  \end{equation*}
  The codes' dimension is at least the logical yield $K\geq|M|^r\geq N^{1-\epsilon}$. \Cref{thm:highyield} also implies that each $(Q^{(i)},L^{(i)})$ has a radius-$\rho=\min\{(\rho_0/2n)^{r-1},\rho_0\}\geq N^{(1-\epsilon)/r}$ time-$\poly(N)$ decoder.
\end{proof}

\begin{proof}[Proof of \Cref{cor:instlog}]
  By \Cref{lem:RS}, for every $n\in\bN$ there exists a code $C$ with parameters $[2n,k,d]_q$ such that $k=\lceil n^{1-\epsilon/2}\rceil$, $d=2n-(k-1)$, $q=p^u$ where $u=\lceil\log_p(2n)\rceil$, and such that $C^{*r-1}$ has distance $d'=2n-(r-1)(k-1)$. We let $t=t(\epsilon,n)=\lceil n^{\epsilon/4}\rceil$, and we think of $n\rightarrow\infty$ as $r,\epsilon$ remain fixed.

  Then let $\Gamma=\Gamma(n,t)$ and $\cF=\cF(C_1=C,\dots,C_t=C)$ be defined as in \Cref{thm:punctens}. By \Cref{thm:punctens}, $\Tan(\Gamma,\cF)$ has length $N_0=(t-1)n^t$ with $t,n=(\log N_0)^{\Theta_\epsilon(1)}$, dimension
  \begin{equation*}
    K_0 = k^t \geq n^{(1-\epsilon/2)t} \geq N_0^{1-\epsilon/2}/(\log N_0)^{\Theta_\epsilon(1)},
  \end{equation*}
  and distance
  \begin{equation*}
    D_0 \geq (d-n)^t \geq (n-n^{1-\epsilon/2})^t \geq n^t\cdot(1-n^{-\epsilon/2})^{n^{\epsilon/4}} = \Theta(n^t) \geq N_0/(\log N_0)^{\Theta_{r,\epsilon}(1)},
  \end{equation*}
  such that $\Tan(\Gamma,\cF^{*r-1})$ has distance satisfying the same bound $D_0'\geq(d'-n)^t\geq N_0/(\log N_0)^{\Theta_{r,\epsilon}(1)}$.

  We now apply \Cref{thm:highyield} with $\Gamma$ and with $\cF^{(1)}=\cdots=\cF^{(r-1)}=\cF$ and $\cF^{(r)}=\cF^{*r-1}$. We let $M$ be any maximal extendable set of $\Tan(\Gamma,\cF)$, so that $|M|=K_0$. Assuming $n$ is sufficiently large, then by \Cref{thm:highyield}, the resulting subsystem codes $(Q^{(i)},L^{(i)})$ have length $N=\Theta_r(N_0)^r/n^{O_r(1)}=\Theta_r(N_0)^r/(\log N_0)^{O_{r,\epsilon}(1)}$ (where here we divide $\Theta_r(N_0)^r$ by $n^{O_r(1)}$ to account for the fact that $|C^0(\Gamma,\cF^{(i)})|$ may be smaller than $N_0=|C^1(\Gamma,\cF^{(i)})|$ by a factor of $O(n^{\epsilon/2})$, as the local code $C$ has dimension $\lceil n^{1-\epsilon/2}\rceil$), distance
  \begin{equation*}
    D \geq \min\{(D_0/n)^{r-1},D_0'\} \geq N^{1/r}/(\log N)^{\Theta_{r,\epsilon}(1)},
  \end{equation*}
  locality $\leq r(2n)^2\leq(\log N)^{\Theta_{r,\epsilon}(1)}$, and support transversal $C^{r-1}Z$ gates of sparsity $\leq(8n)^{r^2}\leq(\log N)^{\Theta_{r,\epsilon}(1)}$ with logical yield at least
  \begin{equation*}
    |M|^r = K_0^r \geq N^{1-\epsilon}.
  \end{equation*}
  The codes' dimension is at least the logical yield $K\geq|M|^r\geq N^{1-\epsilon}$.

  Because $n=(\log N_0)^{\Theta_\epsilon(1)}=(\log N)^{\Theta_{r,\epsilon}(1)}$, our codes have alphabet size $q=p^u$ for
  \begin{equation*}
    u=\lceil\log_p(2n)\rceil = \Theta_{r,\epsilon}(\log_p(\log N)).
  \end{equation*}
\end{proof}

\begin{proof}[Proof of \Cref{cor:instpoly}]
  By \Cref{lem:AG}, for infinitely many $n\in\bN$ there exist codes $C$ with parameters $[2n,k,d]_q$ such that $k\geq n/4r$, $d\geq 3n/2$, $q=p^u$ for some integer $u=O_r(1)$, and such that $C^{*r-1}$ has distance $d'\geq 3n/2$. Furthermore, all four restricted codes in \Cref{eq:agres} have radius-$n/4$ time-$\poly(n)$ decoders. We fix $t=t(r,\epsilon)=\lceil 2r^2/\epsilon\rceil$, and think of $n\rightarrow\infty$ as $r,\epsilon,t$ remain fixed.

  Then let $\Gamma=\Gamma(n,t)$ and $\cF=\cF(C_1=C,\dots,C_t=C)$ be defined as in \Cref{thm:punctens}. By \Cref{thm:punctens}, $\Tan(\Gamma,\cF)$ has length $N_0=(t-1)n^t=\Theta_{r,\epsilon}(n^t)$, dimension $K_0=k^t\geq\Omega_{r,\epsilon}(N_0)$, and distance $D_0\geq(d-n)^t\geq\Omega_{r,\epsilon}(N_0)$, such that $\Tan(\Gamma,\cF^{*r-1})$ also has distance $D_0'\geq(d'-n)^t\geq\Omega_{r,\epsilon}(N_0)$. \Cref{thm:punctens} also implies that $\Tan(\Gamma,\cF)$ and $\Tan(\Gamma,\cF^{*r-1})$ have radius-$\rho_0=(n/4)^t\geq\Omega_{r,\epsilon}(N_0)$ time-$\poly(n^t)=\poly(N_0)$ decoders.

  We now apply \Cref{thm:highyield} with $\Gamma$ and with $\cF^{(1)}=\cdots=\cF^{(r-1)}=\cF$ and $\cF^{(r)}=\cF^{*r-1}$. We let $M$ be any maximal extendable set of $\Tan(\Gamma,\cF)$, so that $|M|=K_0$. Assuming $n$ is sufficiently large, and recalling that
  \begin{equation*}
    N = \Theta_{r,\epsilon}(N_0^r) = \Theta_{r,\epsilon}(n^{rt}) \geq \Omega_{r,\epsilon}(n^{2r^3/\epsilon}),
  \end{equation*}
  then by \Cref{thm:highyield}, the resulting subsystem codes $(Q^{(i)},L^{(i)})$ have length $N$, distance
  \begin{equation*}
    D \geq \min\{(D_0/n)^{r-1},D_0'\} \geq \Omega_{r,\epsilon}(N^{1/r}),
  \end{equation*}
  locality $\leq r(2n)^2\leq N^\epsilon$, and support transversal $C^{r-1}Z$ gates of sparsity $\leq(8n)^{r^2}\leq N^\epsilon$ with logical yield at least
  \begin{equation*}
    |M|^r = K_0^r \geq \Omega_{r,\epsilon}(N).
  \end{equation*}
  The codes' dimension is at least the logical yield $K\geq|M|^r\geq\Omega_{r,\epsilon}(N)$. \Cref{thm:highyield} also implies that each $(Q^{(i)},L^{(i)})$ has a radius-$\rho=\min\{(\rho_0/2n)^{r-1},\rho_0\}\geq \Omega_{r,\epsilon}(N^{1/r})$ time-$\poly(N)$ decoder.
\end{proof}

\section{Acknowledgments}
L.G.~acknowledges support from ONR grant N00014-24-1-2491, a UC Noyce initiative award, a Google PhD Fellowship, and a National Science Foundation Graduate Research Fellowship under Grant No.~DGE 2146752. G.Z. is supported by the U.S. Department of Energy, Office of Science, National Quantum Information Science Research Centers, Co-design Center for Quantum Advantage (C2QA) under contract number DE-SC0012704.

\textbf{AI disclosure:} All proofs and writing are the work of the human authors. ChatGPT Pro was used to check the draft for typos and minor inconsistencies, and to find references.

\bibliographystyle{alpha}
\bibliography{library}

\appendix

\section{Addressable Gates}
\label{sec:addressable}
In this section, we describe how to render our transversal $C^{r-1}Z$ gates \emph{addressable}. That is, in \Cref{sec:qcodes} we defined a transversal $C^{r-1}Z$ gate with logical yield $s$ to be a (low-depth) physical circuit that induces logical $C^{r-1}Z$ gates on some $s$ disjoint $r$-tuples of qudits under some choice of logical basis. However, if we fix the code and logical basis, we could ask for different low-depth physical circuits implementing logical $C^{r-1}Z$ gates on different choices of qudits. Such flexibility in the choice of logical action is referred to as \emph{addressability} (see e.g.~\cite{he_quantum_2025,he_asymptotically_2025}). We will specifically show how to extend \Cref{thm:highyield}, and as a consequence \Cref{cor:instconst,cor:instlog,cor:instpoly}, to obtain such addressability.

A typical form of addressability is the ability to perform $C^{r-1}Z$ gates on arbitrary subsets of a pre-specified collection of $r$-tuples of logical qudits (with a pre-specified choice of basis). This form of addressability is immediately implied by the ability to perform $C^rZ$ gates. Specifically, we can partition qudits into $(r+1)$-tuples, and place the first qudit in each tuple in a computational basis ``control'' state. A control value of $\ket{a}$ ensures that applying $C^rZ$ to the tuple induces the action of $C^{r-1}Z^a$ on the remaining $r$ qudits in the tuple.

In our setting of transversal gates across code blocks, this idea allows us to use transversal $C^rZ$ across $r+1$ code blocks to implement addressable $C^{r-1}Z$ across $r$ code blocks. Indeed, we simply use the first code block as a control block. This block's logical state is always a computational basis state, and can be modified by applying (depth-$1$) logical Pauli-$X$ operators. Applying our transversal $C^rZ$ gate then induces addressable $C^{r-1}Z$ on the remaining $r$ code blocks.

One downside of this approach is that our main constructions of codes supporting transversal $C^{r-1}Z$ all have distance at most $O(N^{1/r})$, where $N$ denotes the block length (\Cref{cor:instconst,cor:instlog,cor:instpoly}). Therefore addressable $C^{r-1}Z$ via transversal $C^rZ$ would require codes of distance $O(N^{1/(r+1)})$, which is lower than the distance $O(N^{1/r})$ of our codes with non-addressable transversal $C^{r-1}Z$. Below, we describe how to modify our construction to increase the distance back to $O(N^{1/r})$ while retaining addressable $C^{r-1}Z$. We begin with a formal definition of addressable gates.

\begin{definition}
  \label{def:addressable}
  We say subsystem codes $(Q^{(i)},L^{(i)})$ over $\bF_q$ for $i\in[r]$ support a \emph{addressable $C^{r-1}Z$ gates of sparsity $w$ and logical yield $s$} if for every $i\in[r]$ there exist $z^{(i)}_1,\dots,z^{(i)}_s\in L^{(i)}_Z$ with the following property: for every $a\in\bF_q^s$, there exists a cohomology-invariant form $f_a$ (see \Cref{def:cczphys,def:cohinv}) of sparsity $w$ for which
  \begin{equation*}
    \tilde{f}_a(z^{(1)}_{\ell_1},\dots,z^{(r)}_{\ell_r}) = a_{\ell_1}\cdot\1_{\ell_1=\cdots=\ell_r} \hspace{1em} \forall \ell_1,\dots,\ell_r\in[s].
  \end{equation*}
\end{definition}

Similarly as described for (non-addressable) transversal gates in \Cref{sec:qcodes}, an addressable $C^{r-1}Z$ gate of sparsity $w$ and logical yield $s$ in the sense of \Cref{def:addressable} provides, for every $a\in\bF_q^s$, a physical circuit $C^{r-1}Z^{f_a}$ of depth $rw$ (\Cref{fact:sparsitytodepth}) that implements logical $C^{r-1}Z^a=\bigotimes_{\ell=1}^sC^{r-1}Z^{a_\ell}$ on the $s$ tuples of logical qudits respectively specified by $(z^{(1)}_\ell,\dots,z^{(r)}_\ell)$ for $\ell\in[s]$.

We now provide our extension of \Cref{thm:highyield} to addressable gates:

\begin{theorem}
  \label{thm:hyaddress}
  Define all variables as in \Cref{sec:hystatement}. Assume that we have an additional system of local codes $\cF^{(0)}$ on $\Gamma$ such that $M\subseteq E(\Gamma)$ is also an extendable set for $\Tan(\Gamma,\cF^{(0)})$, and such that in place of \Cref{eq:hymult} we have
  \begin{equation}
    \label{eq:hyamult}
    \im(\cF^{(0)}_{v\rightarrow E(v)})*\im(\cF^{(1)}_{v\rightarrow E(v)})*\cdots*\im(\cF^{(r-1)}_{v\rightarrow E(v)})\subseteq\im(\cF^{(r)}_{v\rightarrow E(v)}).
  \end{equation}
  Then the quantum codes $Q^{(i)}$ at level $1$ of $\cA^{(i)}$ for $i\in[r]$ have subsystems $L^{(i)}$ that support addressable $C^{r-1}Z$ gates of sparsity $\leq(4\Delta)^{r^2}$ and logical yield $\geq|M|^r$. Furthermore, if for every $j\in[r]$ the classical code $\Tan(\Gamma,\cF^{(j)})$ has distance $d^{(j)}$ and supports a radius-$\rho^{(j)}$ time-$\poly(\Delta,|V(\Gamma)|)$ decoder, then $(Q^{(i)},L^{(i)})$ is a quantum subsystem code of locality $\leq r\Delta^2$, $X$-distance $d_X\geq d^{(r)}$, and $Z$-distance $d_Z\geq(2/\Delta)^{r-1}\cdot d^{(1)}d^{(2)}\cdots d^{(r-1)}$, which supports a radius-$\min\{\rho^{(1)}\cdots\rho^{(r-1)}/\Delta^{r-1},\;\rho^{(r)}\}$ time-$\poly(\Delta,|V(\Gamma)|)^r$ decoder.
\end{theorem}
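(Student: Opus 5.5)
Most of the proof is inherited from the proof of \Cref{thm:highyield}. The complexes $\cA^{(i)}$, subsystems $L^{(i)}$, distances, locality and decoders do not involve $\cF^{(0)}$, so \Cref{claim:hydis,claim:hydecoder} apply verbatim. The one thing to check is that \Cref{eq:hyamult} implies \Cref{eq:hymult}, which is needed whenever the earlier construction is reused. This holds provided the local codes $\im(\cF^{(0)}_{v\rightarrow E(v)})$ contain $\1_{E(v)}$. If they do not, I would not rely on \Cref{eq:hymult} and would instead carry the $\cF^{(0)}$ factor along, as follows.

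The new ingredient is the family of forms $f_a$. Since $M$ is extendable for $\Tan(\Gamma,\cF^{(0)})$, for every $a\in\bF^{M^r}$ (indexed by tuples $\vec e=(e_1,\dots,e_r)\in M^r$) I would pick codewords $g_{j}\in\Tan(\Gamma,\cF^{(0)})$ with prescribed values on $M$, and set $g=\sum g_{\vec e}$ with tensor structure. For a product-type control I would define $g^{\vec e}=c^{(0)}_{e_1}\otimes\cdots\otimes c^{(0)}_{e_r}$, where $c^{(0)}_e|_M=\1_e$. Let $z_a=\sum_{\vec e}a_{\vec e}\,\zeta^{(0)}_{e_1}\otimes\cdots\otimes\zeta^{(0)}_{e_r}\in Z^0(\Gamma,\cF^{(0)})^{\otimes r}$, with $\zeta^{(0)}_e$ the $0$-cocycle corresponding to $c^{(0)}_e$ under \Cref{fact:tantocc}. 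Then define
\begin{equation*}
  f_a(a^{(1)},\dots,a^{(r)}) = \langle\1_{E(\Gamma)}^{\otimes r},\, z_a\smile a^{(1)}\smile\cdots\smile a^{(r)}\rangle ,
\end{equation*}
where the cup products are built from \Cref{def:tancup} on each factor (now with $\cF^{(0)}*\cF^{(1)}*\cdots$) and \Cref{lem:cuptensor}. Here $z_a$ lies in the level-$0$ cocycles of $\cC^*(\Gamma,\cF^{(0)})^{\otimes r}$. The target complex is $\cC'^{\otimes r}$ with $\cC'=\cC^*(\Gamma,\cF^{(0)}*\cF^{(1)}*\cdots*\cF^{(r-1)}*{\cF^{(r)}}^\perp)$. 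By \Cref{eq:hyamult} each local code of this product has entries summing to zero, so $\1_{E(\Gamma)}\in Z_1(\cC')$, exactly as in \Cref{sec:hyproof}.

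Cohomology invariance then follows from \Cref{fact:cupcohom}, since $z_a$ is a cocycle and the Leibniz rule gives $\delta(z_a)=0$. Perturbing any $a^{(i)}$ by a coboundary changes the total cup product by a coboundary, which pairs to zero with the cycle $\1^{\otimes r}$.

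For the evaluation, take the same $z^{(i)}_{e^{(i)}}$ as in \Cref{claim:mlsubrank}. The factorwise computation multiplies in $c^{(0)}_{e_j}$. On the $j$-th factor the product is $c^{(0)}_{e'_j}*\1_{e_j}$ when all $e^{(i)}_j=e_j$, and zero otherwise. This equals $\1_{e_j}$ when $e'_j=e_j$, because $c^{(0)}_{e'}|_M=\1_{e'}$. Hence $\tilde f_a(z^{(1)}_{\vec e},\dots,z^{(r)}_{\vec e'})=a_{\vec e}\1_{\text{all equal}}$, as in \Cref{def:addressable}. The sign check is unchanged because the control factors sit at level $0$.

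I expect sparsity to be the main obstacle. $z_a$ may be a dense $0$-cochain, but it is a fixed coefficient, not a qudit. Each gate's coefficient $f_a(\1_{c^{(1)}},\dots,\1_{c^{(r)}})$ is nonzero only if, on each factor, the $r$ basis elements are adjacent to a common edge; $z_a$ is only read at the endpoints of that edge. The counting in \Cref{claim:spbound} therefore goes through unchanged and gives $(4\Delta)^{r^2}$. I would double-check that the local-product definition in \Cref{def:tancup} is consistent when the control is a sum of pure tensors, using bilinearity.
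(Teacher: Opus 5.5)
Your proposal is correct and is essentially the paper's argument in dual form. The paper pairs $b^{(1)}\smile\cdots\smile b^{(r)}$ with a cycle $z_a\in Z_r(\cA')$ restricting to $a$ on $M^r$, where $\cA'$ and $\cC'$ omit $\cF^{(0)}$; such a cycle exists because \Cref{eq:hyamult} gives $\Tan(\Gamma,\cF^{(0)})\subseteq Z_1(\cC')$. You instead cup in a level-$0$ control from $Z^0(\Gamma,\cF^{(0)})^{\otimes r}$ and pair with the all-ones cycle. The two gates coincide: by \Cref{def:tancup}, $\langle\1_{E(\Gamma)},\zeta\smile x\rangle=\langle\Phi(\zeta),x\rangle$ with $\Phi$ the map from the proof of \Cref{fact:tantocc}, so your form is the paper's for the particular cycle $\Phi^{\otimes r}(z_a)\in\Tan(\Gamma,\cF^{(0)})^{\otimes r}$.
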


The proof of \Cref{thm:hyaddress} is similar to that of \Cref{thm:highyield}, with just some minor modifications. We therefore simply describe the necessary changes:

\begin{proof}[Proof sketch of \Cref{thm:hyaddress}]
  The choice of subsystems $L^{(i)}$, and the associated distance and decoder analysis, are identical to those in \Cref{eq:subsysdef}, \Cref{claim:hydis}, \Cref{claim:hydecoder} respectively in the proof of \Cref{thm:highyield}. Therefore it remains to be shown that assuming \Cref{eq:hyamult}, the subsystem codes $(Q^{(i)},L^{(i)})$ support addressable $C^{r-1}Z$ gates of sparsity $\leq(4\Delta)^{r^2}$ and logical yield $\geq|M|^r$.

  For this purpose, for every $e^{(i)}=(e^{(i)}_1,\dots,e^{(i)}_r)\in M^r$, we define $z^{(i)}_{e^{(i)}}\in Z^1(\cA^{(i)})$ as in \Cref{eq:zidef}, so that the associated cohomology class $z^{(i)}_{e^{(i)}}+B^1(\cA^{(i)})$ lies in $L^{(i)}_Z$, as shown in the proof of \Cref{thm:highyield}.

  Our goal is then to show that for every $a\in\bF_q^{M^r}$, there exists a cohomology-invariant form $f_a$ of sparsity $\leq(4\Delta)^{r^2}$ for which
  \begin{equation}
    \label{eq:hyalogact}
    \tilde{f}_a(z^{(1)}_{e^{(1)}},\dots,z^{(r)}_{e^{(r)}}) = a_{e^{(1)}}\cdot\1_{e^{(1)}=\cdots=e^{(r)}} \hspace{1em} \forall e^{(1)},\dots,e^{(r)}\in M^r.
  \end{equation}

  To define $f_a$, we first observe that \Cref{eq:hyamult} implies that all codewords in $\im(\cF^{(0)}_{v\rightarrow E(v)})*\im(\cF^{(1)}_{v\rightarrow E(v)})*\cdots*\im(\cF^{(r-1)}_{v\rightarrow E(v)})*\im(\cF^{(r)}_{v\rightarrow E(v)})^\perp$ have entries summing to $0$, or equivalently, that
  \begin{equation*}
    \im(\cF^{(1)}_{v\rightarrow E(v)})*\cdots*\im(\cF^{(r-1)}_{v\rightarrow E(v)})*\im(\cF^{(r)}_{v\rightarrow E(v)})^\perp \subseteq \im(\cF^{(0)}_{v\rightarrow E(v)})^\perp.
  \end{equation*}
  Therefore because $M\subseteq E(\Gamma)$ is extendable for $\Tan(\Gamma,\cF^{(0)})=Z_1(\Gamma,{\cF^{(0)}}^\perp)$ (see \Cref{fact:tantocc}), then defining $\cC'=\cC^*(\Gamma,\cF^{(1)}*\cdots*\cF^{(r-1)}*{\cF^{(r)}}^\perp)$ as in \Cref{eq:hyCp}, it follows that $Z_1(\cC')\supseteq\Tan(\Gamma,\cF^{(0)})$, so $M$ is also extendable for $Z_1(\cC')$. Also defining $\cA'={\cC'}^{\otimes r}$ as in \Cref{eq:hyCp}, then it furthermore follows that $M^r\subseteq E(\Gamma)^r$ is extendable for $Z_1(\cC')^{\otimes r}=Z_r(\cA')$, where the final equality holds by the K\"{u}nneth formula. Thus there exists a cycle $z_a\in Z_r(\cA')$ for which the restriction
  \begin{equation}
    \label{eq:hyacycle}
    z_a|_{M^r}=a.
  \end{equation}
  We then define the multilinear form
  \begin{equation*}
    f_a:{\cA^{(1)}}^1\times\cdots\times{\cA^{(r)}}^1\rightarrow\bF
  \end{equation*}
  by
  \begin{equation}
    \label{eq:mldef2}
    f_a(b^{(1)},\dots,b^{(r)}) = \langle z_a,b^{(1)}\smile\cdots\smile b^{(r)}\rangle,
  \end{equation}
  where the cup product $b^{(1)}\smile\cdots\smile b^{(r)}\in{\cA'}^r$ in \Cref{eq:mldef2} is defined using \Cref{def:tancup,lem:cuptensor} as in the proof of \Cref{claim:hydecoder} in \Cref{thm:highyield}.

  Cohomology-invariance of $f_a$ follows by an analogous argument as used to show \Cref{claim:mlci}. \Cref{eq:hyalogact} then follows immediately from \Cref{eq:hycupform,eq:hyacycle}. Meanwhile, $f_a$ has sparsity $\leq(4\Delta)^{r^2}$ by the exact same proof as given in \Cref{claim:spbound} in the proof of \Cref{thm:highyield}. Thus the subsystem codes $(Q^{(i)},L^{(i)})$ support addressable $C^{r-1}Z$ gates of sparsity $\leq(4\Delta)^{r^2}$ and logical yield $\geq|M|^r$, as desired.
\end{proof}

\Cref{thm:hyaddress} provides quantum codes with addressable $C^{r-1}Z$ gates given as input classical Tanner codes satisfying the multiplication property in \Cref{eq:hyamult}, analogously to how \Cref{thm:highyield} provides quantum codes with (non-addressable) transversal $C^{r-1}Z$ gates given as input classical Tanner codes satisfying the multiplication property in \Cref{eq:hymult}. The only difference between these two multiplication properties is that \Cref{eq:hyamult} takes the component-wise product of $r$ factor codes, whereas \Cref{eq:hymult} only takes the product of $r-1$ factor codes. Our instantiations of \Cref{thm:highyield} in \Cref{cor:instconst,cor:instlog,cor:instpoly} (using \Cref{thm:punctens}) are based on the Reed-Solomon and algebraic-geometry codes in \Cref{lem:RS,lem:AG}, which satisfy multiplication properties for arbitrary constants $r$, up to changes in constant factors in the dimension and distance bounds. Hence \Cref{cor:instconst,cor:instlog,cor:instpoly} also hold for addressable gates (as opposed to simply transversal gates), with the same parameters up to changes in the constants hidden by the big-$O$s.

\end{document}